\begin{document}

\journal{Theoretical Computer Science}

\begin{frontmatter}
  
\title{Closure and Decision Properties \\ for Higher-Dimensional Automata}

\author[e]{Amazigh Amrane}
\author[e]{Hugo Bazille}
\author[e]{Uli Fahrenberg}
\author[w]{Krzysztof Ziemiański}


\address[e]{EPITA Research Laboratory (LRE), France}
\address[w]{University of Warsaw, Poland}


\begin{abstract}
  We report some further developments regarding the language theory of higher-dimensional automata (HDAs).
  Regular languages of HDAs are sets of finite interval partially ordered multisets (pomsets) with interfaces.
  We show a pumping lemma which allows us to expose a class of non-regular languages.
  Concerning decision and closure properties, we show that inclusion of regular languages is decidable (hence is emptiness),
  and that intersections of regular languages are again regular.
  On the other hand,
  complements of regular languages are  not always regular.
  We introduce a width-bounded complement
  and show that width-bounded complements of regular languages are again regular.

  We also study determinism and ambiguity. 
  We show that it is decidable whether a regular language is accepted by a deterministic HDA and that there exists regular languages with unbounded ambiguity.
  Finally, we characterize one-letter deterministic languages in terms of utlimately periodic functions.
\end{abstract}

\end{frontmatter}

\section{Introduction}
\label{s:intro}

\begin{figure}[bp]
  \centering
  \begin{tikzpicture}
    \begin{scope}[x=1.5cm, state/.style={shape=circle, draw,
        fill=white, initial text=, inner sep=1mm, minimum size=3mm}]
      \node[state, black] (10) at (0,0) {};
      \node[state, rectangle] (20) at (0,-1) {$\vphantom{b}a$};
      \node[state] (30) at (0,-2) {};
      \node[state, black] (11) at (1,0) {};
      \node[state, rectangle] (21) at (1,-1) {$b$};
      \node[state] (31) at (1,-2) {};
      \path (10) edge (20);
      \path (20) edge (30);
      \path (11) edge (21);
      \path (21) edge (31);
      \node[state, black] (m) at (.5,-1) {};
      \path (20) edge[out=15, in=165] (m);
      \path (m) edge[out=-165, in=-15] (20);
      \path (21) edge[out=165, in=15] (m);
      \path (m) edge[out=-15, in=-165] (21);
    \end{scope}
    \begin{scope}[xshift=4cm]
      \node[state] (00) at (0,0) {};
      \node[state] (10) at (-1,-1) {};
      \node[state] (01) at (1,-1) {};
      \node[state] (11) at (0,-2) {};
      \path (00) edge node[left] {$\vphantom{b}a$\,} (10);
      \path (00) edge node[right] {\,$b$} (01);
      \path (10) edge node[left] {$b$\,} (11);
      \path (01) edge node[right] {\,$\vphantom{b}a$} (11);
    \end{scope}
  \end{tikzpicture}
  \qquad\qquad
  \begin{tikzpicture}
    \begin{scope}[xshift=8cm]
      \path[fill=black!15] (0,0) to (-1,-1) to (0,-2) to (1,-1);
      \node[state] (00) at (0,0) {};
      \node[state] (10) at (-1,-1) {};
      \node[state] (01) at (1,-1) {};
      \node[state] (11) at (0,-2) {};
      \path (00) edge node[left] {$\vphantom{b}a$\,} (10);
      \path (00) edge node[right] {\,$b$} (01);
      \path (10) edge node[left] {$b$\,} (11);
      \path (01) edge node[right] {\,$\vphantom{b}a$} (11);
    \end{scope}
    \begin{scope}[x=1.5cm, state/.style={shape=circle, draw,
        fill=white, initial text=, inner sep=1mm, minimum size=3mm},
      xshift=10.5cm]
      \node[state, black] (10) at (0,0) {};
      \node[state, rectangle] (20) at (0,-1) {$\vphantom{b}a$};
      \node[state] (30) at (0,-2) {};
      \node[state, black] (11) at (1,0) {};
      \node[state, rectangle] (21) at (1,-1) {$b$};
      \node[state] (31) at (1,-2) {};
      \path (10) edge (20);
      \path (20) edge (30);
      \path (11) edge (21);
      \path (21) edge (31);
    \end{scope}
  \end{tikzpicture}
  \caption{Petri net and HDA models
    distinguishing interleaving (left)
    from non-interleaving (right) concurrency.
    Left: models for~$a. b+ b. a$;
    right: models for~$a\para b$.}
  \label{fi:int-conc}
\end{figure}

Higher-dimensional automata (HDAs),
introduced by Pratt and van Glabbeek \cite{Pratt91-geometry, Glabbeek91-hda},
are a general geometric model for non-interleaving concurrency
which subsumes, for example, event structures and
Petri nets~\cite{DBLP:journals/tcs/Glabbeek06}.
HDAs of dimension one are standard automata,
whereas HDAs of dimension two are isomorphic to asynchronous transition systems~\cite{%
  Bednarczyk87-async, DBLP:journals/cj/Shields85, Goubault02-cmcim}.
As an example, Figure~\ref{fi:int-conc} shows Petri net and HDA models
for a system with two events, labeled $a$ and $b$.
The Petri net and HDA on the left side
model the (mutually exclusive) interleaving of $a$ and $b$ as either $a. b$ or $b. a$;
those to the right model concurrent execution of $a$ and $b$.
In the HDA, this independence is indicated by a filled-in square.

Recent work defines languages of HDAs~\cite{DBLP:journals/mscs/FahrenbergJSZ21},
which are sets of partially ordered multisets with interfaces (ipomsets)~\cite{DBLP:journals/iandc/FahrenbergJSZ22}
that are closed under subsumptions.
Follow-up papers introduce a language theory for HDAs,
showing a Kleene theorem~\cite{DBLP:conf/concur/FahrenbergJSZ22},
which makes a connection between rational and regular ipomset languages
(those accepted by finite HDAs),
and a Myhill-Nerode theorem~\cite{DBLP:journals/corr/abs-2210-08298}
stating that regular languages are precisely those that have finite prefix quotient.
Here we continue to develop this nascent higher-dimensional automata theory.

Our first contribution, in Section~\ref{se:stuff}, is a pumping lemma for HDAs,
based on the fact that if an ipomset accepted by an HDA is long enough,
then there is a cycle in the path that accepts it.
As an application we can expose a class of non-regular ipomset languages.
We also show that regular languages are closed under intersection,
both using the Myhill-Nerode theorem and an explicit product construction.

In Section~\ref{se:st} we introduce a translation from HDAs
to ordinary finite automata over an alphabet of discrete ipomsets,
called ST-automata.
The translation forgets some of the structure of the HDA;
nevertheless, it allows us to show that
inclusion of regular ipomset languages is decidable.
This immediately implies that emptiness is decidable;
universality is trivial given that the universal language is not regular.

The paper~\cite{DBLP:journals/corr/abs-2210-08298}
introduces deterministic HDAs and shows that not all HDAs are determinizable.
We show that the question whether a language is deterministic is decidable.
As a weaker notion in-between determinism and non-determinism,
one may ask whether all regular languages may be recognized by finitely ambiguous HDAs,
\ie~those in which there is an upper bound for the number of accepting paths on any ipomset.
We show in Section~\ref{se:det} that the answer to this question is negative and that there are regular languages of unbounded ambiguity.

In Section~\ref{se:one}, we discuss one-letter languages. For regular languages,
\cite{journals/mathgrund/Buchi60} develops a characterization based on ultimately periodic functions. We extend this to regular languages on pomsets with one letter. However, we show that also one-letter languages may be non-deterministic and of unbounded ambiguity.

Finally, in Section~\ref{se:comp}, we are interested in a notion of complement.
This immediately raises two problems:
first, complements of ipomset languages are generally not closed under subsumption;
second, the complement of the empty language, which is regular,
is the universal language, which is non-regular.
The first problem is solved by taking subsumption closure,
turning complement into a pseudocomplement in the sense of lattice theory.

As to the second problem, we can show that complements of regular languages are non-regular.
Yet if we restrict the width of our languages, \ie the number of events which may occur concurrently,
then the so-defined width-bounded complement has good properties:
it is still a pseudocomplement;
the languages for which double complement is identity
have an easy characterization;
and finally width-bounded complements of regular languages are again regular.
The proof of that last property again uses ST-automata
and the fact that the induced translation from ipomset languages to word languages over discrete ipomsets
has good algebraic properties.
We note that width-bounded languages and (pseudo)complements are found in other works on concurrent languages,
for example \cite{DBLP:journals/jlp/JipsenM16, DBLP:journals/tcs/LodayaW00, DBLP:conf/concur/FanchonM02}.

Another goal of this work was to obtain the above results using automata-theoretic means
as opposed to category-theoretic or topological ones.
Indeed we do not use presheaves, track objects, cylinders,
or any other of the categorical or topological constructions
employed in~\cite{DBLP:conf/concur/FahrenbergJSZ22, DBLP:journals/corr/abs-2210-08298}.
Categorical reasoning would have simplified proofs in several places,
but no background in category theory or algebraic topology is necessary to understand this paper.

To sum up, our main contributions to higher-dimensional automata theory are as follows:
\begin{itemize}
\item a pumping lemma (Lemma~\ref{le:localpump});
\item closure of regular languages under intersection (Proposition~\ref{pr:cap});
\item decidability of inclusion of regular languages (Theorem~\ref{th:incl});
\item decidability of determinism of a regular language (Theorem~\ref{th:deterministic});
\item regular languages of unbounded ambiguity (Proposition~\ref{pr:lambig});
\item a characterization of one-letter deterministic languages (Proposition~\ref{p:DetOneLetUlPer});
\item closure of regular languages under width-bounded complement (Theorem~\ref{th:compClosure}).
\end{itemize}
This paper is an extended version of \cite{DBLP:conf/ictac/AmraneBFZ23}
which was presented at the 20th International Colloquium on Theoretical Aspects of Computing (ICTAC 2023).
Compared to this previous work, we update the definition of ST-automata to the one introduced in \cite{conf/ramics/AmraneBCFZ24},
we describe one-letter languages,
and we provide an effective procedure to decide whether a regular language is deterministic.
The latter solves an open problem posed in \cite{DBLP:conf/ictac/AmraneBFZ23}.
Also in~\cite{DBLP:conf/ictac/AmraneBFZ23}, we sought further development of the theory of timed HDAs and a Büchi-like characterization of HDA languages. This has been achieved in \cite{DBLP:conf/apn/AmraneBCF24} and \cite{conf/dlt/AmraneBFF24}, respectively.

\section{Pomsets with interfaces}

HDAs model systems in which (labeled)
events have duration and may happen concurrently.
Notably, as seen in the introduction,
concurrency of events is a more general notion than interleaving.
Every event has an interval in time during which it is active:
it starts at some point in time, then remains active until it terminates,
and never appears again.
Events may be concurrent,
in which case their activity intervals overlap:
one of the two events starts before the other terminates.
Executions are thus isomorphism classes of partially ordered intervals.
For reasons of compositionality
we also consider executions in which events may be active already at the beginning
or remain active at the end.

Any time point of an execution defines a \emph{concurrency list} (or \emph{conclist})
of currently active events.
The relative position of any two concurrent events on such lists
does not change during passage of time;
this equips events of an execution
with a partial order which we call \emph{event order}.
The temporal order of non-concurrent events
(one of two events terminating before the other starts)
introduces another partial order which we call \emph{precedence}.
An execution is, then, a collection of labeled events together with two partial orders.

To make the above precise, let $\Sigma$ be a finite alphabet.
We define three notions, in increasing order of generality:
\begin{itemize}
\item A \emph{concurrency list}, or \emph{conclist},
  $U=(U, {\evord_U}, \lambda_U)$
  consists of a finite set $U$,
  a strict total order ${\evord_U}\subseteq U\times U$ (the event order),\footnote{%
  A strict \emph{partial} order is a relation which is irreflexive and transitive;
  a strict \emph{total} order is a relation which is irreflexive, transitive, and total.
  We may omit the ``strict''.}
  and a labeling $\lambda_U: U\to \Sigma$.
\item A \emph{partially ordered multiset}, or \emph{pomset},
  $P=(P, {<_P}, {\evord_P}, \lambda_P)$
  consists of a finite set $P$,
  two strict partial orders ${<_P}, {\evord_P}\subseteq P\times P$
  (precedence and event order),
  and a labeling $\lambda_P: P\to \Sigma$,
  such that for each $x\ne y$ in $P$,
  at least one of
  $x<_P y$, $y<_P x$, $x\evord_P y$, or $y\evord_P x$ holds.
\item A \emph{pomset with interfaces}, or \emph{ipomset},
  $(P, {<_P}, {\evord_P}, S_P, T_P, \lambda_P)$
  consists of a pomset $(P, {<_P}, {\evord_P}, \lambda_P)$
  together with subsets $S_P, T_P\subseteq P$
  (\emph{source} and \emph{target} \emph{interfaces})
  such that elements of $S_P$ are \mbox{$<_P$-minimal} and those of $T_P$ are \mbox{$<_P$-maximal}.
\end{itemize}
We will omit the subscripts $_U$ and $_P$ whenever possible.
We will also often use the notation $\ilo{S}{P}{T}$ instead of $(P, {<}, {\evord}, S, T, \lambda)$ if no confusion may arise.

\begin{figure}[tbp]
  \centering
  \begin{tikzpicture}[x=.95cm, scale=1, every node/.style={transform shape}]
    \def\possh{-1.3}
    \begin{scope}[shift={(9.6,0)}]
      \def\hw{0.3}
      \filldraw[fill=green!50!white,-](0,1.2)--(1.2,1.2)--(1.2,1.2+\hw)--(0,1.2+\hw);
      \filldraw[fill=pink!50!white,-](0.3,0.7)--(1.9,0.7)--(1.9,0.7+\hw)--(0.3,0.7+\hw)--(0.3,0.7);
      \filldraw[fill=blue!20!white,-](0.5,0.2)--(1.7,0.2)--(1.7,0.2+\hw)--(0.5,0.2+\hw)--(0.5,0.2);
      \draw[thick,-](0,0)--(0,1.7);
      \draw[thick,-](2.2,0)--(2.2,1.7);
      \node at (0.6,1.2+\hw*0.5) {$a$};
      \node at (1.1,0.7+\hw*0.5) {$b$};
      \node at (1.1,0.2+\hw*0.5) {$c$};
    \end{scope}
    \begin{scope}[shift={(9.6,\possh)}]
      \node (a) at (0.4,0.7) {$\ibullet a$};
      \node (c) at (0.4,-0.7) {$c$};
      \node (b) at (1.8,0) {$b$};
      \path[densely dashed, gray] (a) edge (b) (b) edge (c) (a) edge (c);
    \end{scope}
    \begin{scope}[shift={(6.4,0)}]
      \def\hw{0.3}
      \filldraw[fill=green!50!white,-](0,1.2)--(1.2,1.2)--(1.2,1.2+\hw)--(0,1.2+\hw);
      \filldraw[fill=pink!50!white,-](1.3,0.7)--(1.9,0.7)--(1.9,0.7+\hw)--(1.3,0.7+\hw)--(1.3,0.7);
      \filldraw[fill=blue!20!white,-](0.5,0.2)--(1.7,0.2)--(1.7,0.2+\hw)--(0.5,0.2+\hw)--(0.5,0.2);
      \draw[thick,-](0,0)--(0,1.7);
      \draw[thick,-](2.2,0)--(2.2,1.7);
      \node at (0.6,1.2+\hw*0.5) {$a$};
      \node at (1.6,0.7+\hw*0.5) {$b$};
      \node at (1.1,0.2+\hw*0.5) {$c$};
    \end{scope}
    \begin{scope}[shift={(6.4,\possh)}]
      \node (a) at (0.4,0.7) {$\ibullet a$};
      \node (c) at (0.4,-0.7) {$c$};
      \node (b) at (1.8,0) {$b$};
      \path (a) edge (b);
      \path[densely dashed, gray]  (b) edge (c) (a) edge (c);
    \end{scope}
    \begin{scope}[shift={(3.2,0)}]
      \def\hw{0.3}
      \filldraw[fill=green!50!white,-](0,1.2)--(1.2,1.2)--(1.2,1.2+\hw)--(0,1.2+\hw);
      \filldraw[fill=pink!50!white,-](1.3,0.7)--(1.9,0.7)--(1.9,0.7+\hw)--(1.3,0.7+\hw)--(1.3,0.7);
      \filldraw[fill=blue!20!white,-](0.5,0.2)--(1.1,0.2)--(1.1,0.2+\hw)--(0.5,0.2+\hw)--(0.5,0.2);
      \draw[thick,-](0,0)--(0,1.7);
      \draw[thick,-](2.2,0)--(2.2,1.7);
      \node at (0.6,1.2+\hw*0.5) {$a$};
      \node at (1.6,0.7+\hw*0.5) {$b$};
      \node at (0.8,0.2+\hw*0.5) {$c$};
    \end{scope}
    \begin{scope}[shift={(3.2,\possh)}]
      \node (a) at (0.4,0.7) {$\ibullet a$};
      \node (c) at (0.4,-0.7) {$c$};
      \node (b) at (1.8,0) {$b$};
      \path (a) edge (b) (c) edge (b);
      \path[densely dashed, gray]  (a) edge (c);
    \end{scope}
    \begin{scope}[shift={(0.0,0)}]
      \def\hw{0.3}
      \filldraw[fill=green!50!white,-](0,1.2)--(0.4,1.2)--(0.4,1.2+\hw)--(0,1.2+\hw);
      \filldraw[fill=pink!50!white,-](1.3,0.7)--(1.9,0.7)--(1.9,0.7+\hw)--(1.3,0.7+\hw)--(1.3,0.7);
      \filldraw[fill=blue!20!white,-](0.5,0.2)--(1.1,0.2)--(1.1,0.2+\hw)--(0.5,0.2+\hw)--(0.5,0.2);
      \draw[thick,-](0,0)--(0,1.7);
      \draw[thick,-](2.2,0)--(2.2,1.7);
      \node at (0.2,1.2+\hw*0.5) {$a$};
      \node at (1.6,0.7+\hw*0.5) {$b$};
      \node at (0.8,0.2+\hw*0.5) {$c$};
    \end{scope}
    \begin{scope}[shift={(0.0,\possh)}]
      \node (a) at (0.4,0.7) {$\ibullet a$};
      \node (c) at (0.4,-0.7) {$c$};
      \node (b) at (1.8,0) {$b$};
      \path (a) edge (b) (c) edge (b) (a) edge (c);
    \end{scope}
  \end{tikzpicture}
  \caption{Activity intervals of events (top)
    and corresponding ipomsets (bottom),
    \cf~Example~\ref{ex:subsu}.
    Full arrows indicate precedence order;
    dashed arrows indicate event order;
    bullets indicate interfaces.}
  \label{fi:iposets1}
\end{figure}

Conclists may be regarded as pomsets with empty precedence
(\emph{discrete} pomsets);
the last condition
above enforces that $\evord$ is then total.
Pomsets are ipomsets with empty interfaces,
and in any ipomset $P$, the substructures induced by $S_P$ and $T_P$ are conclists.
Note that different events of ipomsets may carry the same label;
in particular we do \emph{not} exclude autoconcurrency.
Figure~\ref{fi:iposets1} shows some simple examples.
Source and target events are marked by ``$\ibullet$'' at the left or right side,
and if the event order is not shown, we assume that it goes downwards.

An ipomset $P$ is \emph{interval}
if $<_P$ is an interval order~\cite{book/Fishburn85},
that is, if it admits an interval representation
given by functions $b$ and $e$ from $P$ to real numbers such that
$b(x)\le e(x)$ for all $x\in P$ and
$x<_P y$ iff $e(x)<b(y)$ for all $x, y\in P$.
Given that our ipomsets represent activity intervals of events,
any of the ipomsets we will encounter will be interval,
and we omit the qualification ``interval''.
We emphasise that this is \emph{not} a restriction, but rather induced by the semantics,
see also~\cite{Wiener14}.
We let $\iiPoms$ denote the set of (interval) ipomsets.

Ipomsets may be \emph{refined} by shortening
activity intervals,
potentially removing concurrency and expanding precedence.
The inverse to refinement is called \emph{subsumption} and defined as follows.
For ipomsets $P$ and $Q$ we say that $Q$ subsumes $P$ and write $P\subsu Q$
if there is a bijection $f: P\to Q$ for which
\begin{enumerate}[(1)]
\item $f(S_P)=S_Q$, $f(T_P)=T_Q$, and $\lambda_Q\circ f=\lambda_P$;
\item \label{en:subsu.prec}
  $f(x)<_Q f(y)$ implies $x<_P y$;
\item \label{en:subsu.evord}
  $x\not<_P y$, $y\not<_P x$ and $x\evord_P y$ imply $f(x)\evord_Q f(y)$.
\end{enumerate}
That is, $f$ respects interfaces and labels, reflects precedence, and preserves essential event order.
(Event order is essential for concurrent events,
but by transitivity, it also appears between non-concurrent events.
Subsumptions ignore such non-essential event order.)
This definition adapts the one of~\cite{DBLP:journals/fuin/Grabowski81} to event orders and interfaces.
Intuitively, $P$ has more order and less concurrency than $Q$.

\begin{example}
  \label{ex:subsu}
  In Figure~\ref{fi:iposets1} there is a sequence of subsumptions from left to right:
  \begin{equation*}
  \ibullet acb \subsu
  \pomset[1]{\ibullet \scriptstyle a \scriptstyle \ar[dr] \\ & \scriptstyle  b \\ \; \scriptstyle c \scriptstyle \ar[ur]} \subsu
  \loset{\ibullet a \to b \\ \hphantom{\ibullet}c} \subsu
  \loset{\ibullet a \\ \hphantom{\ibullet}b \\ \hphantom{\ibullet}c}
  \end{equation*}
  An event $e_1$ is smaller than $e_2$ in the precedence order if $e_1$ is terminated before $e_2$ is started;
  $e_1$ is smaller than $e_2$ in the event order if they are concurrent and $e_1$ is above $e_2$ in the respective conclist.
\end{example}

\emph{Isomorphisms} of ipomsets are invertible subsumptions,
\ie bijections $f$ for which items \ref{en:subsu.prec} and \ref{en:subsu.evord} above
are strengthened to
\begin{enumerate}[(1$'$)]
  \setcounter{enumi}1
\item $f(x)<_Q f(y)$ iff $x<_P y$;
\item $x\not<_P y$ and $y\not<_P x$ imply that $x\evord_P y$ iff $f(x)\evord_Q f(y)$.
\end{enumerate}
Due to the requirement that all elements are ordered by $<$ or $\evord$,
there is at most one isomorphism between any two ipomsets.
Hence we may switch freely between ipomsets and their isomorphism classes.
We will also call these equivalence classes ipomsets and often conflate equality and isomorphism.

\subsection{Compositions}

The standard serial and parallel compositions of pomsets~\cite{DBLP:journals/fuin/Grabowski81}
extend to ipomsets.
The \emph{parallel} composition of ipomsets $P$ and $Q$ is
$P\parallel Q=(P\sqcup Q, {<}, {\evord}, S, T, \lambda)$, where
$P\sqcup Q$ denotes disjoint union and
\begin{itemize}
\item $x<y$ if $x<_P y$ or $x<_Q y$;
\item $x\evord y$ if $x\evord_P y$, $x\evord_Q y$, or $x\in P$ and $y\in Q$;
\item $S=S_P\cup S_Q$ and $T=T_P\cup T_Q$;
\item $\lambda(x)=\lambda_P(x)$ if $x\in P$ and $\lambda(x)=\lambda_Q(x)$ if $x\in Q$.
\end{itemize}
Note that parallel composition of ipomsets is generally not commutative,
see \cite{DBLP:journals/iandc/FahrenbergJSZ22} or Example~\ref{ex:compnotcup} below for details.

\begin{figure}[tbp]
  \centering
  \begin{tikzpicture}[x=.35cm, y=.5cm]
    \begin{scope}
      \node (1) at (2,2) {$\vphantom{bd}a$};
      \node (2) at (0,0) {$b$};
      \node (3) at (4,0) {$\vphantom{bd}c\ibullet$};
      \path (2) edge (3);
      \path (1) edge[densely dashed, gray] (2);
      \path (3) edge[densely dashed, gray] (1);
      \node (ast) at (5.5,1) {$\ast$};
      \node (4) at (7,2) {$d$};
      \node (5) at (7,0) {$\vphantom{bd}\ibullet c$};
      \path (4) edge[densely dashed, gray] (5);
      \node (equals) at (8.5,1) {$=$};
      \node (6) at (10,2) {$\vphantom{bd}a$};
      \node (7) at (10,0) {$b$};
      \node (8) at (14,2) {$d$};
      \node (9) at (14,0) {$\vphantom{bd}c$};
      \path (6) edge (8);
      \path (7) edge (9);
      \path (7) edge (8);
      \path (8) edge[densely dashed, gray] (9);
      \path (9) edge[densely dashed, gray] (6);
      \path (6) edge[densely dashed, gray] (7);
    \end{scope}
    \begin{scope}[shift={(19,0)}]
      \path[use as bounding box] (0,-2.5) -- (14,3.2);
      \node (1) at (2,2) {$\vphantom{bd}a$};
      \node (2) at (0,0) {$b$};
      \node (3) at (4,0) {$\vphantom{bd}c\ibullet$};
      \path (2) edge (3);
      \path (1) edge[densely dashed, gray] (2);
      \path (3) edge[densely dashed, gray] (1);
      \node (ast) at (5.5,1) {$\parallel$};
      \node (4) at (7,2) {$d$};
      \node (5) at (7,0) {$\vphantom{bd}\ibullet c$};
      \path (4) edge[densely dashed, gray] (5);
      \node (equals) at (8.5,1) {$=$};
      \begin{scope}[shift={(0,-.9)}]
        \node (6) at (12,4) {$\vphantom{bd}a$};
        \node (7) at (10,2) {$b$};
        \node (8) at (14,2) {$\vphantom{bd}c\ibullet$};
        \path (7) edge (8);
        \path (6) edge[densely dashed, gray] (7);
        \path (8) edge[densely dashed, gray] (6);
        \node (9) at (12,0) {$d$};
        \node (10) at (12,-2) {$\vphantom{bd}\ibullet c$};
        \path (9) edge[densely dashed, gray] (10);
        \path(7) edge[densely dashed, gray] (9);
      \end{scope}
    \end{scope}
  \end{tikzpicture}
  \caption{Gluing and parallel composition of ipomsets.}
  \label{fi:compositions}
\end{figure}

Serial composition generalises to a \emph{gluing} composition
which continues interface events across compositions and is defined as follows.
Let $P$ and $Q$ be ipomsets such that $T_P=S_Q$,
$x\evord_P y$ iff $x\evord_Q y$ for all $x, y\in T_P=S_Q$, and
the restrictions $\lambda_P\rest{T_P}=\lambda_Q\rest{S_Q}$,
then $P*Q=(P\cup Q, {<}, {\evord}, S_P, T_Q, \lambda)$, where
\begin{itemize}
\item $x<y$ if $x<_P y$, $x<_Q y$, or $x\in P-T_P$ and $y\in Q-S_Q$;\footnote{%
    We use ``$-$'' for set difference instead of the perhaps more common ``$\setminus$''.}
\item $\evord$ is the transitive closure of ${\evord_P}\cup {\evord_Q}$;
\item $\lambda(x)=\lambda_P(x)$ if $x\in P$ and $\lambda(x)=\lambda_Q(x)$ if $x\in Q$.
\end{itemize}
Gluing is, thus, only defined if the targets of $P$ are equal to the sources of $Q$ 
\emph{as conclists}.
If we would not conflate equality and isomorphism,
we would have to define the carrier set of $P*Q$
to be the disjoint union of $P$ and $Q$ quotiented out by the unique isomorphism $T_P\to S_Q$.
We will often omit the ``$*$'' in gluing compositions.
Fig~\ref{fi:compositions} shows some examples.

An ipomset $P$ is a~\emph{word} (with interfaces) if $<_P$ is total.
Conversely, $P$ is \emph{discrete} if $<_P$ is empty (hence $\evord_P$ is total).
Conclists are discrete ipomsets without interfaces.
The relation $\subsu$ is a partial order on $\iiPoms$
with minimal elements words
and maximal elements discrete ipomsets.
Further, gluing and parallel compositions respect $\subsu$.

\subsection{Special ipomsets}

A \emph{starter} is a discrete ipomset $U$ with $T_U=U$,
a \emph{terminator} one with $S_U=U$.
The intuition is that a starter does nothing but start the events in $A=U-S_U$,
and a terminator terminates the events in $B=U-T_U$.
These will be so important later that we introduce special notation,
writing $\starter{U}{A} = \ilo{U\setminus A}{U}{U}$ and $\terminator{U}{B}=\ilo{U}{U}{U\setminus B}$ for the above.
Starter $\starter{U}{A}$ is \emph{elementary} if $A$ is a singleton,
similarly for $\terminator{U}{B}$.
Discrete ipomsets $U$ with $S_U=T_U=U$ are identities for the gluing composition and written~$\id_U$.
Note that $\id_U=\starter{U}{\emptyset}=\terminator{U}{\emptyset} = \ilo{U}{U}{U}$.

The \emph{width} $\wid(P)$ of an ipomset $P$ is the cardinality of a maximal $<$-antichain.
For $k \ge 0$, we let $\iiPoms_{\le k} \subseteq \iiPoms$ denote the set of ipomsets of width at most $k$.
The \emph{size} of an ipomset $P$ is $\size(P)=|P|-\tfrac{1}{2}(|S_P|+|T_P|)$.
Identities are exactly the ipomsets of size $0$.
Elementary starters and terminators are exactly the ipomsets of size~$\tfrac 1 2$.

\begin{figure}[tbp]
  \centering
  \begin{subfigure}{0.2\linewidth}
    \begin{tikzpicture}
      \node (a) at (0.4,1.4) {$\ibullet a$};
      \node (c) at (0.4,0.7) {$c$};
      \node (b) at (1.8,1.4) {$b$};
      \node (d) at (0.4,0) {$d$};
      \path (a) edge (b);
      \path[densely dashed, gray]  (c) edge (d) (b) edge (c) (a) edge (c);
    \end{tikzpicture}
  \end{subfigure}
  \begin{subfigure}{0.65\linewidth}
    Sparse: $\normalsize \loset{\ibullet a \ibullet \\ \phantom{\ibullet} c \ibullet \\ \phantom{\ibullet} d \ibullet} * \loset{\ibullet a \phantom{\ibullet}\\ \ibullet c \ibullet \\ \ibullet d \ibullet} * \loset{\phantom{\ibullet}b \ibullet \\ \ibullet c \ibullet \\ \ibullet d \ibullet} * \loset{\ibullet b \\ \ibullet c \\ \ibullet d}$

    \medskip
    Dense: $\normalsize \loset{\ibullet a \ibullet \\ \phantom{\ibullet}c \ibullet} * \loset{\ibullet a \ibullet \\ \ibullet c \ibullet \\ \phantom{\ibullet}d \ibullet} * \loset{\ibullet a\phantom{\ibullet} \\ \ibullet c \ibullet \\ \ibullet d \ibullet} * \loset{\phantom{\ibullet}b \ibullet \\ \ibullet c \ibullet \\ \ibullet d \ibullet} * \loset{\ibullet b\phantom{\ibullet} \\ \ibullet c \ibullet \\ \ibullet d \ibullet} * \loset{\ibullet c \ibullet \\ \ibullet d\phantom{\ibullet}} * \ibullet c$
  \end{subfigure}
  \caption{Ipomset of size $3.5$ and two of its step decompositions.}
  \label{fi:densesparse}
\end{figure}

Any ipomset can be decomposed as a gluing of starters and terminators~\cite{DBLP:journals/iandc/FahrenbergJSZ22},
see also~\cite{DBLP:journals/fuin/JanickiK19}.
Such a presentation we call a \emph{step decomposition}.
If starters and terminators are alternating, the step decomposition is called \emph{sparse};
if they are all elementary, then it is \emph{dense}. The algebra of these representations is studied more in depth in~\cite{conf/ramics/AmraneBCFZ24}, along with how to represent subsumptions in this formalism.

\begin{example}
  Figure~\ref{fi:densesparse} illustrates two step decompositions.
  The sparse one first starts $c$ and $d$, then terminates $a$, starts $b$, and terminates $b$, $c$ and $d$ together.
  The dense one first starts $c$, then starts $d$, terminates $a$, starts $b$,
  and finally terminates $b$, $d$, and $c$ in order.
\end{example}

\begin{lemma}[\cite{DBLP:journals/corr/abs-2210-08298}]
  \label{le:ipomsparse}
  Every ipomset $P$ has a unique sparse step decomposition.
\end{lemma}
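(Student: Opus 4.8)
The plan is to establish existence first and then uniqueness, both by induction on the size of $P$. For existence, if $\size(P)=0$ then $P=\id_U$ is its own (trivial, one-term) sparse decomposition. Otherwise, let $U=S_P$ be the source conclist and consider the set $A$ of all events $x\in P$ that are $<_P$-minimal but \emph{not} in $S_P$; equivalently, the events which must be ``started first'' once the interface events are already running. If $A\neq\emptyset$, then the starter $\starter{U\cup A}{A}$ glues legally to the left of the ipomset $P'$ obtained from $P$ by adding $A$ to the source interface, and $\size(P')<\size(P)$; by induction $P'$ has a sparse decomposition, and since its first term must be a terminator (otherwise $P'$ could absorb more starting), prepending our starter keeps the alternation. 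If instead $A=\emptyset$, then every $<_P$-minimal event already lies in $S_P$, so we may terminate: let $B$ be the set of events in $S_P$ that are $<_P$-minimal but whose removal from the source is ``forced'' — concretely, take $B$ to be the largest subset of $S_P$ consisting of events $x$ such that $x$ is not $<_P$-below any event outside $S_P\cup T_P\cup\dots$; more cleanly, $B = S_P \setminus \{x : x <_P y \text{ for some } y\}$ would terminate too much, so instead one takes the terminator $\terminator{U}{B}$ with $B$ the set of $x\in S_P$ that are strictly below nothing, i.e. $x$ with no $y$ satisfying $x<_P y$ — wait, those are exactly the $<_P$-maximal elements of $S_P$. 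The right choice is: $B$ = the events of $S_P$ that are $<_P$-maximal in $P$ among... I will instead phrase it as: since $A=\emptyset$, the events of $P-S_P$ that become newly minimal after deleting some subset $B\subseteq S_P$ are well-defined, and one picks the \emph{unique maximal} such $B$; this makes the resulting term a terminator and forces the next term of the (inductively obtained) sparse decomposition of $P\setminus B$ to be a starter.

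The cleaner way to organize existence, which I would actually use, is the following. Call a prefix-step of $P$ maximal if it is either a starter $\starter{U}{A}$ with $A$ as large as possible (all currently-enabled events) or, when no event can start, a terminator $\terminator{U}{B}$ with $B$ as large as possible (all events of $S_P$ that no longer block anything). One shows: (i) exactly one of these two situations applies, depending on whether $S_P$ already contains all $<_P$-minimal elements; (ii) factoring out the maximal prefix-step strictly decreases $\size$; (iii) iterating produces an alternating sequence because a maximal starter is necessarily followed by a situation where nothing more can start (else the starter was not maximal), hence by a terminator, and symmetrically. Termination is immediate since $\size$ drops by at least $\tfrac12$ each step. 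This yields a sparse step decomposition.

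For uniqueness, suppose $P = V_1 * V_2 * \dots * V_n = W_1 * W_2 * \dots * W_m$ are two sparse decompositions. Both $V_1$ and $W_1$ have the same source $S_P$, and both are either starters or terminators; since $S_P$ determines which (a legal first step can start an event iff $S_P$ omits some $<_P$-minimal element, and in that case no terminator is legal first, as terminating would leave a non-minimal element exposed), $V_1$ and $W_1$ are of the same type. If both are starters, maximality forces $V_1=W_1=\starter{S_P \cup A}{A}$ for the \emph{unique} set $A$ of events enabled at the start — here one must check that in a sparse decomposition the first starter is automatically maximal, because otherwise the second term would be another starter, violating alternation. Similarly if both are terminators. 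Then cancel the common first term (gluing is cancellative on the left, since the carrier sets and orders are recovered) and induct. The main obstacle I expect is precisely this point: proving that sparseness (alternation) forces each starter/terminator in the decomposition to be maximal, so that it is uniquely pinned down by the ipomset seen so far. Getting the combinatorial characterization of ``enabled'' events right — and verifying that the induced factor $P'$ is again an interval ipomset with the expected interfaces, so the induction hypothesis applies — is the delicate part; the rest is bookkeeping about gluing.
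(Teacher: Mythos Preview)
The paper does not prove this lemma; it is imported from \cite{DBLP:journals/corr/abs-2210-08298} without argument, so there is no proof here to compare against. Your overall strategy---greedily peel off a maximal starter or terminator, show that alternation is forced, and for uniqueness show that sparseness determines the first factor and then cancel---is the standard one and is essentially correct.

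There is, however, a genuine gap in your existence argument. In step~(iii) you say that after a maximal starter ``nothing more can start, hence [the next factor is] a terminator, and symmetrically''. What you have not checked is that a \emph{non-trivial} terminator is then available: you need some $y$ in the current source interface that lies $<$-below every not-yet-started event (and is not in $T_P$). This is precisely where the interval hypothesis does its work, and it is not automatic. On the $2{+}2$ poset ($a<c$, $b<d$, nothing else) your greedy procedure starts $\{a,b\}$ and then stalls: neither $a$ nor $b$ is terminable, and $c,d$ are not yet minimal---so no step decomposition exists at all. The repair is to use that in an interval order the strict upsets $U(y)=\{z:y<z\}$ are totally ordered by inclusion; a minimal element $y_0$ with inclusion-maximal upset then satisfies $y_0<z$ for every non-minimal $z$ and is therefore terminable. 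The dual fact (strict downsets totally ordered) handles the ``symmetrically'' direction, producing a newly minimal element after a maximal terminator. You mention intervality only at the very end, as something to verify about the factor $P'$; it should be the engine of the whole construction.

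Two smaller points. First, your uniqueness argument is fine once you assume---as one should---that no factor in a sparse decomposition is an identity; state this explicitly, since your ``otherwise the second term would be another starter'' step relies on the second term being a \emph{proper} terminator. Second, the first paragraph reads as live debugging (``wait'', ``I will instead phrase it as''); drop it and keep only the clean greedy formulation you give afterwards.
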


Dense step decompositions are generally not unique, but they all have the same length.

\begin{lemma}
  \label{le:ipomdense}
  Every dense step decomposition of ipomset $P$ has length $2\,\size(P)$.
\end{lemma}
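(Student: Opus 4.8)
The plan is to reduce the statement to two facts: that the function $\size$ is \emph{additive} under gluing composition, and that every elementary starter and every elementary terminator has size $\tfrac12$. Granting these, the claim is a one-line induction on the length of the decomposition.

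First I would compute the size of a generic starter and terminator directly from the definitions. For a starter $\starter{U}{A}=\ilo{U\setminus A}{U}{U}$ we have $|P|=|U|$, $|S_P|=|U|-|A|$, $|T_P|=|U|$, so
$\size(\starter{U}{A}) = |U| - \tfrac12\bigl((|U|-|A|)+|U|\bigr) = \tfrac12|A|$,
and symmetrically $\size(\terminator{U}{B})=\tfrac12|B|$. In particular, when $A$ (resp.\ $B$) is a singleton we get size $\tfrac12$, which is all we need; this is of course consistent with the earlier remark that elementary starters and terminators are exactly the ipomsets of size $\tfrac12$.

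Next I would establish the additivity lemma: whenever $P*Q$ is defined, $\size(P*Q)=\size(P)+\size(Q)$. Here one uses that the carrier of $P*Q$ is the disjoint union of $P$ and $Q$ with $T_P$ identified with $S_Q$, so $|P*Q| = |P|+|Q|-|T_P|$ (recall $|T_P|=|S_Q|$), together with $S_{P*Q}=S_P$ and $T_{P*Q}=T_Q$. Substituting into the definition of $\size$ and using $|T_P|=|S_Q|$ to cancel the cross terms gives precisely $\bigl(|P|-\tfrac12(|S_P|+|T_P|)\bigr) + \bigl(|Q|-\tfrac12(|S_Q|+|T_Q|)\bigr)$. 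This is pure cardinality bookkeeping; the only point requiring care is keeping the set-level identification $T_P=S_Q$ explicit so that the count of $|P*Q|$ is correct.

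Finally, let $P = P_1 * P_2 * \cdots * P_n$ be a dense step decomposition. By definition each factor $P_i$ is an \emph{elementary} starter or terminator, hence $\size(P_i)=\tfrac12$ by the first step; by induction on $n$ and the additivity lemma, $\size(P) = \sum_{i=1}^{n}\size(P_i) = \tfrac n2$, i.e.\ $n = 2\,\size(P)$. I do not anticipate any genuine obstacle: the only mild subtleties are making the cardinality computation in the additivity step precise, and observing that identities (which have size $0$) never occur as factors of a dense decomposition since they are not elementary, so the induction base $n=0$ applies exactly to $P=\id_U$, where $\size(P)=0$.
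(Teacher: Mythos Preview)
Your proof is correct. The additivity of $\size$ under gluing is a clean lemma and the induction goes through; the edge case you flag about identities is harmless since the statement is vacuous when no dense decomposition exists.

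The paper's argument is different in spirit: rather than proving additivity and summing factor sizes, it counts directly how many steps each event of $P$ contributes to any dense decomposition. An event outside $S_P\cup T_P$ is started once and terminated once, contributing two steps; an event in exactly one of $S_P$, $T_P$ contributes one step; an event in $S_P\cap T_P$ contributes none. Summing and simplifying with inclusion--exclusion gives $2|P|-|S_P|-|T_P|=2\,\size(P)$. Your route isolates a reusable algebraic fact (size is a monoid homomorphism from gluing to addition), which is arguably tidier and would serve well elsewhere; the paper's route is a one-shot combinatorial count that avoids introducing the additivity lemma but gives less to reuse.
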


\begin{proof}
  Every element of a dense step decomposition of $P$
  starts precisely one event
  or terminates precisely one event.
  Thus every event in $P-(S_P\cup T_P)$ gives rise to two elements in the step decomposition
  and every event in $S_P\cup T_P-(S_P\cap T_P)$ to one element.
  The length of the step decomposition is, thus,
  $2 |P| - 2 |S_P\cup T_P| + |S_P\cup T_P| - |S_P\cap T_P|
  = 2 |P| - (|S_P|+|T_P|-|S_P\cap T_P|) - |S_P\cap T_P| = 2\, \size(P)$.
\end{proof}

\subsection{Rational languages}

For $A\subseteq \iiPoms$ we let
\begin{equation*}
  A\down=\{P\in \iiPoms\mid \exists Q\in A: P\subsu Q\}.
\end{equation*}
Note that $(A\cup B)\down=A\down\cup B\down$ for all $A, B\subseteq \iiPoms$,
but for intersection this does \emph{not} hold.
For example it may happen that $A\cap B=\emptyset$ but $A\down\cap B\down\ne \emptyset$.
A \emph{language} is a subset $L\subseteq \iiPoms$ for which $L\down=L$.
The set of all languages is denoted $\Langs\subseteq 2^{\iiPoms}$.

The \emph{width} of a language $L$ is $\wid(L)=\sup\{\wid(P)\mid P\in L\}$.
For $k\ge 0$ and $L\in \Langs$, denote $L_{\le k}=\{P\in L\mid \wid(P)\le k\}$.
$L$ is \emph{$k$-dimensional} if $L=L_{\le k}$.
We let $\Langs_{\le k}=\Langs\cap \iiPoms_{\le k}$ denote the set of $k$-dimensional languages.

The \emph{singleton ipomsets} are
$[a]$ $[\ibullet a]$, $[a\ibullet]$ and $[\ibullet a\ibullet]$,
for all $a\in \Sigma$.
The \emph{rational operations} $\cup$, $*$, $\|$ and (Kleene plus) $^+$ for languages are defined as follows.
\begin{align*}
  L*M &= \{P*Q\mid P\in L,\; Q\in M,\; T_P=S_Q\}\down, \\
  L\parallel M &= \{P\parallel Q\mid P\in L,\; Q\in M\}\down, \\
  L^+ &= \smash[t]{\bigcup\nolimits_{n\ge 1}} L^n,\qquad \text{ for } L^1=L, L^{n+1}= L*L^n.
\end{align*}
The class of \emph{rational languages} is the smallest subset of $\Langs$ that contains
\begin{equation*}
  \big\{\emptyset, \{\epsilon\}, \{[a]\}, \{[\ibullet a]\},\{[a\ibullet]\}, \{[\ibullet a\ibullet]\}\mid a\in \Sigma\big\}
\end{equation*}
($\epsilon$ denotes the empty ipomset)
and is closed under the rational operations. 

\begin{lemma}[\cite{DBLP:conf/concur/FahrenbergJSZ22}]
  \label{lem:finitewidth}
  Any rational language has finite width.
\end{lemma}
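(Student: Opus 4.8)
The plan is to prove this by structural induction on the rational expression defining the language, tracking how the width of a language behaves under each of the rational operations. The base cases are immediate: $\emptyset$ and $\{\epsilon\}$ have width $0$, and each of the singleton languages $\{[a]\}$, $\{[\ibullet a]\}$, $\{[a\ibullet]\}$, $\{[\ibullet a\ibullet]\}$ has width $1$. It then suffices to show that $\cup$, $*$, $\parallel$ and $^+$ all send finite-width languages to finite-width languages.

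For union this is clear, since $\wid(L\cup M)=\max(\wid(L),\wid(M))$. For parallel composition one reads off from the definition of $\parallel$ that $\wid(P\parallel Q)=\wid(P)+\wid(Q)$: precedence in $P\parallel Q$ is the disjoint union of the precedences of $P$ and of $Q$, with no relations across the two components, so a maximal $<$-antichain splits as the union of a maximal antichain of $P$ and one of $Q$; hence $\wid(L\parallel M)\le\wid(L)+\wid(M)$, again finite. Here I also use that passing to the subsumption closure $(\cdot)\down$ cannot increase width: if $P\subsu Q$ via $f$, then $f$ maps every $<_P$-antichain to a $<_Q$-antichain, because $f$ reflects precedence, so $\wid(P)\le\wid(Q)$, and therefore $\wid(A\down)\le\wid(A)$ for any $A\subseteq\iiPoms$.

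The crucial case, and the one I expect to be the main obstacle, is the gluing composition $*$, since this is what is iterated by the Kleene plus. The claim is that $\wid(P*Q)\le\max(\wid(P),\wid(Q))$. The point is that in $P*Q$ every element of $P-T_P$ precedes every element of $Q-S_Q$, so no $<$-antichain of $P*Q$ can meet both $P-T_P$ and $Q-S_Q$; hence any such antichain lies entirely in $P$ or entirely in $Q$ (recall that the shared interface $T_P=S_Q$ is identified inside both). Moreover $<_{P*Q}$ restricted to $P$, respectively to $Q$, is exactly $<_P$, respectively $<_Q$ — this uses that the common interface is a conclist and so carries no precedence, ruling out the offending cases in the definition of $*$. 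Thus any antichain of $P*Q$ has size at most $\max(\wid(P),\wid(Q))$, and combining with the closure bound above, $\wid(L*M)\le\max(\wid(L),\wid(M))$.

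Finally, for the Kleene plus, an easy induction using $L^{n+1}=L*L^n$ together with the previous paragraph gives $\wid(L^n)\le\wid(L)$ for all $n\ge 1$, so that $\wid(L^+)=\sup_{n\ge 1}\wid(L^n)\le\wid(L)$ is finite. This closes the induction, and every rational language has finite width. The one genuinely load-bearing observation is that gluing, unlike parallel composition, does not increase width — this is precisely why unbounded iteration stays within bounded width.
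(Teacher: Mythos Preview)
Your proof is correct. The paper itself does not prove this lemma; it is merely cited from \cite{DBLP:conf/concur/FahrenbergJSZ22}, so there is no in-paper argument to compare against. Your structural induction is the expected one, and the key step --- that gluing does not increase width because any $<_{P*Q}$-antichain is already contained in $P$ or in $Q$ --- is handled carefully, including the verification that $<_{P*Q}$ restricted to $P$ (resp.\ $Q$) coincides with $<_P$ (resp.\ $<_Q$) via the observation that the shared interface $T_P=S_Q$ is a conclist and thus carries no precedence.
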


It immediately follows that the universal language $\iiPoms$ is \emph{not} rational.

The \emph{prefix quotient} of a language $L\in \Langs$ by an ipomset $P$ is $P\backslash L=\{Q\in \iiPoms\mid P Q\in L\}$.
Similarly, the \emph{suffix quotient} of $L$ by $P$ is $L/P=\{Q\in \iiPoms\mid Q P\in L\}$.
Denoting
\begin{equation*}
  \suff(L)=\{P\backslash L\mid P\in \iiPoms\},
  \qquad
  \pref(L)=\{L/P\mid P\in\iiPoms\},
\end{equation*}
we may now state the central result of \cite{DBLP:journals/corr/abs-2210-08298}.

\begin{theorem}[\cite{DBLP:journals/corr/abs-2210-08298}]
  \label{th:MN}
  Let $L\in\Langs$ be a language.  The following are equivalent:
  \begin{enumerate}
  \item $L$ is rational;
  \item $\suff(L)$ is finite;
  \item $\pref(L)$ is finite.
  \end{enumerate}
\end{theorem}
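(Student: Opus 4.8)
The plan is to prove $(1)\Leftrightarrow(2)$ and deduce $(1)\Leftrightarrow(3)$ from it by a reversal duality, using throughout the Kleene theorem~\cite{DBLP:conf/concur/FahrenbergJSZ22} (so that ``rational'' may be read as ``accepted by a finite HDA'') and Lemma~\ref{lem:finitewidth}. For an ipomset $P$, let $P^{\mathrm{op}}$ be the ipomset obtained by reversing precedence and exchanging the source and target interfaces; then $(P*Q)^{\mathrm{op}}=Q^{\mathrm{op}}*P^{\mathrm{op}}$ and $P\subsu Q$ iff $P^{\mathrm{op}}\subsu Q^{\mathrm{op}}$, so $L\mapsto L^{\mathrm{op}}$ is an involution of $\Langs$ that permutes the generators, respects the rational operations, and hence preserves rationality; moreover a short computation gives $(P\backslash L)^{\mathrm{op}}=L^{\mathrm{op}}/P^{\mathrm{op}}$, so $P\mapsto P^{\mathrm{op}}$ induces a bijection $\suff(L)\to\pref(L^{\mathrm{op}})$. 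Applying $(1)\Leftrightarrow(2)$ to $L^{\mathrm{op}}$ then yields $(1)\Leftrightarrow(3)$ for $L$, and it remains to prove $(1)\Leftrightarrow(2)$.

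For $(1)\Rightarrow(2)$ I would fix a finite HDA $\mathcal H$ accepting $L$, with cell set $C$, and for $P\in\iiPoms$ let $\mathrm{st}(P)\subseteq C$ be the set of cells reachable from an initial cell by a path with label $P$. The factorization lemma for HDA languages --- an accepting path whose label subsumes $PQ$ can be cut at some cell into a path with label subsuming $P$ followed by one with label subsuming $Q$, and conversely --- then yields $P\backslash L=\bigcup\{L(\mathcal H_c):c\in\mathrm{st}(P)\}$, where $\mathcal H_c$ is $\mathcal H$ with its initial cells replaced by $\{c\}$. Thus $P\backslash L$ depends on $P$ only through $\mathrm{st}(P)$, which takes values in the finite set $2^{C}$, so $\suff(L)$ is finite.

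For $(2)\Rightarrow(1)$ I would first observe that finiteness of $\suff(L)$ forces $\wid(L)=k<\infty$: if $L$ contained ipomsets $R_n$ with $\wid(R_n)\ge n$, then, since our ipomsets are interval, a width-$n$ antichain of $R_n$ consists of pairwise overlapping intervals and hence shares a common time point, at which the active events form a conclist $C_n$ with $|C_n|\ge n$; splitting $R_n$ there gives $R_n=P_n*Q_n$ with $T_{P_n}=S_{Q_n}=C_n$, so $Q_n\in P_n\backslash L\ne\emptyset$, and conclists of distinct cardinalities yield distinct prefix quotients --- contradicting finiteness. Now let $\Gamma_k$ be the finite set of elementary starters and terminators on conclists of width at most $k$ and define a finite automaton $\mathcal A$ over $\Gamma_k$: its states are the pairs $(U,\xi)$ with $U$ a conclist of width $\le k$ and $\xi=P\backslash L$ for some $P$ with $T_P=U$ (finitely many, by hypothesis); its initial states are the $(U,\id_U\backslash L)$; its accepting states are those $(U,\xi)$ with $\id_U\in\xi$; and whenever $u\in\Gamma_k$ has source conclist $U$ there is a transition from $(U,P\backslash L)$ to $(V,(P*u)\backslash L)$, with $V$ the target conclist of $u$ --- this is well defined since $P*u*Q\in L$ depends on $P$ only through $P\backslash L$. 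A run of $\mathcal A$ on a composable word $w$ ends in the state $(T_R,R\backslash L)$ with $R$ the gluing of $w$, and this state is accepting iff $R\in L$; so $\mathcal A$ accepts exactly $W=\{w\in\Gamma_k^{*}:w\text{ composable and }\mathrm{gl}(w)\in L\}$, and $\mathrm{gl}(W)$ equals $L$ minus its identity ipomsets (of which there are finitely many, each itself rational), because every ipomset of width $\le k$ has a dense step decomposition over $\Gamma_k$. Finally, $W$ is a regular word language; expanding a rational expression for $W$ by replacing each letter $u$ by a rational expression for $\{u\}\down$ over the standard generators (an elementary starter or terminator is a parallel composition of singleton ipomsets, so $\{u\}\down=\{u\}$ is rational) and replacing union, concatenation and Kleene plus by $\cup$, $*$ and Kleene plus on ipomset languages, one obtains --- since the ipomset operations absorb the down-closures --- a rational expression evaluating to $(\mathrm{gl}(W))\down$, which together with the identities of $L$ is $L$. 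Hence $L$ is rational.

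The real content, and the main obstacle, is controlling how subsumption and down-closure interact with step decompositions. In $(1)\Rightarrow(2)$ this is the factorization lemma, namely that an accepting run whose label merely subsumes $PQ$ can still be cut at the prefix boundary into runs whose labels subsume $P$ and $Q$; establishing it requires the ipomset combinatorics of~\cite{DBLP:conf/concur/FahrenbergJSZ22,conf/ramics/AmraneBCFZ24}. In $(2)\Rightarrow(1)$ it is the precise bookkeeping of interface events in $\mathcal A$, together with the verification that pushing a rational word expression through the gluing map reproduces $L$ exactly rather than only up to subsumption, which again rests on the algebra of step decompositions of~\cite{conf/ramics/AmraneBCFZ24}. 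The finite-width reduction and the Kleene-theorem step are, by comparison, routine.
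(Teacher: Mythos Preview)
The paper does not prove Theorem~\ref{th:MN}: it is quoted verbatim from~\cite{DBLP:journals/corr/abs-2210-08298} with no argument given, so there is no in-paper proof to compare against. Your plan is a reasonable Myhill--Nerode style argument and is essentially correct; two remarks.

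In $(1)\Rightarrow(2)$ you phrase the key factorisation as ``an accepting path whose label \emph{subsumes} $PQ$ can be cut into pieces whose labels subsume $P$ and $Q$''. That is strictly stronger than Lemma~\ref{l:PathDivision}, which only splits a path whose label \emph{equals} $P*Q$, and it is also unnecessary: the paper records (just before Theorem~\ref{th:kleene}, citing~\cite{DBLP:conf/concur/FahrenbergJSZ22}) that $\Lang(X)=\{\ev(\alpha)\mid \alpha\text{ accepting}\}$ is already closed under subsumption. Hence $PQ\in L$ gives an accepting path with label \emph{exactly} $PQ$, to which Lemma~\ref{l:PathDivision} applies directly; your identity $P\backslash L=\bigcup_{c\in\mathrm{st}(P)}\Lang(\mathcal H_c)$ then follows with $\mathrm{st}(P)$ defined via exact labels. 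So you can drop the subsumption version of the factorisation lemma entirely and the ``main obstacle'' you flag disappears.

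Your $(2)\Rightarrow(1)$ argument is correct and is, in effect, a hand-rolled instance of the ST-automaton machinery the paper develops in Section~\ref{se:st}: your automaton $\mathcal A$ over $\Gamma_k$ is an ST-automaton in disguise, and the passage from a rational word expression back to a rational ipomset expression is exactly Lemma~\ref{le:psireg} (that $\Psi$ respects $\cup$, concatenation and Kleene plus). Your finite-width step is fine once you note that all elements of a nonempty $P\backslash L$ share the source interface $T_P$, so quotients coming from prefixes with target conclists of different sizes are distinct. The reversal duality for $(1)\Leftrightarrow(3)$ is routine.
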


\section{Higher-dimensional automata}

An HDA is a collection of \emph{cells} which are connected by \emph{face maps}.
Each cell contains a conclist of events which are active in it,
and the face maps may terminate some events (\emph{upper} faces)
or ``unstart'' some events (\emph{lower} faces),
\ie map a cell to another in which the indicated events are not yet active.

To make this precise,
let $\square$ denote the set of conclists.
A \emph{precubical set}
\begin{equation*}
  X=(X, {\ev}, \{\delta_{A, U}^0, \delta_{A, U}^1\mid U\in \square, A\subseteq U\})
\end{equation*}
consists of a set of cells $X$
together with a function $\ev: X\to \square$.
For a conclist $U$ we write $X[U]=\{x\in X\mid \ev(x)=U\}$ for the cells of type $U$.
Further, for every $U\in \square$ and $A\subseteq U$ there are face maps
$\delta_{A}^0, \delta_{A}^1: X[U]\to X[U-A]$
which satisfy $\delta_A^\nu \delta_B^\mu = \delta_B^\mu \delta_A^\nu$
for $A\cap B=\emptyset$ and $\nu, \mu\in\{0, 1\}$.
The upper face maps $\delta_A^1$ transform a cell $x$ into one in which the events in $A$ have terminated,
whereas the lower face maps $\delta_A^0$ transform $x$ into a cell where the events in $A$ have not yet started.
The \emph{precubical identity} above
expresses the fact that these transformations commute for disjoint sets of events.

\begin{figure}[tbp]
  \centering
  \begin{tikzpicture}[x=.9cm, y=.8cm, scale=0.9, every node/.style={transform shape}]
    \node[circle,draw=black,fill=blue!30,inner sep=0pt,minimum size=15pt]
    (aa) at (0,0) {$\vphantom{hy}v$};
    \node[circle,draw=black,fill=blue!30,inner sep=0pt,minimum size=15pt]
    (ac) at (0,4) {$\vphantom{hy}x$};
    \node[circle,draw=black,fill=blue!30,inner sep=0pt,minimum size=15pt]
    (ca) at (4,0) {$\vphantom{hy}w$};
    \node[circle,draw=black,fill=blue!30,inner sep=0pt,minimum size=15pt]
    (cc) at (4,4) {$\vphantom{hy}y$};
    \node[circle,draw=black,fill=red!30,inner sep=0pt,minimum size=15pt]
    (ba) at (2,0) {$\vphantom{hy}e$};
    \node[circle,draw=black,fill=red!30,inner sep=0pt,minimum size=15pt]
    (bc) at (2,4) {$\vphantom{hy}f$};
    \node[circle,draw=black,fill=green!30,inner sep=0pt,minimum size=15pt]
    (ab) at (0,2) {$\vphantom{hy}g$};
    \node[circle,draw=black,fill=green!30,inner sep=0pt,minimum size=15pt]
    (cb) at (4,2) {$\vphantom{hy}h$};
    \node[circle,draw=black,fill=black!20,inner sep=0pt,minimum size=15pt]
    (bb) at (2,2) {$\vphantom{hy}q$};
    \node[right] at (5,4) {$X[\emptyset]=\{v,w,x,y\}$};
    \node[right] at (5,3.2) {$X[a]=\{e,f\}$};
    \node[right] at (5,2.4) {$X[b]=\{g,h\}$};
    \node[right] at (5,1.6) {$X[\loset{a\\b}]=\{q\}$};
    \path (ba) edge node[above] {$\delta^0_a$} (aa);
    \path (ba) edge node[above] {$\delta^1_a$} (ca);
    \path (bb) edge node[above] {$\delta^0_a$} (ab);
    \path (bb) edge node[above] {$\delta^1_a$} (cb);
    \path (bc) edge node[above] {$\delta^0_a$} (ac);
    \path (bc) edge node[above] {$\delta^1_a$} (cc);
    \path (ab) edge node[left] {$\delta^0_b$} (aa);
    \path (ab) edge node[left] {$\delta^1_b$} (ac);
    \path (bb) edge node[left] {$\delta^0_b$} (ba);
    \path (bb) edge node[left] {$\delta^1_b$} (bc);
    \path (cb) edge node[left] {$\delta^0_b$} (ca);
    \path (cb) edge node[left] {$\delta^1_b$} (cc);
    \path (bb) edge node[above left] {$\delta^1_{ab}\!\!$} (cc);
    \path (bb) edge node[above left] {$\delta^0_{ab}\!\!$} (aa);
    \node[below left] at (aa) {$\bot\;$};
    \node[below left] at (ab) {$\bot\;$};
    \node[above right] at (cb) {$\;\top$};
    \node[above right] at (cc) {$\;\top$};
    \node[above right] at (ab) {$\;\top$};
    \node[right] at (5,0.8) {$\bot_X=\{v, g\}$};
    \node[right] at (5,0) {$\top_X=\{h, y, g\}$};
    \begin{scope}[shift={(8.8,.4)}, x=1.3cm, y=1.3cm]
      \filldraw[color=black!20] (0,0)--(2,0)--(2,2)--(0,2)--(0,0);			
      \filldraw (0,0) circle (0.05);
      \filldraw (2,0) circle (0.05);
      \filldraw (0,2) circle (0.05);
      \filldraw (2,2) circle (0.05);
      \path[red!50!black,line width=1] (0,0) edge node[below, black] {$\vphantom{b}a$} (1.95,0);
      \path[red!50!black,line width=1] (0,2) edge (1.95,2);
      \path[green!50!black,line width=1] (0,0) edge node[pos=.6, left, black] {$\vphantom{bg}b$} (0,1.95);
      \path[green!50!black,line width=1] (2,0) edge (2,1.95);
      \node[left] at (0,0) {$\bot$};
      \node[left] at (0,0.7) {$\bot$};
      \node[right] at (0,0.7) {$\top$};
      \node[right] at (2,2) {$\top$};
      \node[right] at (2,1) {$\top$};
      
      \node[blue!70,centered] at (0,-0.2) {$v$};
      \node[centered, red!50!black] at (1,0.15) {$e$};
      \node[blue!70,centered] at (2,-0.2) {$w$};
      \node[centered,blue!70] at (2,2.2) {$y$};
      \node[centered,blue!70] at (0,2.2) {$x$};
      \node[centered, green!50!black] at (0.2,1.1) {$\vphantom{bg}g$};
      \node[centered, green!50!black] at (1.8,1.1) {$\vphantom{bg}h$};
      \node[centered] at (1,1) {$q$};
      \node[centered, red!50!black] at (1,1.75) {$f$};
    \end{scope}
  \end{tikzpicture}
  \caption{A two-dimensional HDA $X$ on $\Sigma=\{a, b\}$, see Example~\ref{ex:hda}.}
  \label{fi:abcube}
\end{figure}

A \emph{higher-dimensional automaton} (\emph{HDA}) $X=(X, \bot_X, \top_X)$
is a precubical set together with subsets $\bot_X, \top_X\subseteq X$
of \emph{start} and \emph{accept} cells.
While HDAs may have an infinite number of cells, we will mostly be interested in finite HDAs.
Thus, in the following we will omit the word ``finite'' and will be explicit when talking about infinite HDAs.
The \emph{dimension} of an HDA $X$ is $\dim(X)=\sup\{|\ev(x)|\mid x\in X\}\in \Nat\cup\{\infty\}$.

\begin{remark}
  Precubical sets are presheaves over a category on objects $\square$,
  and then HDAs form a category with the induced morphisms,
  see \cite{DBLP:conf/concur/FahrenbergJSZ22}.
\end{remark}

For a precubical set $X$ and $k\ge 0$ we write
$X_{\le k}=\{x\in X\mid |\ev(x)|\le k\}$ for its \emph{$k$-skeleton},
\ie its restriction to cells of dimension at most $k$.
It is clear that this is again a precubical set,
and we use the same notation for HDAs.

A standard automaton is the same as a one-dimensional HDA $X=X_{\le 1}$
with the property that for all $x \in \bot_X \cup \top_X$, $\ev(x) = \emptyset$:
cells in $X[\emptyset]$ are states,
cells in $X[\{a\}]$ for $a\in \Sigma$ are $a$-labeled transitions,
and face maps $\delta_{\{a\}}^0$ and $\delta_{\{a\}}^1$
attach source and target states to transitions.
In contrast to ordinary automata, one-dimensional HDAs may have start and accept \emph{transitions}
instead of merely states,
so languages of one-dimensional HDAs may contain words with interfaces.

\begin{example}
  \label{ex:hda}
  Figure~\ref{fi:abcube} shows a two-dimensional HDA as a combinatorial object (left)
  and in a geometric realisation (right).
  It consists of
  nine cells: 
  the corner cells $X_0 = \{x,y,v,w\}$ in which no event is active (for all $z \in X_0$, $\ev(z) = \emptyset$),
  the transition cells $X_1 = \{g,h,f,e\}$ in which one event is active ($\ev(f) = \ev(e) = a$ and $\ev(g) = \ev(h) = b$),
  and the square cell $q$ where $\ev(q) = \loset{a\\b}$.

  The arrows between the cells on the left representation correspond to the face maps connecting them.
  For example, the upper face map $\delta^1_{a b}$ maps $q$ to $y$
  because the latter is the cell in which the active events $a$ and $b$ of $q$ have been terminated.
  On the right, face maps are used to glue cells together,
  so that for example $\delta^1_{a b}(q)$ is glued to the top right of $q$.
  In this and other geometric realisations,
  when we have two concurrent events $a$ and $b$ with $a\evord b$, we will draw $a$ horizontally and $b$ vertically.
\end{example}

\subsection{Regular languages}

Computations of HDAs are \emph{paths}, \ie sequences
\begin{equation*}
  \alpha=(x_0, \phi_1, x_1, \dotsc, x_{n-1}, \phi_n, x_n)
\end{equation*}
consisting of cells $x_i$ of $X$ and symbols $\phi_i$ which indicate face map types:
for every $i\in\{1,\dotsc, n\}$, $(x_{i-1}, \phi_i, x_i)$ is either
\begin{itemize}
\item $(\delta^0_A(x_i), \arrO{A}, x_i)$ for $A\subseteq \ev(x_i)$ (an \emph{upstep})
\item or $(x_{i-1}, \arrI{A}, \delta^1_A(x_{i-1}))$ for $A\subseteq \ev(x_{i-1})$ (a \emph{downstep}).
\end{itemize}
Downsteps terminate events, following upper face maps,
whereas upsteps start events by following inverses of lower face maps.
Both types of steps may be empty, and ${\arrO{\emptyset}}={\arrI{\emptyset}}$.

The \emph{source} and \emph{target} of $\alpha$ as above are $\src(\alpha)=x_0$ and $\tgt(\alpha)=x_n$.
The set of all paths in $X$ starting at $Y\subseteq X$ and terminating in $Z\subseteq X$
is denoted by $\Path(X)_Y^Z$.
A path $\alpha$ is \emph{accepting} if $\src(\alpha)\in \bot_X$ and $\tgt(\alpha)\in \top_X$.
Paths $\alpha$ and $\beta$ may be concatenated
if $\tgt(\alpha)=\src(\beta)$.
Their concatenation is written $\alpha*\beta$ or simply $\alpha \beta$.

\emph{Path equivalence} is the congruence $\simeq$
generated by $(z\arrO{A} y\arrO{B} x)\simeq (z\arrO{A\cup B} x)$,
$(x\arrI{A} y\arrI{B} z)\simeq (x\arrI{A\cup B} z)$, and
$\gamma \alpha \delta\simeq \gamma \beta \delta$ whenever $\alpha\simeq \beta$.
Intuitively, this relation allows to assemble subsequent upsteps or downsteps into one bigger step.
A path is \emph{sparse} if its upsteps and downsteps are alternating,
so that no more such assembling may take place.
Every equivalence class of paths contains a unique sparse path.

The observable content or \emph{event ipomset} $\ev(\alpha)$
of a path $\alpha$ is defined recursively as follows:
\begin{itemize}
\item if $\alpha=(x)$, then
  $\ev(\alpha)=\id_{\ev(x)}$;
\item if $\alpha=(y\arrO{A} x)$, then
  $\ev(\alpha)=\starter{\ev(x)}{A}$;
\item if $\alpha=(x\arrI{A} y)$, then
  $\ev(\alpha)=\terminator{\ev(x)}{A}$;
\item if $\alpha=\alpha_1*\dotsm*\alpha_n$ is a concatenation, then
  $\ev(\alpha)=\ev(\alpha_1)*\dotsm*\ev(\alpha_n)$.
\end{itemize}
Note that upsteps in $\alpha$ correspond to starters in $\ev(\alpha)$ and downsteps correspond to terminators.
Path equivalence $\alpha\simeq \beta$ implies $\ev(\alpha)=\ev(\beta)$~\cite{DBLP:conf/concur/FahrenbergJSZ22}.
Further, if $\alpha=\alpha_1*\dotsm*\alpha_n$ is a sparse path,
then $\ev(\alpha)=\ev(\alpha_1)*\dotsm*\ev(\alpha_n)$ is a sparse step decomposition.

The \emph{language} of an HDA $X$ is
$\Lang(X) = \{\ev(\alpha)\mid \alpha \text{ accepting path in } X\}$.

\begin{example}
  \label{ex:paths}
  The HDA $X$ of Figure~\ref{fi:abcube} admits several accepting paths with target $h$,
  for example $v\arrO{ab} q\arrI{a} h$.
  This is a sparse path and equivalent to the non-sparse paths
  $v\arrO{a} e\arrO{b} q\arrI{a} h$ and $v\arrO{b} g\arrO{a} q\arrI{a} h$.
  Their event ipomset is $\loset{a\\b\ibullet}$.
  In addition, since $g$ is both a start and accept cell,
  we have also $g$ and $v\arrO{b} g$ as accepting paths, with event ipomsets $\ibullet b \ibullet$ and $b \ibullet$, respectively.
  We have $\Lang(X)=\{
  b\ibullet,
  \ibullet b\ibullet,
  \loset{a\\b\ibullet}, 
  \loset{a\\\ibullet b\ibullet},
  \loset{a\\b},
  \loset{a\\\ibullet b}
  \}\down$.
\end{example}

The following property, that languages of skeleta are width restrictions of languages, is clear.

\begin{lemma}
  \label{le:skeleton}
  For any HDA $X$ and $k\ge 0$, $\Lang(X_{\le k})=\Lang(X)_{\le k}$.
\end{lemma}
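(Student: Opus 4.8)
The plan is to prove both inclusions by relating the dimensions of the cells traversed by an accepting path to the width of its event ipomset. The core observation, which I would isolate as a sublemma, is that for every path $\alpha=(x_0,\phi_1,x_1,\dots,x_n)$ in an HDA one has $\wid(\ev(\alpha))=\max_{0\le i\le n}|\ev(x_i)|$. Granting this, Lemma~\ref{le:skeleton} follows from routine bookkeeping about skeleta.

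First I would record the easy facts. The $k$-skeleton $X_{\le k}$ is a sub-precubical set: if $|\ev(x)|\le k$ and $A\subseteq\ev(x)$ then $|\ev(\delta^\nu_A(x))|=|\ev(x)|-|A|\le k$, so the face maps restrict; as an HDA it carries start and accept cells $\bot_X\cap X_{\le k}$ and $\top_X\cap X_{\le k}$. Consequently a path lies in $X_{\le k}$ if and only if it is a path of $X$ all of whose cells have dimension $\le k$, and such a path is accepting in $X_{\le k}$ exactly when it is accepting in $X$. Hence it suffices to show that an accepting path $\alpha$ of $X$ satisfies $\wid(\ev(\alpha))\le k$ if and only if every cell $x_i$ of $\alpha$ has $|\ev(x_i)|\le k$; both directions are immediate from the sublemma. (For the inclusion $\supseteq$ one uses that $\ev(\alpha)\in\Lang(X)$ means $\ev(\alpha)=\ev(\alpha)$ for the very path $\alpha$, which then already lies in $X_{\le k}$.)

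To prove the sublemma, write $\ev(\alpha)=\ev(\alpha_1)*\dots*\ev(\alpha_n)$ as the gluing of the one-step event ipomsets, each of which is a starter or terminator on the conclist $\ev(x_{i-1})$ or $\ev(x_i)$. Each event of $\ev(\alpha)$ is active exactly on a contiguous block of cells $x_\ell,\dots,x_r$: it is created by at most one upstep and destroyed by at most one downstep, and persists through all intermediate steps, and since consecutive cells of $\alpha$ differ by a single step that is either an upstep or a downstep but not both, distinct blocks never abut. From the definition of gluing one then checks that for events $e,e'$ of $\ev(\alpha)$ one has $e<_{\ev(\alpha)}e'$ exactly when $e$'s block lies entirely before $e'$'s, so $e$ and $e'$ are incomparable precisely when their blocks overlap. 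The inequality $\wid(\ev(\alpha))\ge|\ev(x_i)|$ now follows because the events active in a fixed cell $x_i$ have pairwise overlapping blocks (all containing $i$) and hence form an antichain. For $\wid(\ev(\alpha))\le\max_i|\ev(x_i)|$, take a maximal antichain $Y$ of $<_{\ev(\alpha)}$; its members have pairwise intersecting blocks, and by the one–dimensional Helly property a pairwise intersecting family of intervals has a common point $i^\ast$, so every event of $Y$ is active in $x_{i^\ast}$ and therefore $|Y|\le|\ev(x_{i^\ast})|\le\max_i|\ev(x_i)|$.

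The only genuinely non-trivial step is this last one, the localisation of a maximal antichain of $<_{\ev(\alpha)}$ at a single cell; it rests entirely on the elementary fact that pairwise intersecting intervals of a line have a common point, together with the correspondence between the precedence order of $\ev(\alpha)$ and the blocks of cells on which events are active. Everything else — closure of $X_{\le k}$ under face maps, the matching of accepting paths, and (for the statement to be well typed, since $\wid$ is monotone along $\subsu$) the fact that $\Lang(X)_{\le k}$ is again a language — is straightforward.
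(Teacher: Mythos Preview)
Your proof is correct. The paper itself does not give a proof of this lemma; it merely states that the property ``is clear'' and moves on. Your argument supplies exactly the content that the paper leaves implicit: the equality $\wid(\ev(\alpha))=\max_i|\ev(x_i)|$, from which both inclusions are immediate.

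A small remark on presentation: your Helly-based argument for the inequality $\wid(\ev(\alpha))\le\max_i|\ev(x_i)|$ is correct, but one can shortcut it. For any step decomposition $P=P_1*\dotsm*P_n$, each $P_i$ is a discrete ipomset on a conclist $U_i$, and the elements of $U_i$ are pairwise $<$-incomparable in $P$ (they are simultaneously active), so $|U_i|\le\wid(P)$. Hence every cell along any path recognising $P$ has dimension at most $\wid(P)$, and the $\supseteq$ inclusion follows directly via Lemma~\ref{l:GeneralPathDivision} without invoking Helly. This is probably what the authors had in mind when declaring the lemma clear, but your route through interval intersections is equally valid and makes the interval-order structure of $\ev(\alpha)$ explicit.
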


In the lemma below, we write $\Path(X)_Y=\Path(X)_Y^X$, $\Path(X)^Z=\Path(X)_X^Z$, and $\Path(X)=\Path(X)_X^X$.

\begin{lemma}[\cite{DBLP:journals/corr/abs-2210-08298}]
  \label{l:PathDivision}
  Let $X$ be an HDA, $x,y\in X$ and $\gamma\in\Path(X)_x^y$. 
  Assume that $\ev(\gamma)=P*Q$ for ipomsets $P$ and $Q$.
  Then there exist paths $\alpha\in\Path(X)_x$ and $\beta\in\Path(X)^y$
  such that $\ev(\alpha)=P$, $\ev(\beta)=Q$ and $\tgt(\alpha)=\src(\beta)$.
\end{lemma}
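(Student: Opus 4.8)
The plan is to induct on the length of a sparse decomposition of $\gamma$ and exploit the uniqueness of sparse step decompositions (Lemma~\ref{le:ipomsparse}) to locate, inside $\gamma$, the precise cell where the factorisation $P*Q$ takes place. Concretely, replace $\gamma$ by the unique sparse path $\gamma'=\gamma_1*\dotsm*\gamma_m$ with $\gamma\simeq\gamma'$; by the remark after the definition of $\ev$, $\ev(\gamma')=\ev(\gamma_1)*\dotsm*\ev(\gamma_m)$ is the sparse step decomposition of $\ev(\gamma)=P*Q$, and $\src(\gamma')=x$, $\tgt(\gamma')=y$. The sparse step decomposition of $P*Q$ refines (in the sense of common refinement of step decompositions) the concatenation of the sparse decomposition of $P$ with that of $Q$; more precisely there is an index $j$ together with a possible splitting of the single step $\gamma_j$ such that $\ev(\gamma_1)*\dotsm$ up to that point equals $P$ and the remainder equals $Q$. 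I would make this precise by appealing to the algebra of step decompositions developed in~\cite{conf/ramics/AmraneBCFZ24}: the sparse decomposition of a gluing $P*Q$ is obtained from the concatenation of the sparse decompositions of $P$ and $Q$ by at most one merge at the seam, so the seam sits either between two consecutive $\gamma_i$'s or in the middle of one $\gamma_j$.

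In the first case we are done immediately: set $\alpha=\gamma_1*\dotsm*\gamma_j$ and $\beta=\gamma_{j+1}*\dotsm*\gamma_m$, so that $\tgt(\alpha)=\src(\beta)=x_j$, and $\ev(\alpha)=P$, $\ev(\beta)=Q$ by construction. In the second case, the offending step $\gamma_j=(x_{j-1}\,\phi_j\,x_j)$ is a single upstep $\arrO{A}$ or downstep $\arrI{A}$ whose event ipomset $\starter{\ev(x_j)}{A}$ (resp.\ $\terminator{\ev(x_{j-1})}{A}$) must be split as a gluing of two starters (resp.\ two terminators) $\starter{U}{A_1}*\starter{U'}{A_2}$ with $A=A_1\sqcup A_2$; here $U'=\ev(x_{j-1})-A_1$ in the starter case. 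The key point is that the face maps of the precubical set let us factor this single step through the intermediate cell $\delta^0_{A_1}(x_j)$ — using the precubical identity $\delta^0_{A_1}\delta^0_{A_2}=\delta^0_{A_1\cup A_2}$ — and symmetrically $\delta^1_{B_1}(x_{j-1})$ in the terminator case. So $\gamma_j$ itself splits as $\gamma_j'*\gamma_j''$ through a genuine cell of $X$, and we take $\alpha=\gamma_1*\dotsm*\gamma_{j-1}*\gamma_j'$ and $\beta=\gamma_j''*\gamma_{j+1}*\dotsm*\gamma_m$.

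The main obstacle is the bookkeeping of \emph{where} the cut lies relative to the sparse decomposition of $\gamma$ — i.e.\ rigorously matching the sparse step decomposition of $P*Q$ against those of $P$ and of $Q$ and showing the only discrepancy is a single possible merge/split at the seam. Once that combinatorial fact about step decompositions is in hand (either cited from~\cite{conf/ramics/AmraneBCFZ24} or proved by the length/alternation argument in the spirit of Lemma~\ref{le:ipomdense}), the remaining work — factoring one face map through an intermediate cell via the precubical identities, and checking $\ev(\alpha)*\ev(\beta)=P*Q$ with $\tgt(\alpha)=\src(\beta)$ — is routine. One should also treat the degenerate cases $P=\id_U$ or $Q=\id_U$ separately: then one of $\alpha,\beta$ is the trivial path $(x)$ or $(y)$, and the statement is immediate.
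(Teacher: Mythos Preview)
Your argument is essentially correct and self-contained, but it is not the route the paper takes. The paper does not prove this lemma at all: it is imported verbatim from~\cite{DBLP:journals/corr/abs-2210-08298}, and the only hint the paper gives about the original proof is the Remark following Theorem~\ref{th:kleene}, which points to \emph{track objects} $\square^P$. In that setting one identifies paths with event ipomset $R$ in $X$ with HDA-morphisms $\square^R\to X$; the decomposition $P*Q$ then corresponds to a decomposition of $\square^{P*Q}$ along a common face, and restricting the morphism to the two pieces yields $\alpha$ and $\beta$ immediately. Your proof instead stays at the level of sparse step decompositions: you use uniqueness (Lemma~\ref{le:ipomsparse}) to match the sparse decomposition of $P*Q$ against those of $P$ and $Q$, observe that the only obstruction is a possible single merge at the seam, and undo that merge by factoring one up- or downstep through an intermediate face (implicitly using $\delta^\nu_{A_1}\circ\delta^\nu_{A_2}=\delta^\nu_{A_1\cup A_2}$, which is what makes the path-equivalence relation $(z\arrO{A}y\arrO{B}x)\simeq(z\arrO{A\cup B}x)$ well-typed). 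This is more bookkeeping than the track-object argument but has the advantage of being entirely elementary --- which, as the introduction stresses, is precisely the methodological stance this paper adopts. The one place to be careful is your appeal to~\cite{conf/ramics/AmraneBCFZ24} for the ``at most one merge at the seam'' fact; it is easy enough to prove directly here (compare the last step of $P$'s sparse decomposition with the first step of $Q$'s), and doing so would keep the argument self-contained.
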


\begin{lemma}
  \label{l:GeneralPathDivision}
  Let $X$ be an HDA, $P\in \Lang(X)$ and $P=P_1*\dotsm*P_n$ be any decomposition (not necessarily a step decomposition).
  Then there exists an accepting path $\alpha=\alpha_1*\dotsm*\alpha_n$ in $X$
  such that $\ev(\alpha_i)=P_i$ for all $i$.
  If $P=P_1*\dotsm*P_n$ is a sparse step decomposition, then $\alpha=\alpha_1*\dotsm*\alpha_n$ is sparse.
\end{lemma}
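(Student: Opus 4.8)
The plan is to reduce the statement to Lemma~\ref{l:PathDivision} by an induction on $n$, and then to get the sparseness addendum essentially for free from uniqueness of sparse step decompositions (Lemma~\ref{le:ipomsparse}).

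First I would fix, using $P\in\Lang(X)$, an accepting path $\gamma\in\Path(X)_x^y$ with $\ev(\gamma)=P$, $x\in\bot_X$ and $y\in\top_X$. The core is the following claim, proved by induction on $n$: for any $\gamma\in\Path(X)_x^y$ and any decomposition $\ev(\gamma)=P_1*\dotsm*P_n$ there exist paths $\alpha_1,\dotsc,\alpha_n$ with $\ev(\alpha_i)=P_i$, $\src(\alpha_1)=x$, $\tgt(\alpha_n)=y$, and $\tgt(\alpha_i)=\src(\alpha_{i+1})$ for $i<n$. The base case $n=1$ is trivial, taking $\alpha_1=\gamma$. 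For the inductive step I write $\ev(\gamma)=P_1*Q$ with $Q=P_2*\dotsm*P_n$ (a well-defined ipomset, since definedness of the full gluing $P_1*\dotsm*P_n$ together with associativity of gluing guarantees that every sub-product is defined), and apply Lemma~\ref{l:PathDivision} to obtain $\alpha_1\in\Path(X)_x$ and $\beta\in\Path(X)^y$ with $\ev(\alpha_1)=P_1$, $\ev(\beta)=Q$ and $\tgt(\alpha_1)=\src(\beta)$. Applying the induction hypothesis to $\beta\in\Path(X)_{\tgt(\alpha_1)}^y$ and the decomposition $Q=P_2*\dotsm*P_n$ yields $\alpha_2,\dotsc,\alpha_n$, which together with $\alpha_1$ satisfy the claim. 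Then $\alpha=\alpha_1*\dotsm*\alpha_n$ is a path with $\ev(\alpha_i)=P_i$, and since $\src(\alpha)=x\in\bot_X$ and $\tgt(\alpha)=y\in\top_X$ it is accepting, proving the first assertion.

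For the sparse addendum I would not try to control the $\alpha_i$ produced above; instead I restart from a sparse accepting path. The accepting path $\gamma$ is path-equivalent to a unique sparse path $\gamma'$, and since $\simeq$ preserves source, target and event ipomset, $\gamma'$ is still accepting with $\ev(\gamma')=P$. Decomposing $\gamma'$ into its individual (alternating) up- and downsteps $\gamma'=\gamma'_1*\dotsm*\gamma'_m$, the induced $\ev(\gamma')=\ev(\gamma'_1)*\dotsm*\ev(\gamma'_m)$ is a sparse step decomposition of $P$. By Lemma~\ref{le:ipomsparse} the sparse step decomposition of $P$ is unique, so $m=n$ and $\ev(\gamma'_i)=P_i$ for all $i$; taking $\alpha_i=\gamma'_i$ gives the required sparse accepting path.

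I do not expect a serious obstacle: Lemma~\ref{l:PathDivision} does almost all of the work. The only points needing care are (i) noting that the partial gluing $P_2*\dotsm*P_n$ is indeed a well-defined ipomset, and (ii) the observation that for the sparse case one must pass to the sparse representative $\gamma'$ of an accepting path rather than reuse the pieces from the inductive construction, since Lemma~\ref{l:PathDivision} gives no guarantee that those pieces are single steps.
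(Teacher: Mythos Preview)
Your first part is exactly the paper's argument: induction on $n$ using Lemma~\ref{l:PathDivision}. The paper states this in one line; you have simply unfolded it carefully, including the observation that the partial product $P_2*\dotsm*P_n$ is well-defined.

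For the sparse addendum you take a different route than the paper. The paper argues directly from the $\alpha_i$ produced by the induction: if $P_i$ is a starter then every event of $P_i$ lies in its target interface, so no step of $\alpha_i$ can terminate anything, hence $\alpha_i$ consists entirely of upsteps and may be replaced (via path equivalence) by a single upstep; dually for terminators. Alternation of the $P_i$ then forces alternation of upsteps and downsteps in $\alpha$. Your approach instead discards the inductively built $\alpha_i$, passes to the sparse representative $\gamma'$ of an accepting path, and matches its step decomposition against $P_1*\dotsm*P_n$ via the uniqueness of sparse step decompositions (Lemma~\ref{le:ipomsparse}). Both arguments are correct; yours avoids having to analyse which paths can have a starter or terminator as event ipomset, at the cost of invoking an extra uniqueness lemma, while the paper's argument stays closer to the objects already constructed.
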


\begin{proof}
  The first claim follows from Lemma~\ref{l:PathDivision} by induction.
  As to the second, if starters and terminators are alternating in $P_1*\dotsm*P_n$,
  then upsteps and downsteps are alternating in $\alpha_1*\dotsm*\alpha_n$.
\end{proof}

Languages of HDAs are sets of (interval) ipomsets which are closed under subsumption~\cite{DBLP:conf/concur/FahrenbergJSZ22},
\ie languages in our sense.
A language is \emph{regular} if it is the language of a finite HDA.
  
\begin{theorem}[\cite{DBLP:conf/concur/FahrenbergJSZ22}]
  \label{th:kleene}
  A language is regular iff it is rational.
\end{theorem}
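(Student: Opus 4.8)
The plan is to prove the two inclusions separately, following the standard Kleene-theorem template but adapted to the ipomset setting. Let me call a language \emph{regular} if it is $\Lang(X)$ for a finite HDA $X$, and \emph{rational} if it lies in the smallest class containing the basic languages $\emptyset$, $\{\epsilon\}$, $\{[a]\}$, $\{[\ibullet a]\}$, $\{[a\ibullet]\}$, $\{[\ibullet a\ibullet]\}$ and closed under $\cup$, $*$, $\parallel$, and $^+$.

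\medskip
\textbf{Rational $\Rightarrow$ regular.} I would proceed by structural induction on the construction of a rational language. For the base cases, each basic language is visibly the language of a small finite HDA: $\emptyset$ is accepted by an HDA with no accepting cells; $\{\epsilon\}$ by a single point cell that is both start and accept with $\ev$-value the empty conclist; and the four singleton ipomsets $[a], [\ibullet a], [a\ibullet], [\ibullet a\ibullet]$ are each accepted by a one- or two-cell HDA built from a single edge cell $x$ with $\ev(x)=\{a\}$, choosing the start/accept cells among $\{\delta^0_a(x), x, \delta^1_a(x)\}$ appropriately (the interfaces of the accepted ipomset are recorded precisely by whether the start/accept cell is the edge itself or one of its endpoints). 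For the inductive step I need the three closure constructions. Union is a disjoint union of HDAs, taking the union of the start and accept sets; then $\Lang(X \sqcup Y) = \Lang(X) \cup \Lang(Y)$ is immediate since accepting paths in a disjoint union lie entirely in one component. Gluing composition $L*M$ is obtained by identifying accept cells of $X$ with start cells of $Y$ that carry the same conclist --- more precisely, one forms a pushout-like construction gluing $\top_X$ to $\bot_Y$ along the conclist structure; an accepting path in the glued HDA decomposes at the gluing locus into an accepting path of $X$ followed by one of $Y$, and Lemma~\ref{l:PathDivision} (or rather its converse direction, concatenation of paths) gives $\ev$ of the concatenation is the gluing of the $\ev$'s. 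Parallel composition $L \parallel M$ is handled by a tensor/product construction on precubical sets: cells of $X \otimes Y$ are pairs $(x,y)$ with $\ev(x,y) = \ev(x) \parallel \ev(y)$, face maps act componentwise, start cells are $\bot_X \times \bot_Y$ and accept cells $\top_X \times \top_Y$; a path in the product projects to a pair of paths whose event ipomsets compose in parallel. Finally Kleene plus $L^+$ is obtained from the construction for $*$ by adding $\epsilon$-type identifications looping accept cells back to start cells (again along matching conclists), or simply as $\bigcup_{n\ge1} L^n$ using finiteness arguments plus the $*$-construction iterated with a fixed-point; the subsumption closure $\down$ in the definitions of the rational operations is automatic because $\Lang(X)$ is always closed under subsumption (stated in the excerpt, from~\cite{DBLP:conf/concur/FahrenbergJSZ22}).

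\medskip
\textbf{Regular $\Rightarrow$ rational.} Here I would use the Myhill--Nerode characterization, Theorem~\ref{th:MN}: it suffices to show that if $L = \Lang(X)$ for a finite HDA $X$, then $\suff(L) = \{P \backslash L \mid P \in \iiPoms\}$ is finite. The key observation is that the prefix quotient $P \backslash L$ is determined by the set of cells reachable from $\bot_X$ by an accepting-prefix path with event ipomset $P$: namely, for $Y \subseteq X$ let $L_Y = \{\ev(\beta) \mid \beta \in \Path(X)_Y^{\top_X}\}$ be the language ``from $Y$ onward''; then using Lemma~\ref{l:GeneralPathDivision} one shows $P \backslash L = \bigcup \{ L_{\{y\}} : y \text{ is } \tgt \text{ of some path from } \bot_X \text{ with event ipomset } P, \text{ compatible interfaces}\}$, which depends only on the (finite) set of such target cells $y$ together with a bounded amount of interface bookkeeping. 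Since $X$ has finitely many cells, there are only finitely many such subsets, hence $\suff(L)$ is finite, hence $L$ is rational by Theorem~\ref{th:MN}. One has to be a little careful that $P \backslash L$ sees not just a cell but also which events are ``currently running'', but this is captured by $\ev(y)$, which ranges over the finite set $\square$ restricted to conclists actually occurring in $X$.

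\medskip
\textbf{Main obstacle.} The routine parts are the base cases and union. The genuinely delicate constructions are the gluing $*$ and the parallel $\parallel$ on HDAs: one must verify that the glued/product object is a bona fide precubical set (the precubical identities $\delta^\nu_A \delta^\mu_B = \delta^\mu_B \delta^\nu_A$ must survive the identifications in the gluing case and hold componentwise in the tensor case), and then prove the language equation in \emph{both} directions --- that every accepting path of the composite factors through the composition, and conversely that any pair of composable accepting paths assembles into one. For gluing, the subtlety is matching conclists ``on the nose'' (the same issue of conflating isomorphism and equality flagged in the excerpt's discussion of $*$ on ipomsets), and ensuring the event order agrees on the glued interface. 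For the $\Rightarrow$ direction, the main care is in the interface bookkeeping: a prefix quotient $P \backslash L$ must remember the target interface $T_P$ of $P$ (which events are mid-execution), so the finite invariant attached to a reachable configuration is a pair (set of reachable cells, conclist of running events); confirming this pair determines $P\backslash L$ and takes finitely many values is where the argument must be made precise. I expect the parallel-composition construction and this interface-tracking step to be the two points needing the most attention; everything else follows the classical pattern.
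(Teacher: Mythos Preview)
The paper does not prove this theorem: it is stated with a citation to~\cite{DBLP:conf/concur/FahrenbergJSZ22} and no argument is supplied. So there is no in-paper proof to compare against, only the original source.

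On the merits of your plan: the direction \emph{rational $\Rightarrow$ regular} follows the standard template and matches what~\cite{DBLP:conf/concur/FahrenbergJSZ22} actually does --- structural induction with explicit finite-HDA constructions for $\cup$, $*$, $\parallel$, and $^+$. The obstacles you flag (precubical identities surviving the gluing, conclist matching at interfaces, both inclusions of the language equation) are the right ones.

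The direction \emph{regular $\Rightarrow$ rational}, however, is circular as you have set it up. You propose to show that $\suff(\Lang(X))$ is finite for finite $X$ and then invoke Theorem~\ref{th:MN}. But Theorem~\ref{th:MN} comes from~\cite{DBLP:journals/corr/abs-2210-08298}, which \emph{postdates} the Kleene theorem; the implication ``finite $\suff$ $\Rightarrow$ rational'' there is obtained by constructing a finite HDA from the quotients (yielding \emph{regular}) and then appealing to Theorem~\ref{th:kleene} itself to pass to \emph{rational}. Indeed the introduction of the present paper describes the Myhill--Nerode result as ``regular languages are precisely those that have finite prefix quotient''; the restatement with ``rational'' in Theorem~\ref{th:MN} already uses Kleene. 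Unwound, your argument reads regular $\Rightarrow$ finite $\suff$ $\Rightarrow$ regular $\Rightarrow$ rational, the last arrow being exactly what you set out to prove. What you have correctly sketched is one half of Myhill--Nerode (regular $\Rightarrow$ finite $\suff$), not regular $\Rightarrow$ rational. The proof in~\cite{DBLP:conf/concur/FahrenbergJSZ22} instead extracts a rational expression directly from a finite HDA via an elimination procedure adapted to the higher-dimensional setting; supplying that direct extraction is the actual content of this direction.
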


\begin{remark}
  Every ipomset $P$ may be converted into a \emph{track object} $\square^P$,
  see~\cite{DBLP:conf/concur/FahrenbergJSZ22},
  which is an HDA with the property that
  for any HDA $X$, $P\in \Lang(X)$ iff there is a morphism $\square^P\to X$.
  This is notably useful in proofs, for example of Lemma~\ref{l:PathDivision} above.
\end{remark}

\section{Regular and non-regular languages}
\label{se:stuff}

We exhibit a property, similar to the pumping lemma for word languages, which allows to show that some languages are non-regular.
Afterwards we show that intersections of regular languages are again regular.

\subsection{Pumping lemma}

\begin{lemma}
  \label{le:localpump}
  Let $L$ be a regular language.
  There exists $k\in \Nat$ such that for any $P\in L$, 
  any decomposition $P=Q_1*\dotsm*Q_n$
  with $n>k$ and any $0\le m\le n-k$
  there exist $i,j$ such that $m\le i< j\le m+k$
  and $Q_1*\dotsm*Q_{i}*(Q_{i+1}*\dotsm*Q_{j})^+*Q_{j+1}*\dotsm*Q_n\subseteq L$.
\end{lemma}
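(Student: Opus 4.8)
The plan is to reduce the statement to the classical pigeonhole argument for finite automata, using the machinery already set up in the excerpt. Let $X$ be a finite HDA with $\Lang(X)=L$, and set $k=|X|$, the number of cells of $X$. Given $P\in L$ and a decomposition $P=Q_1*\dotsm*Q_n$ with $n>k$, apply Lemma~\ref{l:GeneralPathDivision} to obtain an accepting path $\alpha=\alpha_1*\dotsm*\alpha_n$ with $\ev(\alpha_\ell)=Q_\ell$ for every $\ell$. Write $y_\ell=\tgt(\alpha_\ell)=\src(\alpha_{\ell+1})$ for the intermediate cells, so that $y_0=\src(\alpha)\in\bot_X$ and $y_n=\tgt(\alpha)\in\top_X$. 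For a fixed $0\le m\le n-k$, the cells $y_m,y_{m+1},\dotsc,y_{m+k}$ form a list of $k+1$ cells of $X$; since $|X|=k$, two of them coincide, say $y_i=y_j$ with $m\le i<j\le m+k$.

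The next step is to exhibit the pumped paths. Let $\gamma=\alpha_{i+1}*\dotsm*\alpha_j$; this is a path from $y_i$ to $y_j=y_i$, i.e.\ a loop at $y_i$, with $\ev(\gamma)=Q_{i+1}*\dotsm*Q_j$. Since $\tgt(\gamma)=\src(\gamma)$, for every $r\ge 1$ the $r$-fold concatenation $\gamma^r$ is again a well-defined path from $y_i$ to $y_i$, and by the recursive definition of $\ev$ on concatenations we have $\ev(\gamma^r)=(Q_{i+1}*\dotsm*Q_j)^r$ (the gluing condition $T=S$ is automatically satisfied because we are literally concatenating the same loop, whose source and target cell, hence source and target conclist, agree). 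Now for each $r\ge 1$ the path
\begin{equation*}
  \beta_r=\alpha_1*\dotsm*\alpha_i*\gamma^r*\alpha_{j+1}*\dotsm*\alpha_n
\end{equation*}
is well-defined (the endpoints match: $\tgt(\alpha_i)=y_i=\src(\gamma^r)$ and $\tgt(\gamma^r)=y_i=y_j=\src(\alpha_{j+1})$), and it is accepting since $\src(\beta_r)=y_0\in\bot_X$ and $\tgt(\beta_r)=y_n\in\top_X$. Hence $\ev(\beta_r)=Q_1*\dotsm*Q_i*(Q_{i+1}*\dotsm*Q_j)^r*Q_{j+1}*\dotsm*Q_n\in\Lang(X)=L$. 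Taking the union over all $r\ge 1$ and recalling the definition of Kleene plus on languages gives
\begin{equation*}
  Q_1*\dotsm*Q_i*(Q_{i+1}*\dotsm*Q_j)^+*Q_{j+1}*\dotsm*Q_n\subseteq L,
\end{equation*}
where one also uses that $L$ is closed under subsumption so that the downward closure implicit in the rational operation $*$ stays inside $L$.

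The only genuinely delicate point is bookkeeping about \emph{gluing versus free concatenation}: the rational operation $L_1*L_2$ on languages only glues ipomsets whose interfaces match, and moreover applies a subsumption closure, whereas the path concatenations I use are honest concatenations of paths in $X$. I expect this to be the main obstacle, though a mild one. It is resolved by observing (i) that all the relevant interfaces match on the nose because they are read off from the types $\ev(y_\ell)$ of genuine cells occurring along a single path, so no isomorphism-quotient subtlety arises; and (ii) that $\Lang(X)$ is a language in the technical sense, i.e.\ already closed under $\down$, so absorbing the subsumption closure in the definition of $(Q_{i+1}*\dotsm*Q_j)^+$ and of the outer $*$-products is harmless. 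A secondary minor point is that Lemma~\ref{l:GeneralPathDivision} must be invoked with the given (arbitrary) decomposition, not a step decomposition — but the lemma is stated precisely for arbitrary decompositions, so this is immediate. Note also that $k$ depends only on $L$ (via the chosen finite $X$), as required.
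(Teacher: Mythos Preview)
Your proof is correct and follows essentially the same route as the paper's: choose a finite HDA $X$ for $L$, invoke Lemma~\ref{l:GeneralPathDivision} to split an accepting path along the given decomposition, and apply the pigeonhole principle to the $k+1$ intermediate cells $y_m,\dotsc,y_{m+k}$ to find a loop that can be pumped. The only cosmetic difference is that you take $k=|X|$ while the paper takes $k>|X|$ (both work), and you spell out the bookkeeping about interface matching and subsumption closure that the paper leaves implicit.
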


\begin{proof}
  Let $X$ be an HDA accepting $L$ and $k > |X|$.
  By Lemma~\ref{l:GeneralPathDivision}
  there exists an accepting path $\alpha= \alpha_1 * \dotsm * \alpha_n$
  such that $\ev(\alpha_i) = Q_i$ for all $i$, and $\ev(\alpha) =P$.
  Denote $x_i=\tgt(\alpha_i)=\src(\alpha_{i+1})$.
  Amongst the cells $x_m,\dotsc,x_{m+k}$ there are at least two equal,
  say $x_i=x_j$, $m\le i<j\le m+k$.
  As a consequence, $\src(\alpha_{i+1})=\tgt(\alpha_j)$,
  and for every $r\ge 1$,
  \begin{equation*}
    \alpha_1 * \dotsm * \alpha_i * (\alpha_{i+1}*\dotsm *\alpha_j)^r*\alpha_{j+1}*\dotsm*\alpha_n
  \end{equation*}
  is an accepting path that recognizes
  \begin{equation*}
    Q_1*\dotsm*Q_{i}*(Q_{i+1}*\dotsm*Q_{j})^r*Q_{j+1}*\dotsm*Q_n.
  \end{equation*}
  \qed
\end{proof}

\begin{corollary}
  Let $L$ be a regular language.
  There exists $k\in \Nat$ such that any $P\in L$ with $\size(P) > k$ 
  can be decomposed into $P=Q_1*Q_2*Q_3$
  such that $Q_2$ is not an identity and $Q_1*Q_2^+*Q_3\subseteq L$.
\end{corollary}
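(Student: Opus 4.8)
The plan is to derive this corollary as a direct consequence of the pumping lemma (Lemma~\ref{le:localpump}) applied to a sufficiently fine decomposition of $P$. First I would take the constant $k$ from Lemma~\ref{le:localpump} for the language $L$, and I would show the corollary holds with (say) the same $k$. Given $P\in L$ with $\size(P)>k$, the key observation is that $P$ admits a dense step decomposition $P=R_1*\dotsm*R_N$, and by Lemma~\ref{le:ipomdense} its length is $N=2\,\size(P)>2k>k$. So Lemma~\ref{le:localpump} applies to this decomposition with, say, $m=0$: there exist $i<j$ with $0\le i<j\le k$ such that
\begin{equation*}
  R_1*\dotsm*R_i*(R_{i+1}*\dotsm*R_j)^+*R_{j+1}*\dotsm*R_N\subseteq L.
\end{equation*}

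Then I would set $Q_1=R_1*\dotsm*R_i$ (interpreted as the appropriate identity ipomset $\id_{S_P}$ if $i=0$), $Q_2=R_{i+1}*\dotsm*R_j$, and $Q_3=R_{j+1}*\dotsm*R_N$. Since $i<j$, the block $Q_2$ is a nonempty gluing of at least one elementary starter or terminator; each such factor has size $\tfrac12$, so $\size(Q_2)=\tfrac12(j-i)>0$, hence $Q_2$ is not an identity (identities are exactly the ipomsets of size $0$). The inclusion $Q_1*Q_2^+*Q_3\subseteq L$ is then exactly the conclusion of the pumping lemma rewritten in this notation, so we are done.

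The only mild subtlety — and the step I would be most careful about — is the bookkeeping at the endpoints of the decomposition: when $i=0$ the factor $Q_1$ must be read as an identity so that $Q_1*Q_2$ is well-defined, and one should note that $Q_1*(Q_2)^+*Q_3$ as a set of ipomsets coincides with the set appearing in Lemma~\ref{le:localpump}, using that gluing of a single ipomset with an identity returns that ipomset. None of this requires real work; it is just a matter of unwinding the definition of $*$ on languages and the convention $\id_U=\starter{U}{\emptyset}=\terminator{U}{\emptyset}$. There is no genuine obstacle here: the corollary is essentially a restatement of the lemma specialized to dense step decompositions, with the extra remark that a nonempty such block has positive size and therefore is not an identity.
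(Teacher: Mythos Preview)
Your proposal is correct and follows exactly the approach sketched in the paper: apply Lemma~\ref{le:localpump} to a dense step decomposition of $P$, whose length $2\,\size(P)$ exceeds $k$ by Lemma~\ref{le:ipomdense}, and observe that the resulting middle block is a nonempty gluing of elementary starters/terminators and hence not an identity. The paper's own proof is a one-line reference to this strategy; your version simply spells out the bookkeeping.
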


The proof follows by applying Lemma~\ref{le:localpump} to a dense step decomposition $P=Q_1*\dots* Q_{2\, \size(P)}$,
\cf Lemma~\ref{le:ipomdense}.
We may now expose a language which is not regular.

\begin{proposition}
  \label{pr:lnotreg}
  The language $L=\{\loset{a\\a}^n*a^n\mid n\ge 1\}\down$ is not regular.
\end{proposition}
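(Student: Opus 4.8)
The plan is to use the pumping lemma (Lemma~\ref{le:localpump}), via its corollary, to derive a contradiction. First I would assume $L=\{\loset{a\\a}^n*a^n\mid n\ge 1\}\down$ is regular, so there is a constant $k\in\Nat$ as in Lemma~\ref{le:localpump}. Fix $n$ large (say $n>k$) and consider the ipomset $P_n=\loset{a\\a}^n*a^n\in L$. The crucial observation is that every element of $L$ that has width $2$ must, up to subsumption, look like $\loset{a\\a}^p*a^q$ with $p\le q$: the maximal $<$-antichain has size $2$ only in the part built from the autoconcurrent squares $\loset{a\\a}$, and the ``tail'' of linearly ordered $a$'s must be at least as long because of the subsumption closure (a width-$2$ element subsumed by $\loset{a\\a}^n*a^n$ can only shorten or reorder, never create more autoconcurrency than there were squares, and the serial $a$'s downstream are forced). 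I would isolate this as a small structural claim about which ipomsets lie in $L_{\le 2}$, proved by inspecting the definition of $\subsu$ and of gluing.

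Next I would pick a decomposition of $P_n$ to which the pumping lemma applies so that the pumped block necessarily increases the number of $\loset{a\\a}$ squares without touching the tail. Concretely, take the dense step decomposition of $P_n$ (Lemma~\ref{le:ipomdense}), whose length is $2\,\size(P_n)=2(2n+\tfrac{n}{1}\cdot\tfrac12\cdot 0\ldots)$ — more carefully, $\size(P_n)=|P_n|=3n$ since $P_n$ has empty interfaces, so the dense decomposition has length $6n$. Applying Lemma~\ref{le:localpump} with $m$ chosen so that the window $[m,m+k]$ falls strictly inside the ``square'' prefix $\loset{a\\a}^n$ (possible since that prefix contributes $4n>k+$const steps once $n$ is large), we obtain indices $i<j$ with $Q_1*\dots*Q_i*(Q_{i+1}*\dots*Q_j)^+*Q_{j+1}*\dots*Q_n\subseteq L$, where the repeated block $Q_{i+1}*\dots*Q_j$ is a nonempty gluing of elementary starters and terminators drawn entirely from inside the square prefix, hence its event ipomset is a nonidentity width-$\le 2$ ipomset $R$ consisting only of starting and terminating $a$'s with no full serial $a$ downstream.

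Then I would pump: for each $r\ge 1$ the ipomset $W_r$ obtained by inserting $r$ copies of the block is in $L$, and $W_r$ has the form (square prefix of some length $p_r$) $*\,a^n$, with $p_r\to\infty$ as $r\to\infty$, while the serial tail stays $a^n$. But the structural claim forces $p_r\le n$ for every element of $L_{\le 2}$, giving $p_r\le n$ for all $r$ — contradicting $p_r\to\infty$. I expect the main obstacle to be the bookkeeping in the structural claim: one must argue carefully that a nonidentity block extracted from the dense decomposition of the square prefix really does add net autoconcurrency (i.e. that the cycle in the accepting path genuinely loops within the ``2-dimensional'' part of the HDA and is not, say, a trivial loop), and that subsumption closure does not secretly let the longer-square ipomsets back into $L$ by reordering. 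Making ``the block lies inside the square prefix and increases the square count'' precise — possibly by choosing $n$ a multiple of the relevant constants and counting starters minus terminators of label $a$ in the block — is the technical heart of the argument; everything else is routine application of Lemma~\ref{le:localpump} and Lemma~\ref{le:ipomdense}.
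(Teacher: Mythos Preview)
Your high-level strategy---pump inside the ``square'' prefix of $P_n$---matches the paper's second proof, but your choice of the \emph{dense} step decomposition creates a genuine gap. The pumped block $Q_{i+1}*\dotsm*Q_j$ is determined by whichever cells happen to repeat in the (unknown) accepting HDA, and with the dense decomposition this block need not add any autoconcurrency at all. Concretely, inside the dense decomposition of $\loset{a\\a}^n$ the conclists cycle $\emptyset,[a],\loset{a\\a},[a],\emptyset,\dots$, and if the repeated cell sits at a dimension-$1$ position just after a square ends and again just after the next square begins, the block is $[\ibullet a]*[a\ibullet]$---two \emph{serial} $a$'s with interfaces. Pumping this yields $\loset{a\\a}*a^{r-1}*\loset{a\\a}^{n-1}*a^n$, which is not of the form ``(more squares)$*a^n$'': the square count stays at $n$ while serial $a$'s are inserted in the middle. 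So your assertion that $W_r$ has the form $\loset{a\\a}^{p_r}*a^n$ with $p_r\to\infty$ is false. Your structural claim about $L_{\le 2}$ is also false as stated: for instance $aa*\loset{a\\a}*aa\subsu\loset{a\\a}^2*a^2$, so $aa*\loset{a\\a}*aa\in L$ yet it is not of the shape $\loset{a\\a}^p*a^q$.

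The paper avoids all of this by applying Lemma~\ref{le:localpump} to the much coarser decomposition $P=\loset{a\\a}*\dotsm*\loset{a\\a}*a^k$ with $k$ factors $\loset{a\\a}$ and one tail $a^k$; taking $m=0$ forces the repeated block to be $\loset{a\\a}^{j-i}$, so the pumped ipomset is literally $\loset{a\\a}^{k+(j-i)(r-1)}*a^k$, and the contradiction (more concurrent pairs than any generator of $L$ with the same cardinality can supply) is immediate. The paper also gives a one-line alternative via Theorem~\ref{th:MN}: the quotients $\loset{a\\a}^k\backslash L$ are pairwise distinct, so $\suff(L)$ is infinite. Either route is far shorter than rescuing the dense-decomposition argument.
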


Note that the restriction
$L_{\le 1}=(a a a)^+$
\emph{is} regular,
showing that regularity of languages may not be decided by restricting to their one-dimensional parts.

\begin{proof}[of Proposition~\ref{pr:lnotreg}]
  We give two proofs.
  The first uses Theorem~\ref{th:MN}:
  for every $k\ge 1$,
  $\loset{a\\a}^k\backslash L=\{\loset{a\\a}^n*a^{n+k}\mid n\ge 0\}\down$,
  and these are different for different $k$, so $\suff(L)$ is infinite.

  The second proof uses Lemma~\ref{le:localpump}.
  Assume $L$ to be regular,
  let $k$ be the constant from the lemma,
  and take $P=\loset{a\\a}^k*a^k=Q_1*\dots*Q_k*Q_{k+1}$,
  where $Q_1=\dotsm=Q_k=\loset{a\\a}$ and $Q_{k+1}=a^k$.
  For $m=0$ we obtain that $\loset{a\\a}^{k+(j-i)r}a^k\in L$
  for all $r$ and some $j-i>0$: a contradiction.
  \qed
\end{proof}

We may strengthen the above result
to show that regularity of languages may not be decided by restricting to their $k$-dimensional parts for any $k\ge 1$.
For $a\in \Sigma$ let $a^{\|_1}=a$ and $a^{\|_k}=a\parallel a^{\|_{k-1}}$ for $k\ge 2$:
the $k$-fold parallel product of $a$ with itself.
Now let $k\ge 1$ and
\begin{equation*}
  L = \big\{ (a^{\|_{k+1}})^n * P^n \bigmid n\ge 0, P\in \{a^{\|_{k+1}}\}\down - \{ a^{\|_{k+1}}\} \big\}\down.
\end{equation*}
The idea is to remove from the right-hand part of the expression
precisely the only ipomset of width $k+1$.
Using the same arguments as above one can show that $L$ is not regular,
but $L_{\le k}=((\{a^{\|_{k+1}}\}\down - \{ a^{\|_{k+1}}\})^2)^+$ is.

Yet the $k$-restrictions of any regular language remain regular:

\begin{proposition}
  \label{prop:regrestriction}
  Let $k \ge 0$. If $L\in \Langs$ is regular, then so is $L_{\le k}$.
\end{proposition}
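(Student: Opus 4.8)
The plan is to pass to the skeleton of an accepting HDA. Suppose $L$ is regular, so $L = \Lang(X)$ for some finite HDA $X$. By Lemma~\ref{le:skeleton} we have $\Lang(X_{\le k}) = \Lang(X)_{\le k} = L_{\le k}$. Since the $k$-skeleton $X_{\le k}$ is again a finite precubical set (it simply discards all cells of dimension exceeding $k$, and face maps land in lower dimensions, so no face map is lost), it inherits start and accept cells from $X$ and is a finite HDA. Hence $L_{\le k}$ is the language of a finite HDA, i.e.\ regular. That is essentially the whole argument.

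The one point that deserves a sentence of care is why no accepting path of $X$ producing an ipomset of width $\le k$ is lost when we delete the high-dimensional cells. This is exactly the content of Lemma~\ref{le:skeleton}, so I would just cite it; but if one wanted to see it directly, the reason is that a path visiting a cell $x$ with $|\ev(x)| > k$ must at some point have $>k$ concurrently active events, and those events form a $<$-antichain of size $>k$ in the event ipomset of the path, forcing $\wid(\ev(\alpha)) > k$. Conversely, any accepting path of $X_{\le k}$ is an accepting path of $X$, so no spurious ipomsets are added. Thus $\Lang(X_{\le k})$ is precisely $L_{\le k}$.

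There is no real obstacle here; the statement is an immediate corollary of Lemma~\ref{le:skeleton} together with the observation that finiteness is preserved by taking skeleta. The only thing to double-check is that $X_{\le k}$, equipped with $\bot_{X_{\le k}} = \bot_X \cap X_{\le k}$ and $\top_{X_{\le k}} = \top_X \cap X_{\le k}$, is a legitimate HDA in the sense defined in the paper — which it is, since the precubical identities restrict without issue and the set of cells stays finite. Hence $L_{\le k}$ is regular.
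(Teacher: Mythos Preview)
Your proof is correct and is exactly the paper's approach: the paper's proof is literally the single line ``By Lemma~\ref{le:skeleton}.'' Your additional remarks about why $X_{\le k}$ is a finite HDA and why no low-width accepting path is lost are fine elaborations, but not needed beyond the cited lemma.
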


\begin{proof}
  By Lemma \ref{le:skeleton}. \qed
\end{proof}

\subsection{Intersection of regular languages}
\label{se:intersection}

By definition,
the regular languages are closed under union, parallel composition, gluing composition, and Kleene plus.
Here we show that they are also closed under intersection.
(For complement this is more complicated, as we will see later.)

\begin{proposition}
  \label{pr:cap}
  The regular languages are closed under $\cap$.
\end{proposition}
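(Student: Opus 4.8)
The plan is to prove closure under intersection via an explicit product construction on HDAs, following the classical automata-theoretic template but adapted to the cubical setting. Given finite HDAs $X$ and $Y$ with $\Lang(X)=L$ and $\Lang(Y)=M$, I would build a product HDA $X\times Y$ whose cells are pairs $(x,y)$ with $\ev(x)=\ev(y)$ — that is, $(X\times Y)[U]=X[U]\times Y[U]$ for each conclist $U$ — with face maps acting coordinatewise, $\delta_A^\nu(x,y)=(\delta_A^\nu x,\delta_A^\nu y)$, and start/accept cells $\bot_{X\times Y}=\bot_X\times\bot_Y$, $\top_{X\times Y}=\top_X\times\top_Y$. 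One checks the precubical identities hold because they hold in each coordinate. This is finite since $X$ and $Y$ are.

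The core claim is then $\Lang(X\times Y)=\Lang(X)\cap\Lang(Y)$. For the inclusion $\subseteq$, any accepting path in $X\times Y$ projects to accepting paths in $X$ and in $Y$ with the same event ipomset (projections are precubical morphisms preserving start/accept cells, hence preserve $\ev$ of paths). For $\supseteq$, suppose $P\in L\cap M$. Take the sparse step decomposition $P=P_1*\dotsm*P_n$ (Lemma~\ref{le:ipomsparse}); by Lemma~\ref{l:GeneralPathDivision} there is a sparse accepting path $\alpha=\alpha_1*\dotsm*\alpha_n$ in $X$ and likewise a sparse accepting path $\beta=\beta_1*\dotsm*\beta_n$ in $Y$, each $\alpha_i,\beta_i$ realizing the same step $P_i$. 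Since corresponding steps have the same type (starter or terminator with the same underlying conclists), the cells visited satisfy $\ev(\mathrm{tgt}(\alpha_i))=\ev(\mathrm{tgt}(\beta_i))$ and the face-map labels $\phi_i$ agree, so the pairs $(\mathrm{src}(\alpha_i),\mathrm{src}(\beta_i))$ and the same steps assemble into a legitimate path in $X\times Y$ that is accepting and has event ipomset $P$. Finally, $\Lang(X\times Y)$ is automatically subsumption-closed (it is an HDA language), which matches $L\cap M$ being a language — note one must check $L\cap M$ is itself subsumption-closed, which is immediate since $L$ and $M$ both are.

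The main obstacle is the synchronization of the two accepting paths: a priori the sparse paths $\alpha$ and $\beta$ decompose $P$ the same way only because the sparse step decomposition is \emph{unique} (Lemma~\ref{le:ipomsparse}), which is precisely what lets us align them cell-by-cell into the product. If one tried to work with arbitrary (e.g. dense) decompositions the alignment could fail, so invoking uniqueness of the sparse decomposition — and the ``sparse'' clause of Lemma~\ref{l:GeneralPathDivision} — is the crux. A secondary subtlety is bookkeeping that empty upsteps/downsteps ($\arrO{\emptyset}=\arrI{\emptyset}$) cause no mismatch, but this is routine. Alternatively, and more briefly, one can cite Theorem~\ref{th:MN}: if $\suff(L)$ and $\suff(M)$ are both finite, then $\suff(L\cap M)$ embeds into $\suff(L)\times\suff(M)$ via $P\backslash(L\cap M)=(P\backslash L)\cap(P\backslash M)$, hence is finite, so $L\cap M$ is rational and therefore regular by Theorem~\ref{th:kleene}. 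I would present the product construction as the main proof and mention the Myhill–Nerode argument as a remark, since the paper's introduction advertises both.
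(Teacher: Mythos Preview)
Your proposal is correct and matches the paper's proof essentially line for line: the paper also gives both the Myhill--Nerode argument (via $P\backslash(L_1\cap L_2)=(P\backslash L_1)\cap(P\backslash L_2)$) and the explicit product HDA with cells $\{(x_1,x_2)\mid \ev(x_1)=\ev(x_2)\}$, coordinatewise face maps, and product start/accept sets, proving $\Lang(X)=\Lang(X_1)\cap\Lang(X_2)$ by projecting for $\subseteq$ and, for $\supseteq$, aligning the two sparse accepting paths obtained from Lemma~\ref{l:GeneralPathDivision} applied to the unique sparse step decomposition. Your identification of the synchronization via uniqueness of the sparse decomposition as the crux is exactly the point the paper handles.
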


\begin{proof}
  We again give two proofs,
  one algebraic using Theorem~\ref{th:MN} and another, constructive proof using Theorem~\ref{th:kleene}.
  For the first proof,
  let $L_1$ and $L_2$ be regular, then $\suff(L_1)$ and $\suff(L_2)$ are both finite.
  Now
  \begin{align*}
    \suff(L_1&\cap L_2) \\
    &= \{P\backslash(L_1\cap L_2)\mid P\in \iiPoms\} \\
    &= \big\{ \{Q\in \iiPoms\mid P Q\in L_1\cap L_2\} \bigmid P\in \iiPoms\big\} \\
    &= \big\{ \{Q\in \iiPoms\mid P Q\in L_1\}\cap \{Q\in \iiPoms\mid P Q\in L_2\} \bigmid P\in \iiPoms\big\} \\
    &= \{P\backslash L_1 \cap P\backslash L_2\mid P\in \iiPoms\} \\
    &\subseteq \{M_1\cap M_2 \bigmid M_1\in\suff(L_1),\;M_2\in\suff(L_2)  \}
  \end{align*}
  which is thus finite.

  For the second, constructive proof,
  let $X_1$ and $X_2$ be HDAs.
  We construct an HDA $X$ with $\Lang(X)=\Lang(X_1)\cap \Lang(X_2)$:
  \begin{equation}
    \label{eq:product}
    \begin{gathered}
      X = \{(x_1,x_2) \in X_1 \times X_2 \mid \ev_1(x_1) = \ev_2(x_2)\}, \\
      \delta_A^\nu(x_1,x_2) = (\delta_A^{\nu}(x_1), \delta_A^{\nu}(x_2)), \\
      \ev((x_1,x_2)) = \ev_1(x_1) = \ev_2(x_2), \\
      \bot  = \bot_1 \times \bot_2, \qquad
      \top  = \top_1 \times \top_2.
    \end{gathered}
  \end{equation}

  For the inclusion $\Lang(X)\subseteq \Lang(X_1)\cap \Lang(X_2)$,
  any accepting path $\alpha$ in $X$
  projects to accepting paths $\beta$ in $X_1$ and $\gamma$ in $X_2$,
  and then $\ev(\beta)=\ev(\gamma)=\ev(\alpha)$.
  For the reverse inclusion, we need to be slightly more careful
  to ensure that accepting paths in $X_1$ and $X_2$ may be assembled
  to an accepting path in $X$.
  
  Let $P\in \Lang(X_1)\cap \Lang(X_2)$
  and $P=P_1*\dotsm*P_n$ the sparse step decomposition.
  Let $\beta=\beta_1*\dotsm*\beta_n$ and $\gamma=\gamma_1*\dotsm*\gamma_n$
  be sparse accepting paths for $P$ in $X_1$ and $X_2$, respectively,
  such that $\ev(\alpha_i)=\ev(\beta_i)=P_i$ for all $i$, \cf Lemma~\ref{l:GeneralPathDivision}.

  Let $i\in\{1,\dotsc, n\}$ and assume that $P_i=\starter{U}{A}$ is a starter,
  then $\beta_i=(\delta_A^0 x_1, \arrO{A}, x_1)$
  and $\gamma_i=(\delta_A^0 x_2, \arrO{A}, x_2)$
  for $x_1\in X_1$ and $x_2\in X_2$ such that $\ev(x_1)=\ev(x_2)=U$.
  Hence we may define a step $\alpha_i=(\delta_A^0(x_1, x_2), \arrO{A}, (x_1, x_2))$ in $X$.
  If $P_i$ is a terminator, the argument is similar.
  By construction, $\tgt(\alpha_i)=\src(\alpha_{i+1})$,
  so the steps $\alpha_i$ assemble to an accepting path $\alpha = \alpha_1 * \dots * \alpha_n \in \Path(X)_\bot^\top$,
  and $\ev(\alpha)=P$.
  \qed
\end{proof}

\begin{remark}
  The HDA $X$ constructed in \eqref{eq:product} above is the product in the category of HDAs.
  This gives a third and more high-level proof of Proposition \ref{pr:cap},
  using track objects and the universal property of the product.
\end{remark}

\section{ST-automata}
\label{se:st}

In the following, we define \emph{ST-automata} whose languages are sets of words over an alphabet of starters and terminators.
We use here the most recent definition of these objects introduced in \cite{conf/ramics/AmraneBCFZ24}.
Variants of ST-automata have been used in  \cite{DBLP:conf/concur/FahrenbergJSZ22,
	DBLP:conf/ictac/AmraneBFZ23,
	DBLP:conf/apn/AmraneBCF24,
	DBLP:journals/lites/Fahrenberg22,
	DBLP:conf/adhs/Fahrenberg18}.
We use a construction from~\cite{conf/ramics/AmraneBCFZ24}
which translates HDAs into ST-automata.
This will be useful for showing properties of HDA languages.

Let us first introduce some notation.
Let $\Omega=\{\starter{U}{A}, \terminator{U}{A}\mid U\in \square, A\subseteq U\}$
be the (infinite) set of starters and terminators over $\Sigma$
and $\Id=\{\id_U\mid U\in \square\} \subseteq \Omega$ be the (infinite) set of identities.
Further, for any $k\ge 0$, let $\Omega_{\le k}=\Omega\cap \iiPoms_{\le k}$
and $\Id_{\le k}=\Id\cap \iiPoms_{\le k}$.
Note that both these sets are finite and $\Id_{\le k} \subseteq \Omega_{\le k}$.

An \emph{ST-automaton} is a structure $A=(Q, E, I, F, \lambda)$
consisting of sets $Q$, $E\subseteq Q\times \Omega\times Q$, $I, F\subseteq Q$,
and a function $\lambda: Q\to \square$ such that
for all $(q, \ilo{S}{U}{T}, r)\in E$, $\lambda(q)=S$ and $\lambda(r)=T$. 
Thus, an ST-automaton is a standard automaton over $\Omega$ where the states are also labeled by $\square$ consistently with the labeling of incoming and outgoing edges.

A \emph{path} in an ST-automaton $A$ is defined as
an alternating sequence $\pi=(q_0, e_1, q_1,\dots, e_n, q_n)$
of states $q_i$ and transitions $e_i$ such that $e_i=(q_{i-1}, P_i, q_i)$ for  every $i=1,\dots, n$
and some $P_1,\dots, P_n\in \Omega$.
The path is \emph{accepting} if $q_0\in I$ and $q_n\in F$.
The \emph{label} of $\pi$ is
$\ell(\pi)=\id_{\lambda(q_0)} P_1 \id_{\lambda(q_1)}\dots P_n \id_{\lambda(q_n)}$.
The \emph{language} of an ST-automaton $A$ is
\begin{equation*}
	\Lang(A) = \{\ell(\pi)\mid \pi \text{ accepting path in } A\}.
\end{equation*}

Note that languages of ST-automata are subsets of $\Cohnew$.
In particular, the labeling of states and their consideration in the path labels forbid to have the empty word $\epsilon$ in the language of an ST-automata.

To a given HDA $X=(X, \bot, \top)$ we associate an ST-automaton $\ST(X)=(Q, E, I, F, \lambda)$ as follows:
\begin{itemize}
	\item $Q=X$, $I=\bot$, $F=\top$, $\lambda=\ev$, and
	\item $E=\{(\delta_A^0(q), \starter{\ev(q)}{A}, q)\mid A\subseteq \ev(q)\}
	\cup \{(q, \terminator{\ev(q)}{A}, \delta_A^1(q))\mid A\subseteq \ev(q)\}$.
\end{itemize}
That is, the transitions of $\ST(X)$ precisely mimic the starting and terminating of events in $X$.
Note that given an HDA $X$ of dimension $\dim(X) = k \in \Nat$, $\ST(X)$ is over the finite alphabet $\Omega_{\le k}$.
Note also that lower faces in $X$ are inverted to get the starting transitions.

In what follows, we consider languages of nonempty words over $\Omega$, which we denote by $W$ etc.\
and the class of such languages by $\Wangs$.
Further, $\Wang(\mathcal{A})$ denotes the set of words
accepted by a finite automaton $\mathcal{A}$.

\begin{figure}[tbp]
  \centering
   \begin{tikzpicture}[y=.9cm, scale=0.9, every node/.style={transform shape}]
   \centering
   \node[circle,draw=black,fill=blue!30,inner sep=0pt,minimum size=15pt]
    (aa) at (0,0) {$\vphantom{hy}v$};
    \node[circle,draw=black,fill=blue!30,inner sep=0pt,minimum size=15pt]
    (ac) at (0,4) {$\vphantom{hy}x$};
    \node[circle,draw=black,fill=blue!30,inner sep=0pt,minimum size=15pt]
    (ca) at (4,0) {$\vphantom{hy}w$};
    \node[circle,draw=black,fill=blue!30,inner sep=0pt,minimum size=15pt]
    (cc) at (4,4) {$\vphantom{hy}y$};
    \node[circle,draw=black,fill=red!30,inner sep=0pt,minimum size=15pt]
    (ba) at (2,0) {$\vphantom{hy}e$};
    \node[circle,draw=black,fill=red!30,inner sep=0pt,minimum size=15pt]
    (bc) at (2,4) {$\vphantom{hy}f$};
    \node[circle,draw=black,fill=green!30,inner sep=0pt,minimum size=15pt]
    (ab) at (0,2) {$\vphantom{hy}g$};
    \node[circle,draw=black,fill=green!30,inner sep=0pt,minimum size=15pt]
    (cb) at (4,2) {$\vphantom{hy}h$};
    \node[circle,draw=black,fill=black!10,inner sep=0pt,minimum size=15pt]
    (bb) at (2,2) {$\vphantom{hy}q$};
    \path (ba) edge node[above] {$\delta^0_a$} (aa);
    \path (ba) edge node[above] {$\delta^1_a$} (ca);
    \path (bb) edge node[above] {$\delta^0_a$} (ab);
    \path (bb) edge node[above] {$\delta^1_a$} (cb);
    \path (bc) edge node[above] {$\delta^0_a$} (ac);
    \path (bc) edge node[above] {$\delta^1_a$} (cc);
    \path (ab) edge node[left] {$\delta^0_b$} (aa);
    \path (ab) edge node[left] {$\delta^1_b$} (ac);
    \path (bb) edge node[left] {$\delta^0_b$} (ba);
    \path (bb) edge node[left] {$\delta^1_b$} (bc);
    \path (cb) edge node[left] {$\delta^0_b$} (ca);
    \path (cb) edge node[left] {$\delta^1_b$} (cc);
    \path (bb) edge node[above left] {$\delta^1_{ab}\!\!$} (cc);
    \path (bb) edge node[above left] {$\delta^0_{ab}\!\!$} (aa);
    \node[below left] at (aa) {$\bot\;$};
    \node[below left] at (ab) {$\bot\;$};
    \node[above right] at (ab) {$\;\top$};
    \node[above right] at (cb) {$\;\top$};
    \node[above right] at (cc) {$\;\top$};
    \begin{scope}[shift={(7,0)},initial text =]
      \tikzstyle{corner} = [circle,draw=black,fill=blue!30,inner sep=0pt,minimum size=15pt]
    \node[corner,initial]
    (aa) at (0,0) {$\vphantom{hy}\emptyset$};
    \node[corner]
    (ac) at (0,4) {$\vphantom{hy}\emptyset$};
    \node[corner]
    (ca) at (4,0) {$\vphantom{hy}\emptyset$};
    \node[corner, accepting by double]
    (cc) at (4,4) {$\vphantom{hy}\emptyset$};
    \node[corner,fill=red!30]
    (ba) at (2,0) {$\vphantom{hy}a$};
    \node[corner,fill=red!30]
    (bc) at (2,4) {$\vphantom{hy}a$};
    \node[corner, accepting by double,fill=green!30,initial]
    (ab) at (0,2) {$\vphantom{hy}b$};
    \node[corner, accepting by double,fill=green!30]
    (cb) at (4,2) {$\vphantom{hy}b$};
    \node[corner,fill=black!10]
    (bb) at (2,2) {$\vphantom{hy}\loset{a \\ b}$};
    \path (aa) edge node[above] {$a\ibullet$} (ba);
    \path (ba) edge node[above] {$\ibullet a$} (ca);
    \path (ab) edge node[above] {$\loset{\phantom{\ibullet} a \ibullet \\ \ibullet b \ibullet }$} (bb);
    \path (bb) edge node[below] {$\loset{\ibullet a\phantom{\ibullet}\\ \ibullet b \ibullet }$} (cb);
    \path (ac) edge node[above] {$a \ibullet$} (bc);
    \path (bc) edge node[above] {$ \ibullet a$} (cc);
    \path (aa) edge node[left] {$b \ibullet$} (ab);
    \path (ab) edge node[left] {$\ibullet b$} (ac);
    \path (ba) edge node[right] {$\loset{\ibullet a \ibullet \\ \phantom{\ibullet} b \ibullet }$} (bb);
    \path (bb) edge node[left] {$\loset{\ibullet a \ibullet \\ \ibullet b \phantom{\ibullet}}$} (bc);
    \path (ca) edge node[left] {$b \ibullet$} (cb);
    \path (cb) edge node[left] {$\ibullet b$} (cc);
    \path (bb) edge node[above left=-0.15cm] {$\loset{\ibullet a \\ \ibullet b}$} (cc);
    \path (aa) edge node[above left=-0.15cm] {$\loset{a \ibullet \\ b \ibullet }$} (bb);
    \end{scope}
  \end{tikzpicture}
  \caption{HDA of Figure~\ref{fi:abcube} and its ST-automaton
    (identity loops not displayed).}
  \label{fi:abst}
\end{figure}

\begin{example}
  \label{ex:stautomata}
  Figure~\ref{fi:abst} displays the ST-automaton $\ST(X)$
  pertaining to the HDA $X$ in Figure~\ref{fi:abcube},
  with the identity loops $(z, \id_{\ev(z)}, z)$ for all states $z$ omitted.
  Notice that the transitions between a cell and its lower face are opposite to the face maps in $X$.
  The smallest accepting path of $\ST(X)$ is $\omega = (g)$ where $\lambda(g) = b$. Its labeling is $\ell(\omega) = \ibullet b \ibullet$.
  More generally,
  $\Wang(\ST(X))=\{
  \ibullet b \ibullet,
  \id_\emptyset b\ibullet \ibullet b \ibullet,
  \id_\emptyset \loset{a\ibullet\\b\ibullet} \loset{\ibullet a \ibullet \\ \ibullet b \ibullet} \loset{\ibullet a\\\ibullet b\ibullet}\ibullet b \ibullet,
  \ibullet b\ibullet \loset{a\ibullet\\\ibullet b\ibullet} \loset{\ibullet a \ibullet \\ \ibullet b \ibullet} \loset{\ibullet a\\\ibullet b\ibullet},
  \id_\emptyset \loset{a\ibullet \\ b \ibullet} \loset{\ibullet a \ibullet \\ \ibullet b \ibullet} \loset{\ibullet a \\ \ibullet b} \id_\emptyset
  ,\dotsc\}$.
\end{example}

Define functions $\Phi: \Langs\to \Wangs$ and $\Psi: \Wangs\to \Langs$ by
\begin{align*}
  \Phi(L) &= \{P_1\dotsm P_n\in \Omega^*\mid  P_1*\dotsm*P_n\in L,\; n\ge 1,\; \forall i: P_i\in \Omega\}, \\
  \Psi(W) &= \{P_1*\dotsm*P_n\in \iiPoms \mid P_1\dotsm P_n\in W,\;n\ge 1,\; \forall i: T_{P_i}=S_{P_{i+1}}\}\down.
\end{align*}

$\Phi$ translates ipomsets into concatenations of their step decompositions,
and $\Psi$ translates words of composable starters and terminators into their ipomset composition
(and takes subsumption closure).
Hence $\Phi$ creates ``coherent'' words, \ie nonempty concatenations of starters and terminators with matching interfaces.
Conversely, $\Psi$ disregards all words which are not coherent in that sense.
Every ipomset is mapped by $\Phi$
to infinitely many words over $\Omega$
(because ipomsets $\id_U\in\Omega$ are not units in $\Wangs$).
This will not be a problem for us later.
It is clear that $\Psi(\Phi(L))=L$ for all $L\in \Langs$,
since every ipomset has a step decomposition.
For the other composition,
neither $\Phi(\Psi(W))\subseteq W$ nor $W\subseteq \Phi(\Psi(W))$ hold:

\begin{example}
  If $W = \{a\ibullet\, \ibullet b\}$
  (the word language containing the concatenation of $a\ibullet$ and $\ibullet b$),
  then $\Psi(W) = \emptyset$ and thus $\Phi(\Psi(W)) = \emptyset \not\supseteq W$.
  If $W = \{ \loset{a\ibullet \\ b\ibullet}\! \loset{\ibullet a \\ \ibullet b} \}$,
  then $\Psi(W) = \{\loset{a\\b}, ab, ba\}$
  and $\Phi(\Psi(W)) \not\subseteq W$.
\end{example}

\begin{lemma}
  \label{lenotgluinglanguages}
  \label{le:notgluinglanguages}
  For all $A_1,A_2 \subseteq \iiPoms$, $A_1\down * A_2\down = \{P_1*P_2 \mid P_1\in A_1,~P_2\in A_2\}\down$.
\end{lemma}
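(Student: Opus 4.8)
The plan is to prove the two inclusions separately. For ``$\supseteq$'', recall that on the right-hand side $P_1*P_2$ ranges implicitly over those pairs for which the gluing is defined, \ie with $T_{P_1}=S_{P_2}$. Since $A_1\subseteq A_1\down$ and $A_2\subseteq A_2\down$, every such $P_1*P_2$ with $P_1\in A_1$ and $P_2\in A_2$ is also of the form $P*Q$ with $P\in A_1\down$, $Q\in A_2\down$ and $T_P=S_Q$, so $\{P_1*P_2\mid P_1\in A_1,~P_2\in A_2\}\subseteq A_1\down*A_2\down$. As $A_1\down*A_2\down$ is already subsumption-closed by the definition of the gluing of languages, applying subsumption closure — which is monotone — preserves this inclusion.

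For ``$\subseteq$'', since subsumption closure is monotone and idempotent it suffices to show that $P*Q\in\{P_1*P_2\mid P_1\in A_1,~P_2\in A_2\}\down$ whenever $P\in A_1\down$, $Q\in A_2\down$ and $T_P=S_Q$. Fix such $P$, $Q$ and choose $P_1\in A_1$, $P_2\in A_2$ with $P\subsu P_1$ and $Q\subsu P_2$. First I would verify that $P_1*P_2$ is defined. I claim that a subsumption $f\colon P\to P_1$ restricts to an isomorphism of conclists $T_P\to T_{P_1}$: by condition (1) of the definition of $\subsu$, $f$ is a label-preserving bijection from $T_P$ onto $T_{P_1}$; and since the elements of $T_P$ are $<_P$-maximal they are pairwise $<_P$-incomparable, hence totally ordered by $\evord_P$, so conditions (2) and (3) force $f$ to both preserve and reflect $\evord$ on $T_P$. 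Symmetrically, $Q\subsu P_2$ restricts to an isomorphism $S_Q\to S_{P_2}$. Conflating isomorphic conclists, we obtain $T_{P_1}=T_P=S_Q=S_{P_2}$ (with matching event orders and labels), so the gluing $P_1*P_2$ is defined. Now the fact, recorded earlier, that gluing respects $\subsu$ yields $P*Q\subsu P_1*P_2$, and hence $P*Q$ lies in the required subsumption closure.

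I expect the only genuinely delicate point to be this observation that a subsumption restricts to a conclist isomorphism on the source and target interfaces; it is exactly what guarantees that $P_1*P_2$ is defined whenever $P*Q$ is. After that, the claim reduces to transitivity of $\subsu$, monotonicity of gluing with respect to $\subsu$, and the standard properties of subsumption closure.
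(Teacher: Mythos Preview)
Your proof is correct and follows essentially the same approach as the paper's: both directions rest on $A_i\subseteq A_i\down$, monotonicity of gluing with respect to $\subsu$, and transitivity. The paper's argument is terser and silently passes over the point you single out as delicate---that $P_1*P_2$ is defined whenever $P*Q$ is---so your explicit verification that a subsumption restricts to a conclist isomorphism on interfaces is a welcome addition rather than a divergence.
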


\begin{proof}
  Let $R \in A_1\down * A_2\down$.
  By definition, there exists $P'_i\in A_i\down$ such that $R \subsu P'_1 * P'_2$. Let $P_i \in A_i$ such that $P'_i \subsu P_i$.
  Then $R \subsu P_1 * P_2$.
  The other inclusion follows from the facts that $A_i \subseteq A_i\down$ and that the gluing composition preserves subsumption. \qed
\end{proof}
 As  trivial consequences of the definitions, we have that   $\Phi$ respects boolean operations:

\begin{lemma}
  \label{le:phibool}
  For all $L_1, L_2\in \Langs$,
  $\Phi(L_1\cap L_2)= \Phi(L_1)\cap \Phi(L_2)$ and
  $\Phi(L_1\cup L_2)= \Phi(L_1)\cup \Phi(L_2)$.
\end{lemma}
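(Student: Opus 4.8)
The plan is to prove both identities by directly unfolding the definition of $\Phi$, since membership of a word $w=P_1\dotsm P_n\in\Omega^*$ in $\Phi(L)$ depends on $L$ only through the question of whether the single ipomset $P_1*\dotsm*P_n$ lies in $L$, and this question commutes with the Boolean operations $\cap$ and $\cup$.

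Concretely, I would fix an arbitrary word $w=P_1\dotsm P_n$ with $n\ge 1$ and all $P_i\in\Omega$, and distinguish two cases. If the gluing $P_1*\dotsm*P_n$ is not defined (the interfaces $T_{P_i}$ and $S_{P_{i+1}}$ fail to match for some $i$, or labels/event orders disagree), then $w$ belongs to none of the four languages $\Phi(L_1\cap L_2)$, $\Phi(L_1\cup L_2)$, $\Phi(L_1)$, $\Phi(L_2)$, so all the required equivalences hold vacuously for $w$. Otherwise the gluing is defined and equals a unique ipomset $P$, and then by definition $w\in\Phi(L)$ iff $P\in L$, for every $L\in\Langs$. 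It follows that $w\in\Phi(L_1\cap L_2)$ iff $P\in L_1$ and $P\in L_2$ iff $w\in\Phi(L_1)\cap\Phi(L_2)$, and similarly $w\in\Phi(L_1\cup L_2)$ iff $P\in L_1$ or $P\in L_2$ iff $w\in\Phi(L_1)\cup\Phi(L_2)$. As $w$ was arbitrary, both set equalities follow.

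I do not expect any real obstacle here; the statement is flagged as a trivial consequence of the definitions. The only point that needs a word of care — and the reason I would split into the two cases above — is to handle words $w$ that are not ``coherent'', i.e. for which $P_1*\dotsm*P_n$ is not a legal gluing; these contribute to no $\Phi(L)$ and hence cause no trouble.
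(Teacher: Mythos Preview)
Your proposal is correct and matches the paper's treatment: the paper gives no proof at all, merely noting that the lemma is a ``trivial consequence of the definitions,'' which is precisely the direct unfolding you carry out. Your extra care with the incoherent case is sound but not strictly needed, since the condition $P_1*\dotsm*P_n\in L$ simply fails for every $L$ when the gluing is undefined.
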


On the other hand, $\Phi$ does \emph{not} respect concatenations:
only inclusion $\Phi(L*L')\subseteq \Phi(L)*\Phi(L')$ holds,
given that $\Phi(L)*\Phi(L')$ also may contain words in $\Omega^*$
that are not composable in $\iiPoms$.

As to $\Psi$, we show that it respects regular operations:

\begin{lemma}
	\label{le:psireg}
	For all $W_1, W_2\in \Wangs$,
	$\Psi(W_1\cup W_2)=\Psi(W_1)\cup \Psi(W_2)$,
	$\Psi(W_1 W_2)=\Psi(W_1)*\Psi(W_2)$, and
	$\Psi(W_1^+)=\Psi(W_1)^+$.
\end{lemma}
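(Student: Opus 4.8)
The plan is to prove the three identities for $\Psi$ separately, in increasing order of difficulty, relying throughout on the definition of $\Psi$ and on Lemma~\ref{le:notgluinglanguages} to handle subsumption closure under gluing.

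\medskip

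\textbf{Union.} This is immediate from the definition. A word $P_1\dotsm P_n$ is coherent (all $T_{P_i}=S_{P_{i+1}}$) and lies in $W_1\cup W_2$ iff it is coherent and lies in $W_1$ or in $W_2$; hence the set of composites $P_1*\dotsm*P_n$ arising from $W_1\cup W_2$ before taking downward closure is the union of the corresponding sets for $W_1$ and $W_2$, and downward closure commutes with union (as noted just before the definition of a language). So $\Psi(W_1\cup W_2)=\Psi(W_1)\cup\Psi(W_2)$.

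\medskip

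\textbf{Concatenation.} For the inclusion $\Psi(W_1W_2)\subseteq\Psi(W_1)*\Psi(W_2)$, take a coherent word $P_1\dotsm P_n$ with $P_1\dotsm P_k\in W_1$ and $P_{k+1}\dotsm P_n\in W_2$ for some split point $k$. Then $P_1*\dotsm*P_k\in\Psi(W_1)$ and $P_{k+1}*\dotsm*P_n\in\Psi(W_2)$, and since the full word is coherent we have $T_{P_k}=S_{P_{k+1}}$, so the two composites glue and $(P_1*\dotsm*P_k)*(P_{k+1}*\dotsm*P_n)=P_1*\dotsm*P_n$ lies in $\Psi(W_1)*\Psi(W_2)$; downward closure is preserved since gluing respects $\subsu$. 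For the reverse inclusion, an element of $\Psi(W_1)*\Psi(W_2)$ is, by Lemma~\ref{le:notgluinglanguages}, subsumed by $Q_1*Q_2$ where $Q_1=P_1*\dotsm*P_k$ comes from a coherent word in $W_1$ and $Q_2=P_{k+1}*\dotsm*P_n$ from a coherent word in $W_2$, with $Q_1*Q_2$ defined, i.e.\ $T_{Q_1}=S_{Q_2}$. The only subtlety is that $T_{Q_1}=T_{P_k}$ and $S_{Q_2}=S_{P_{k+1}}$, so the concatenated word $P_1\dotsm P_n$ is coherent and lies in $W_1W_2$, witnessing $Q_1*Q_2\in\Psi(W_1W_2)$; downward closure then gives the claim.

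\medskip

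\textbf{Kleene plus.} This follows from the concatenation case by induction, using $\Psi(W_1^{n+1})=\Psi(W_1^n W_1)=\Psi(W_1^n)*\Psi(W_1)=\Psi(W_1)^n*\Psi(W_1)=\Psi(W_1)^{n+1}$, together with the union case to handle the infinite union $\Psi(W_1^+)=\Psi\big(\bigcup_{n\ge1}W_1^n\big)=\bigcup_{n\ge1}\Psi(W_1^n)=\bigcup_{n\ge1}\Psi(W_1)^n=\Psi(W_1)^+$ (commuting $\Psi$ with an infinite union is just as easy as with a binary one, directly from the definition).

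\medskip

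The main obstacle is the concatenation identity, and specifically the bookkeeping around coherence at the splicing point: one must check that the interface-matching conditions internal to the two subwords together with the single matching condition $T_{Q_1}=S_{Q_2}$ are exactly equivalent to coherence of the concatenated word, and that a word in $W_1W_2$ need not itself be coherent even when its image under $\Psi$ is nonempty (so $\Psi$ silently discards the incoherent ones on both sides). Lemma~\ref{le:notgluinglanguages} is what makes the reverse inclusion go through cleanly, by letting us assume the representatives come directly from $W_1$ and $W_2$ rather than from their downward closures.
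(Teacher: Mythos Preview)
Your proof is correct and follows essentially the same approach as the paper: both arguments rest on the distributivity of $\down$ over union for the first claim, on Lemma~\ref{le:notgluinglanguages} to handle subsumption closure at the splice point for concatenation, and on induction for the Kleene plus. The paper presents the concatenation case as a single chain of equalities rather than two inclusions, but the content is the same.
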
	

\begin{proof}
  The first claim follows easily using the fact that $(A\cup B)\down=A\down\cup B\down$.
  For the second, we have
  \begin{align*}
    \Psi(W_1)*\Psi(W_2)
    &= \{P_1*\dotsm *P_n\mid P_1\dotsm P_n\in W_1,\; \forall i: T_{P_i}=S_{P_{i+1}}\}\down \\
    &\qquad * \{Q_1*\dotsm*Q_m\mid Q_1\dotsm Q_m\in W_2,\; \forall i: T_{Q_i}=S_{Q_{i+1}}\}\down \\
    &= \{P_1*\dotsm*P_n * P_{n+1}*\dotsm *P_{n+m}\mid P_1\dotsm P_n\in W_1, \\
    &\qquad P_{n+1}\dotsm P_{n+m}\in W_2,\; \forall i: T_{P_i}=S_{P_{i+1}}\}\down \text{ (by Lemma \ref{le:notgluinglanguages})}\\
    &= \Psi(W_1 W_2).
  \end{align*}
  The equality $\Psi(W_1^+) = \Psi(W_1)^+$ then follows by trivial recurrence,
  using the equalities for binary union and gluing composition. \qed
\end{proof}

On the other hand, $\Psi$ does not respect intersections,
given that $(A\cap B)\down=A\down \cap B\down$ does not always hold.

Let $\Cohnew\subseteq \Omega_{\le k}^*$ (which is a regular word language)
denotes the set of all words over $\Omega_{\le k}$ starting and ending with an identity and where identities and starters/terminators alternate. We have the following:

\begin{lemma}
  \label{le:philang}
  For any HDA $X$ and $k\ge \dim(X)$,
  \begin{equation*}
    \Wang(\ST(X))=\Phi(\Lang(X))\cap \Cohnew.
  \end{equation*}
\end{lemma}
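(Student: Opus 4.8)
The plan is to exploit the defining correspondence between the transitions of $\ST(X)$ and the elementary steps of $X$. A transition $(\delta_A^0(q),\starter{\ev(q)}{A},q)$ of $\ST(X)$ encodes the upstep $(\delta_A^0(q)\arrO{A}q)$ of $X$, and a transition $(q,\terminator{\ev(q)}{A},\delta_A^1(q))$ encodes the downstep $(q\arrI{A}\delta_A^1(q))$; in both cases the $\Omega$-letter labelling the transition is exactly the event ipomset of the corresponding step. Since $I=\bot$, $F=\top$ and $\lambda=\ev$, this turns accepting paths of $\ST(X)$ into accepting paths of $X$ and back. I would prove the two inclusions separately.

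For $\subseteq$, take an accepting path $\pi=(q_0,e_1,q_1,\dots,e_n,q_n)$ of $\ST(X)$ with $e_i=(q_{i-1},P_i,q_i)$. Its label $\ell(\pi)=\id_{\ev(q_0)}\,P_1\,\id_{\ev(q_1)}\dotsm P_n\,\id_{\ev(q_n)}$ lies in $\Cohnew$ by its very shape: it is a word over $\Omega_{\le k}$ (here $k\ge\dim(X)$ is used), it starts and ends with an identity, and the identity letters coming from the state labels separate the starters/terminators. Reading each $e_i$ back as the step of $X$ it encodes (an identity loop being read as an empty step) yields, by concatenation, an accepting path $\alpha=\alpha_1*\dotsm*\alpha_n$ of $X$ with $\ev(\alpha_i)=P_i$, hence $\ev(\alpha)=P_1*\dotsm*P_n\in\Lang(X)$. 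Since identities are units for gluing, $P_1*\dotsm*P_n$ is also the gluing of all letters of $\ell(\pi)$, so $\ell(\pi)\in\Phi(\Lang(X))$.

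For $\supseteq$, take $w\in\Phi(\Lang(X))\cap\Cohnew$. Membership in $\Cohnew$ lets us write $w$ as identity blocks separated by single starters/terminators $R_1,\dots,R_m$, with matching interfaces along $w$ (matching is forced because $w\in\Phi(\Lang(X))$ means the gluing of its letters is defined); membership in $\Phi(\Lang(X))$ says that this gluing, namely $R:=R_1*\dotsm*R_m$, lies in $\Lang(X)$. Apply Lemma~\ref{l:GeneralPathDivision} to the decomposition $R=R_1*\dotsm*R_m$ to get an accepting path $\alpha=\alpha_1*\dotsm*\alpha_m$ of $X$ with $\ev(\alpha_i)=R_i$. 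As each $R_i$ is a single starter or terminator, every elementary piece of $\alpha_i$ is of the same type (all upsteps, resp.\ all downsteps), so by path equivalence together with the precubical identities $\alpha_i$ may be normalised to a single elementary step of $X$, i.e.\ to (the encoding of) one transition $t_i$ of $\ST(X)$ running between states of $\ev$-value $S_{R_i}$ and $T_{R_i}$. Chaining $t_1,\dots,t_m$ and inserting at the intermediate states enough identity loops to reproduce the identity blocks of $w$ gives an accepting path $\pi$ with $\ell(\pi)=w$ (accepting because $\src(t_1)\in\bot=I$ and $\tgt(t_m)\in\top=F$, $\alpha$ being accepting), so $w\in\Wang(\ST(X))$.

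The main obstacle is the bookkeeping of identity letters, which is precisely what the intersection with $\Cohnew$ on the right-hand side is designed to absorb: on the $\subseteq$ side one must check that the state-label identities built into $\ell(\pi)$ (and any identity loops $\pi$ traverses) keep $\ell(\pi)$ inside $\Cohnew$; on the $\supseteq$ side one must check that the pieces produced by Lemma~\ref{l:GeneralPathDivision} really can be normalised to single elementary steps — using that a path whose event ipomset is a starter (resp.\ terminator) is path-equivalent to a single upstep (resp.\ downstep) — and that the identity blocks of $w$ can be matched exactly. The hypothesis $k\ge\dim(X)$ serves only to place $\ST(X)$ and $\Cohnew$ over the common finite alphabet $\Omega_{\le k}$ and is otherwise inert.
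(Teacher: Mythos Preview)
Your proposal is correct and follows essentially the same approach as the paper's own proof: both directions hinge on the step-by-step correspondence between paths of $X$ and paths of $\ST(X)$, and the $\supseteq$ inclusion is obtained by invoking Lemma~\ref{l:GeneralPathDivision} on a step decomposition of an ipomset in $\Lang(X)$.

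The only notable difference is one of explicitness. In the $\supseteq$ direction the paper applies Lemma~\ref{l:GeneralPathDivision} directly to the full decomposition $\id_0*P_1*\id_1*\dotsm*P_n*\id_n$ and then asserts that the corresponding $\ST(X)$-path has label exactly $\id_0P_1\id_1\dotsm P_n\id_n$, leaving implicit the fact that each piece $\beta_i$ with $\ev(\beta_i)=P_i$ may be taken to be a \emph{single} step. You make this explicit by invoking path equivalence (combining consecutive upsteps, resp.\ downsteps) to normalise each $\alpha_i$ to one step, and you handle the identity letters separately by grouping them into blocks and reproducing them via identity loops. This extra care is a genuine (if small) improvement in rigour; otherwise the two arguments are the same.
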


\begin{proof}
  By construction, there is a one-to-one correspondence between the accepting paths in $X$ and $\ST(X)$:
  \[
    \alpha=
    (x_0,\phi_1,x_1, \phi_2,\dotsc, \phi_n, x_n)
    \mapsto
    \big(
      x_0 \xrightarrow{\psi_1}
      x_1\xrightarrow{\psi_2}
      \dotsm
      \xrightarrow{\psi_n}
      x_n
    \big)
    =\omega
  \]
  where $\psi_i$ is the starter or terminator corresponding to the step $\phi_i$ and $\lambda(x_i) = \id_{\ev(x_i)}$.
  Note also that the path $(x)$ is accepting in $X$ if and only if it is accepting in $\ST(X)$. In addition $\ev(x) = \ell(x)$.

  Now let $\id_0P_1\id_1\dotsm P_n\id_n\in \Wang(\ST(X))$ with $n>1$.
  Then there is an accepting path $\omega$ in $\ST(X)$ such that
   $P_i=\ev(x_{i-1},\phi_i,x_i) = \psi_i$, $\id_0 = \id_{\ev(x_{i - 1})}$ and $\id_i=\id_{\ev(x_{i})}$ for $1\le i \le n$.
  The corresponding path $\alpha$ in $X$ is accepting.
  Hence $P_1*\dotsm*P_n=\id_0*P_1*\id_1*\dotsm *P_n*\id_n=\ev(\alpha)\in\Lang(X)$,
  and $\id_0P_1\id_1\dotsm P_n\id_n\in \Phi(\Lang(X))$.
  This shows the inclusion $\subseteq$.

  Now let $\id_0P_1\id_1\dotsm P_n\id_n\in \Phi(\Lang(X))\cap \Cohnew$.
  Thus $\id_0*P_1*\id_1*\dotsm *P_n*\id_n\in\Lang(X)$.
  Using Lemma~\ref{l:GeneralPathDivision} we conclude that there exists an accepting
  path $\alpha=\src(\beta_1)*\beta_1*\tgt(\beta_1)*\dotsm*\src(\beta_n)*\beta_n*\tgt(\beta_n)$ in $X$ such that $\ev(\beta_i)=P_i$, $\ev(\src(\beta_i)) = \id_{i-1}$ and $\ev(\tgt(\beta_i) = \id_i$.
  The path $\omega$ corresponding to $\alpha$ recognizes $\id_0P_1\id_1\dotsm P_n\id_n$,
  which shows the inclusion $\supseteq$.
  \qed
\end{proof}

\begin{lemma}
  \label{le:inclphiiff}
  Let $k\ge 0$.
  For all languages $L_1, L_2\in \Langs_{\le k}$, $L_1\subseteq L_2$ if and only if $\Phi(L_1)\cap \Cohnew\subseteq \Phi(L_2)\cap \Cohnew$.
\end{lemma}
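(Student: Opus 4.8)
The forward implication is immediate and I would dispatch it first: $\Phi$ is monotone (if $P_1*\dots*P_n\in L_1$ then it is in $L_2$), so $\Phi(L_1)\subseteq\Phi(L_2)$, and intersecting both sides with $\Cohnew$ gives the claim.

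For the converse, assume $\Phi(L_1)\cap\Cohnew\subseteq\Phi(L_2)\cap\Cohnew$ and take an arbitrary $P\in L_1$; the goal is $P\in L_2$. The plan is to package $P$ as a single coherent word. Pick any step decomposition $P=P_1*\dots*P_n$ (the unique sparse one from Lemma~\ref{le:ipomsparse} works, but any does), and form
\begin{equation*}
  w=\id_{U_0}\,P_1\,\id_{U_1}\,P_2\,\id_{U_2}\dotsm \id_{U_{n-1}}\,P_n\,\id_{U_n},
\end{equation*}
where $U_0=S_{P_1}$, $U_i=T_{P_i}=S_{P_{i+1}}$ for $1\le i\le n-1$, and $U_n=T_{P_n}$. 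This word starts and ends with an identity, alternates identities with starters/terminators, and — crucially — lies over $\Omega_{\le k}$: since $P\in L_1\in\Langs_{\le k}$ we have $\wid(P)\le k$, and every conclist occurring as a time slice of $P$ (in particular each $P_i$, which is a starter or terminator on such a conclist, and each $U_i$) is a $<_P$-antichain, hence of cardinality at most $k$. Therefore $w\in\Cohnew$. Moreover, since identities are units for the gluing composition, $\id_{U_0}*P_1*\id_{U_1}*\dots*\id_{U_n}=P_1*\dots*P_n=P\in L_1$, so $w\in\Phi(L_1)$. Thus $w\in\Phi(L_1)\cap\Cohnew\subseteq\Phi(L_2)\cap\Cohnew$, and in particular $w\in\Phi(L_2)$; unwinding the definition of $\Phi$, this says exactly that the gluing of the factors of $w$, which is $P$, belongs to $L_2$. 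As $P$ was arbitrary, $L_1\subseteq L_2$.

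The only genuinely delicate point is the verification that $w\in\Cohnew$, i.e.\ that all its letters have width at most $k$; this is precisely where the hypothesis $L_1\in\Langs_{\le k}$ is used, and it is the reason the statement restricts to width-bounded languages. Everything else is bookkeeping: that identities are gluing units (so the interleaved $\id_{U_i}$ do not change the composition), and that membership of $w$ in $\Phi(L_2)$ yields $P\in L_2$ directly from the definition of $\Phi$, with no appeal to downward closure needed.
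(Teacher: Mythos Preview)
Your proof is correct and follows essentially the same approach as the paper. The paper argues the converse by contrapositive in a single sentence (``since every ipomset admits a step decomposition starting with an identity''), whereas you give the direct argument in full, explicitly constructing the interleaved word $w$ and verifying its membership in $\Cohnew$; your observation that the width bound on $L_1$ is exactly what forces all letters of $w$ into $\Omega_{\le k}$ makes explicit a point the paper leaves implicit.
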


\begin{proof}
  The forward implication is immediate from Lemma~\ref{le:phibool}.
  Now if $L_1\not\subseteq L_2$,
  then also $\Phi(L_1)\cap \Cohnew\not\subseteq \Phi(L_2)\cap \Cohnew$,
  since every ipomset admits a step decomposition starting with an identity. \qed
\end{proof}

\begin{theorem}
  \label{th:incl}
  Inclusion of regular languages is decidable.
\end{theorem}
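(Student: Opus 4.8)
The plan is to reduce the problem to inclusion of regular word languages over a finite alphabet, which is decidable by classical automata theory. The real work has already been done in Lemmas~\ref{le:philang} and~\ref{le:inclphiiff}; what remains is to assemble them.

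Given regular languages $L_1$ and $L_2$, I would first fix finite HDAs $X_1$ and $X_2$ accepting them (these are either part of the input, or obtained effectively from rational expressions via Theorem~\ref{th:kleene}). Put $k=\max(\dim(X_1),\dim(X_2))$. By Lemma~\ref{le:skeleton}, $L_i=\Lang(X_i)=\Lang((X_i)_{\le k})$ is $k$-dimensional, so $L_1,L_2\in\Langs_{\le k}$; moreover both $\ST(X_1)$ and $\ST(X_2)$ are ordinary finite automata over the common finite alphabet $\Omega_{\le k}$.

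Next I would invoke Lemma~\ref{le:philang} to get $\Wang(\ST(X_i))=\Phi(L_i)\cap\Cohnew$ for $i=1,2$, and then Lemma~\ref{le:inclphiiff}, which for $k$-dimensional languages states that $L_1\subseteq L_2$ holds if and only if $\Phi(L_1)\cap\Cohnew\subseteq\Phi(L_2)\cap\Cohnew$. Combining the two, $L_1\subseteq L_2$ if and only if $\Wang(\ST(X_1))\subseteq\Wang(\ST(X_2))$. The latter is decidable: $\ST(X_1)$ and $\ST(X_2)$ are finite automata over the finite alphabet $\Omega_{\le k}$, and one tests inclusion in the usual way, e.g.\ by checking emptiness of $\Wang(\ST(X_1))\cap\overline{\Wang(\ST(X_2))}$ after complementing and forming a product automaton.

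There is no genuine obstacle left at this point; the only subtlety worth flagging is that one cannot argue about $\Phi(L_i)$ directly, since it is not even a regular word language — every ipomset has infinitely many step-decomposition words because the identities $\id_U$ are not units in $\Wangs$. The role of the ST-automaton is precisely to package the \emph{coherent} words of $\Phi(L_i)$ into a manifestly regular language, and Lemma~\ref{le:inclphiiff} guarantees that discarding the incoherent words costs nothing when deciding inclusion. Thus the decidability follows by assembling these ingredients, and the substance of the argument lives entirely in the earlier lemmas.
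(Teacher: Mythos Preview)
Your proof is correct and follows essentially the same route as the paper: fix HDAs $X_1,X_2$, set $k=\max(\dim(X_1),\dim(X_2))$, and combine Lemmas~\ref{le:philang} and~\ref{le:inclphiiff} to reduce $L_1\subseteq L_2$ to the decidable inclusion $\Wang(\ST(X_1))\subseteq\Wang(\ST(X_2))$ of regular word languages over the finite alphabet $\Omega_{\le k}$. One small quibble with your closing remark: the fact that each ipomset has infinitely many step-decomposition words does not by itself show that $\Phi(L_i)$ is non-regular (infinite regular languages exist), so that aside is not quite right---but it is irrelevant to the argument, which stands as written.
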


\begin{proof}
  Let $L_1$ and $L_2$ be regular and recognized respectively by $X_1$ and $X_2$,
  and let $k=\max(\dim(X_1), \dim(X_2))$.
  By Lemmas~\ref{le:philang} and~\ref{le:inclphiiff},
  \begin{align*}
    L_1\subseteq L_2
    & \iff
    \Phi(L_1)\cap \Cohnew\subseteq \Phi(L_2)\cap \Cohnew
    \\ &\iff
    \Wang(\ST(X_1))\subseteq \Wang(\ST(X_2)).
  \end{align*}
  Given that these are finite automata, the latter inclusion is decidable.
  \qed
\end{proof}

\section{Determinism and ambiguity}
\label{se:det}

It is shown in \cite{DBLP:journals/corr/abs-2210-08298} that not all regular languages may be recognized by deterministic HDAs.
We recall the definition from \cite{DBLP:journals/corr/abs-2210-08298}.
A cell $x\in X$ in an HDA $X$ is \emph{essential} if there exists an accepting path in $X$ that contains $x$.
A path is essential if all its cells are essential.

\begin{definition}
  An HDA $X$ is \emph{deterministic} if
  \begin{itemize}
  \item it has at most one start cell of every type:
    for all $U\in\square$,
    $|\bot_X\cap X[U]|\le 1$;
  \item for every $U\in\square$, $A\subseteq U$, and essential cell $x\in X[U-A]$, 
    there exists at most one cell $y\in X[U]$ such that $\delta_A(y)=x$.
  \end{itemize}
\end{definition}

We also say that a regular language is \emph{deterministic} if it is
recognized by a finite deterministic HDA.
The following is obvious.

\begin{lemma}
  \label{le:detskel}
  If $X$ is a deterministic HDA,
  then its $1$-skeleton $X_{\le 1}$ is a deterministic finite automaton.
\end{lemma}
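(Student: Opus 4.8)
The plan is to unwind the two defining conditions of a deterministic HDA and check that they restrict precisely to the two defining conditions of a deterministic finite automaton when we pass to the $1$-skeleton $X_{\le 1}$. Recall from the discussion after the definition of HDA that a standard automaton is the same thing as a one-dimensional HDA all of whose start and accept cells have type $\emptyset$: cells in $X[\emptyset]$ are states, cells in $X[\{a\}]$ are $a$-labeled transitions, and the face maps $\delta^0_{\{a\}}, \delta^1_{\{a\}}$ attach source and target states. So I first need to observe that if $X$ is deterministic then $\bot_X \subseteq X[\emptyset]$, i.e.\ $X_{\le 1}$ really is a standard automaton and not merely a one-dimensional HDA with start/accept transitions. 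Actually this is not automatic from the definition as stated, so I would instead argue at the level of one-dimensional HDAs: a one-dimensional HDA is a ``deterministic finite automaton'' in the appropriate generalized sense exactly when it has at most one start cell of each type and for each $a \in \Sigma$ and essential state $x$ there is at most one $a$-transition $y$ with $\delta^1_{\{a\}}(y) = x$ and at most one with $\delta^0_{\{a\}}(y)=x$. Let me make the statement about $X_{\le 1}$ in that sense, or simply note the standard-automaton case as the intended reading.

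The first key step is: $X_{\le 1}$ inherits the ``unique start cell per type'' condition verbatim, since $\bot_{X_{\le 1}} = \bot_X \cap X_{\le 1}$ and the condition $|\bot_X \cap X[U]| \le 1$ for all $U$ in particular holds for the conclists $U$ with $|U| \le 1$, which are the only types occurring in $X_{\le 1}$.

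The second key step is the face-map condition. Take $U \in \square$ with $|U| \le 1$, $A \subseteq U$, and an essential cell $x \in X_{\le 1}[U-A]$. I must check two things. First, $x$ is still essential \emph{in $X_{\le 1}$}: an accepting path of $X$ through $x$ need not lie in $X_{\le 1}$, so this requires a small argument. Here I would use that $X_{\le 1}$ is a precubical sub-HDA and that any path can be ``lowered'': more carefully, since $x$ has dimension $\le 1$, and since accepting paths start and end at cells of $X$, I'd want an accepting path through $x$ of dimension $\le 1$. One clean way: essentiality of $x$ in $X$ gives $P \in \Lang(X)$ realized by a path through $x$; restricting to the width-$\le 1$ part via Lemma~\ref{le:skeleton} (i.e.\ $\Lang(X_{\le 1}) = \Lang(X)_{\le 1}$) and noting $x$ is a vertex or edge, one extracts a one-dimensional accepting path through $x$. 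I expect this essentiality-descent to be the main obstacle — it is the only place where the two notions do not match up by pure inspection. Once $x$ is known essential in $X_{\le 1}$, determinism of $X$ gives at most one $y \in X[U]$ with $\delta_A(y) = x$; since $|U| \le 1$ any such $y$ lies in $X_{\le 1}$, so the bound transfers. Interpreting these conditions for $U = \{a\}$, $A = \{a\}$ (resp.\ $A = \emptyset$) yields exactly: each essential state has at most one incoming $a$-transition and at most one outgoing $a$-transition, together with the unique-start-state condition — which is the definition of a deterministic finite automaton. This completes the proof.
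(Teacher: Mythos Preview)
The paper offers no proof of this lemma; it simply declares it obvious. Your argument is more careful than that, and most of it is fine, but in the second key step you have the essentiality implication the wrong way around, and this leads you to attempt something that is both unnecessary and in general false.

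To verify that $X_{\le 1}$ is deterministic you fix a cell $x$ that is essential \emph{in $X_{\le 1}$} and must establish uniqueness of $y$. To invoke the determinism hypothesis on $X$ you need $x$ essential \emph{in $X$}. That is the trivial direction: any accepting path in $X_{\le 1}$ is already an accepting path in $X$, since $\bot_{X_{\le 1}}\subseteq\bot_X$, $\top_{X_{\le 1}}\subseteq\top_X$, and steps in $X_{\le 1}$ are steps in $X$. With this in hand, determinism of $X$ yields at most one $y\in X[U]$ with $\delta_A(y)=x$, and $|U|\le 1$ forces $y\in X_{\le 1}$, exactly as you say at the end. So the whole thing really is a one-line check.

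The direction you spend effort on---that a low-dimensional cell essential in $X$ remains essential in $X_{\le 1}$---is not needed, and it is false in general: if every cell of $\bot_X\cup\top_X$ has dimension at least $2$ then $\bot_{X_{\le 1}}=\top_{X_{\le 1}}=\emptyset$, so no cell of $X_{\le 1}$ is essential there, while vertices and edges can certainly be essential in $X$. Your appeal to $\Lang(X_{\le 1})=\Lang(X)_{\le 1}$ does not rescue this, since that lemma controls languages, not which particular cells an accepting path visits.

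Your side remark that start cells need not have empty type is well taken; the paper is being informal, and the statement should be read either as determinism of $X_{\le 1}$ as a one-dimensional HDA, or specialized (as in the one-letter section where the lemma is actually used) to the case where all start and accept cells are vertices.
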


\begin{lemma}[\cite{DBLP:journals/corr/abs-2210-08298}]
  \label{le:nodet}
  There exist regular languages which are not deterministic.
\end{lemma}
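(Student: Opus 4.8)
The plan is to exhibit one concrete regular language and prove by hand that it is the language of no deterministic HDA. I would take the finite language
\[
  L_0=\{\loset{a\\b}\}\down\cup\{abc\}\down=\{\loset{a\\b},\ ab,\ ba,\ abc\}
\]
over $\Sigma=\{a,b,c\}$. It is rational, being $(\{[a]\}\parallel\{[b]\})\cup(\{[a]\}*\{[b]\}*\{[c]\})$, hence regular by Theorem~\ref{th:kleene}. The point that makes it non-deterministic is that subsumption closure of a word adds nothing: although $ab$ and $ba$ are both subsumptions of $\loset{a\\b}$, we have $abc\in L_0$ while $bac\notin L_0$.

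Assume for contradiction that $X$ is a finite deterministic HDA with $\Lang(X)=L_0$. First I would locate a square. Since $\loset{a\\b}\in\Lang(X)$ and $a,b$ are concurrent in it, every accepting path realising $\loset{a\\b}$ must visit a cell of dimension $2$; passing to its sparse representative yields an accepting path $v\arrO{ab}q\arrI{ab}y$ with $\ev(q)=\loset{a\\b}$, $v=\delta^0_{ab}(q)\in\bot_X$ and $y=\delta^1_{ab}(q)\in\top_X$, and by the first clause of the definition of determinism $v$ is \emph{the} start cell of type $\emptyset$. Using the precubical identities I would then record the two ``boundary'' paths around $q$,
\[
  \beta_{ab}\colon\ v\arrO{a}\delta^0_b(q)\arrI{a}w\arrO{b}\delta^1_a(q)\arrI{b}y,
  \qquad
  \beta_{ba}\colon\ v\arrO{b}\delta^0_a(q)\arrI{b}w'\arrO{a}\delta^1_b(q)\arrI{a}y,
\]
with $w=\delta^1_a\delta^0_b(q)$ and $w'=\delta^1_b\delta^0_a(q)$. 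Both run from $v$ to $y$ and are accepting, so every cell they traverse is essential, and their event ipomsets are $ab$ and $ba$ respectively.

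The core step is to show that the accepting path witnessing $abc\in L_0$ is forced through $y$ after its first two letters. Its sparse form is $v\arrO{a}r_1\arrI{a}r_2\arrO{b}r_3\arrI{b}r_4\arrO{c}r_5\arrI{c}r_6$ with $r_6\in\top_X$. Now $r_1$ and $\delta^0_b(q)$ are both cells of type $\{a\}$ whose lower $a$-face is the essential cell $v$, so the uniqueness clause of determinism forces $r_1=\delta^0_b(q)$, hence $r_2=\delta^1_a(r_1)=w$; then $r_3$ and $\delta^1_a(q)$ are both cells of type $\{b\}$ whose lower $b$-face is the essential cell $w$, so $r_3=\delta^1_a(q)$ and $r_4=\delta^1_b(r_3)=\delta^1_{ab}(q)=y$. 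Hence there is a path $y\arrO{c}r_5\arrI{c}r_6$ with $r_6\in\top_X$; appending it to $\beta_{ba}$ produces an accepting path reading $bac$, so $bac\in\Lang(X)=L_0$ --- a contradiction. The hard part will be the bookkeeping in the last two paragraphs: one must check carefully that the boundary paths $\beta_{ab},\beta_{ba}$ are genuinely accepting (so that their cells are essential, which is exactly what licenses invoking the uniqueness clause), and keep the precubical identities for the composites $\delta^\nu_A\delta^\mu_B$ straight; the rest is routine.
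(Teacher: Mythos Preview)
Your proof is correct and uses the same witness language $L_0=\{\loset{a\\b},ab,ba,abc\}$ that the paper uses (Examples~\ref{ex:nonDet} and~\ref{ex:nonDeterminizable}), but the argument is genuinely different. The paper does not argue directly with a hypothetical deterministic HDA; instead it invokes the characterisation theorem from~\cite{DBLP:journals/corr/abs-2210-08298} that a regular language is deterministic iff it is \emph{swap-invariant}, and then simply observes that $ab\subsu\loset{a\\b}$ while $ab\backslash L_0=\{\epsilon,c\}\ne\{\epsilon\}=\loset{a\\b}\backslash L_0$, so swap-invariance fails. Your route is more elementary in that it works straight from the definition of a deterministic HDA, without needing the swap-invariance theorem; the price is the careful bookkeeping with precubical identities and essentiality that you correctly anticipate. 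The paper's route is a one-line computation once the characterisation is in hand, and it makes transparent \emph{why} the language fails: the quotient after $ab$ differs from the quotient after its subsumption $\loset{a\\b}$. Your direct argument is in effect reproving, for this particular $L_0$, the implication ``deterministic $\Rightarrow$ swap-invariant'': the chain of forced equalities $r_1=\delta^0_b(q)$, $r_2=w$, $r_3=\delta^1_a(q)$, $r_4=y$ is exactly the statement that the two prefixes $ab$ and $\loset{a\\b}$ must lead to the same cell, hence have the same quotient.
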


Hence there exist HDAs which may not be determinized.
By Lemma \ref{le:detskel} it is clear that any example of such a non-determinizable HDA must be at least two-dimensional.
The example below, taken from \cite{DBLP:journals/corr/abs-2210-08298}, provides such an HDA.

\begin{example}
  \label{ex:nonDet}
  Let $L = \{\loset{a \\ b}, abc\}\down = \{\loset{a \\ b}, ab, ba, abc\}$.
  Figure~\ref{fig:nondet} shows an  HDA $X$ which recognizes $L$.
  Note that $X$ is not deterministic since its bottom-right cell (the red one) has \emph{two} outgoing $b$-labeled edges.
  In Example~\ref{ex:nonDeterminizable} below we will see that $L$ itself is non-deterministic:
  it is not recognized by any deterministic HDA.
\end{example}

\begin{figure}
\centering
\begin{tikzpicture}[y=.9cm, scale=1.2, every node/.style={transform shape}]

		\filldraw[color=black!20] (0,0)--(2,0)--(2,2)--(0,2)--(0,0);			
		\filldraw (0,0) circle (0.05);
		\filldraw[color=red] (2,0) circle (0.05);
		\filldraw (0,2) circle (0.05);
		\filldraw (2,2) circle (0.05);
		
		\filldraw (3,1) circle (0.05);
		\path (0,0) edge node[below, black] {$\vphantom{b}a$} (1.95,0);
		\path (0,2) edge node[pos=.5, above, black] {$\vphantom{bg}a$} (1.95,2);
		\path (0,0) edge node[pos=.5, left, black] {$\vphantom{bg}b$} (0,1.95);
		\path (2,0) edge node[pos=.5, right, black] {$\vphantom{bg}b$} (2,1.95);
		\path (2,0) edge node[pos=.5,below right=-0.1cm, black] {$\vphantom{bg}b$} (3,0.95);
		\path (3,1) edge node[pos=.5,above right=-0.1cm, black] {$\vphantom{bg}c$} (2.05,2);
		\node[left] at (0,0) {$\bot$};
		\node[right] at (2,2.2) {$\top$};

\end{tikzpicture}
\caption{Non-deterministic HDA accepting the language of Example~\ref{ex:nonDet}.}
\label{fig:nondet}
\end{figure}

We will below prove a strengthening of Lemma \ref{le:nodet}, introducing a notion of \emph{ambiguity}
and showing that there exist regular languages with unbounded ambiguity.

\subsection{Decidability of determinism}

First we show that determinism is decidable.
To do so, we use a criterion developed in \cite{DBLP:journals/corr/abs-2210-08298} characterizing deterministic languages.

\begin{definition}
A language $L$ is \emph{swap-invariant} if it holds for all $P, Q, P'
, Q' \in  \iiPoms$ that $PP' \in L,
QQ' \in L$ and $P \subsu Q$ imply $QP' \in L$.
\end{definition}

\begin{lemma}[\cite{DBLP:journals/corr/abs-2210-08298}]
	A language $L$ is swap-invariant if and only if $P \subsu Q$ implies $P \backslash L = Q \backslash L$ for all $P, Q \in \iiPoms$	unless $Q \backslash L = \emptyset$.
\end{lemma}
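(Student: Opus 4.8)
The plan is to derive both implications directly from the definitions, using only that $L$ is a language (hence $L=L\down$), that gluing composition is monotone for $\subsu$, and that $P\subsu Q$ identifies the two interface conclists. First I would record the single auxiliary fact needed throughout: if $P\subsu Q$ via a bijection $f$, then $f$ restricts to conclist isomorphisms $S_P\to S_Q$ and $T_P\to T_Q$, so $S_P=S_Q$ and $T_P=T_Q$; consequently, for any ipomset $R$, the gluing $P*R$ is defined if and only if $Q*R$ is, and in that case $P*R\subsu Q*R$.

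For the forward implication, assume $L$ is swap-invariant, let $P\subsu Q$ with $Q\backslash L\neq\emptyset$, and prove $P\backslash L=Q\backslash L$ by two inclusions. The inclusion $Q\backslash L\subseteq P\backslash L$ holds with no hypothesis on $Q\backslash L$: for $R\in Q\backslash L$ we have $Q*R\in L$, hence $P*R\subsu Q*R\in L=L\down$ forces $P*R\in L$, \ie $R\in P\backslash L$. For the reverse inclusion $P\backslash L\subseteq Q\backslash L$ I would use swap-invariance together with the nonemptiness assumption: fix $R'\in Q\backslash L$, so $Q*R'\in L$; given $R\in P\backslash L$, \ie $P*R\in L$, apply swap-invariance to $P*R\in L$, $Q*R'\in L$ and $P\subsu Q$ to obtain $Q*R\in L$, \ie $R\in Q\backslash L$. (This is also where one sees why the ``unless $Q\backslash L=\emptyset$'' clause cannot be dropped, as this inclusion may fail otherwise.)

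For the backward implication, assume that $P\subsu Q$ and $Q\backslash L\neq\emptyset$ imply $P\backslash L=Q\backslash L$, and take $P,Q,P',Q'\in\iiPoms$ with $PP'\in L$, $QQ'\in L$ and $P\subsu Q$. Then $Q'\in Q\backslash L$, so $Q\backslash L\neq\emptyset$ and therefore $P\backslash L=Q\backslash L$; since $P'\in P\backslash L$ this yields $P'\in Q\backslash L$, that is $QP'\in L$, which is precisely the swap-invariance property.

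The argument is short and purely formal, so I do not expect a genuine obstacle; the one place deserving attention is the interface bookkeeping in the auxiliary fact --- making sure every gluing written above is actually defined --- which is immediate once one checks that $P\subsu Q$ preserves source and target interfaces as conclists.
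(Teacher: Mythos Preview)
Your proof is correct and complete; both implications follow directly from the definitions exactly as you describe, and the interface bookkeeping is handled properly. Note that the paper itself does not prove this lemma but merely cites it from~\cite{DBLP:journals/corr/abs-2210-08298}, so there is no in-paper argument to compare against; your direct verification is precisely what one would expect for this statement.
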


\begin{theorem}[\cite{DBLP:journals/corr/abs-2210-08298}]
	A language $L$ is deterministic if and only if it is swap-invariant.
\end{theorem}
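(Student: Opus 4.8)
The plan is to prove the equivalent statement that $L$ is deterministic if and only if it satisfies the prefix-quotient condition of the preceding lemma, namely that $P\subsu Q$ implies $P\backslash L=Q\backslash L$ whenever $Q\backslash L\neq\emptyset$; the preceding lemma then bridges this with swap-invariance.

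For the forward direction, let $X$ be a finite deterministic HDA with $\Lang(X)=L$ and suppose $P\subsu Q$ with $Q\backslash L\neq\emptyset$. One inclusion, $Q\backslash L\subseteq P\backslash L$, is immediate: if $QS\in L$ then $PS\subsu QS$, and $L$ is subsumption-closed. For the reverse inclusion I would first establish the key structural fact that \emph{in a deterministic HDA the target of an essential path is determined by its source cell and its event ipomset}. This follows by passing to sparse representatives (which are essential, since sparsification only deletes intermediate cells), invoking uniqueness of the sparse step decomposition (Lemma~\ref{le:ipomsparse}) so that two such paths split into matching sequences of elementary up- and downsteps, and then inducting along this sequence: a downstep is applied by the \emph{function} $\delta^1_A$, while the determinism condition pins down the unique cell produced by an upstep from a given essential cell, and the uniqueness of start cells pins down the common source. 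Now, given $PS\in L$ and some $QS_0\in L$, I split accepting paths accordingly; since $P\subsu Q$ forces the source interfaces $S_P$ and $S_Q$ to be isomorphic conclists, the two prefixes start at the \emph{same} start cell. Using the refinement property of HDAs — a path realizing $Q$ refines to one realizing any $P\subsu Q$ with the same endpoints (e.g.\ via track objects) — I refine the $Q$-prefix to a $P$-reading path with the same target; concatenating it with the $S_0$-suffix witnesses that it is essential, so by the structural fact it has the same target as the $P$-prefix of the accepting path for $PS$. Transplanting the $S$-suffix onto the original $Q$-prefix then gives an accepting path for $QS$, i.e.\ $S\in Q\backslash L$.

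For the backward direction, note that ``deterministic'' entails regularity, so I assume $L$ regular and swap-invariant and build a finite deterministic HDA. The natural candidate is a Myhill--Nerode-style construction whose cells are the nonempty prefix quotients $P\backslash L$ (finitely many by Theorem~\ref{th:MN}), typed by the conclist $T_P$ common to the source interfaces of the elements of $P\backslash L$; the start cell of type $U$ is $\id_U\backslash L$ (hence at most one per type), the accept cells are the quotients containing $\id_U$, and the upper face maps are $\delta^1_A(P\backslash L)=(P*\terminator{T_P}{A})\backslash L$, which are well defined because $(P*\terminator{T_P}{A})\backslash L=\terminator{T_P}{A}\backslash(P\backslash L)$ depends only on the quotient. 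The lower face maps, which ``unstart'' a set of events, are the delicate point: defining them consistently requires that a prefix quotient be invariant under refinement of the prefix, and this is precisely what swap-invariance delivers — via the preceding lemma, $\hat P\subsu P$ with $P\backslash L\neq\emptyset$ gives $\hat P\backslash L=P\backslash L$. One then checks the precubical identities, the two determinism conditions, and that the accepting paths of this HDA realize exactly the ipomsets of $L$.

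I expect the backward direction to be the main obstacle, and within it the consistent definition of the lower face maps together with the verification that the resulting HDA recognizes exactly $L$. This is the only place where swap-invariance is genuinely used, which is as it must be: by Lemma~\ref{le:nodet} no such construction can work for arbitrary regular languages, and the subtlety is exactly that distinct prefixes with equal quotients may ``start'' a common set of events from structurally different positions, so one needs swap-invariance to collapse these before the cubical structure can be imposed.
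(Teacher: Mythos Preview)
The paper does not prove this theorem at all: it is quoted verbatim from~\cite{DBLP:journals/corr/abs-2210-08298} and used as a black box. There is therefore no ``paper's own proof'' to compare your proposal against.

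On its own merits, your forward direction is sound: the uniqueness of sparse step decompositions together with the determinism conditions does force essential paths with the same source and event ipomset to share a target, and the path-level subsumption property you invoke (a path with label $Q$ can be refined, with fixed endpoints, to one with label $P$ whenever $P\subsu Q$) is indeed available.

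Your backward direction has a real gap, and it lies exactly where you suspect. Swap-invariance buys you \emph{well-definedness} of $\delta^0_A$ (independence of the chosen representative $P$ of a quotient cell), but it does not buy you \emph{existence}. Your implicit recipe is: refine $P$ to some $\hat P=P'*\starter{U}{A}\subsu P$ and set $\delta^0_A([P])=[P']$. But such a refinement need not exist: if every representative $P$ of the cell has $A\cap S_P\neq\emptyset$, then no refinement of $P$ ends with $\starter{U}{A}$, since events in $S_P$ are never started inside $P$. The start cells $[\id_U]$ are the canonical example: there is no $P'$ with $T_{P'}=U-A$ and $P'*\starter{U}{A}\subsu\id_U$ when $A\neq\emptyset$. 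A precubical set must supply $\delta^0_A$ for \emph{every} cell of type $U$ and \emph{every} $A\subseteq U$, so at this point your construction is incomplete. The construction in~\cite{DBLP:journals/corr/abs-2210-08298} resolves this, but it requires more than the single observation ``$\hat P\subsu P\Rightarrow \hat P\backslash L=P\backslash L$''; you would need either to enlarge the cell set with auxiliary (non-essential) faces, or to give a direct definition of $\delta^0_A$ that does not pass through a decomposition of the representative.
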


\begin{example}
	\label{ex:nonDeterminizable}
	Continuing Example~\ref{ex:nonDet}, note that $ab \subsu \loset{a \\ b}$ but $ab \backslash L = \{\epsilon,c\} \neq \{\epsilon\} = \loset{a \\ b} \backslash L$. Thus $L$ is not swap-invariant and hence not a deterministic regular language.
\end{example}

\begin{lemma}
  Determinism is preserved by intersection, but neither by union nor Kleene plus.
\end{lemma}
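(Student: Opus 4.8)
### Proof Proposal

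The plan is to prove the three parts separately, using the characterization that a regular language is deterministic iff it is swap-invariant (the cited theorem just above) together with the product construction from Proposition~\ref{pr:cap} for the positive part.

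\textbf{Closure under intersection.} Suppose $L_1$ and $L_2$ are both deterministic regular languages. Then both are swap-invariant. I claim $L_1\cap L_2$ is swap-invariant, which by the characterization theorem (and since $L_1\cap L_2$ is regular by Proposition~\ref{pr:cap}) shows it is deterministic. Indeed, suppose $PP'\in L_1\cap L_2$, $QQ'\in L_1\cap L_2$ and $P\subsu Q$. Since $PP'\in L_1$, $QQ'\in L_1$, $P\subsu Q$ and $L_1$ is swap-invariant, we get $QP'\in L_1$; the same argument with $L_2$ gives $QP'\in L_2$; hence $QP'\in L_1\cap L_2$. (Alternatively, one can argue directly on HDAs: if $X_1$ and $X_2$ are deterministic, then the product HDA $X$ of~\eqref{eq:product} is deterministic, since a start cell of type $U$ in $X$ is a pair of start cells of type $U$, at most one in each factor, and likewise a cell $y\in X[U]$ with $\delta_A(y)=x$ projects to such cells in each factor; essentiality of $x$ in $X$ implies essentiality of its projections. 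Either route works; the swap-invariance argument is shorter.)

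\textbf{Failure of closure under union.} Here I exhibit two deterministic regular languages whose union is not swap-invariant, hence not deterministic. A natural candidate is built from the non-deterministic language $L=\{\loset{a\\b},ab,ba,abc\}$ of Example~\ref{ex:nonDet}: the obstruction there was that $ab\subsu\loset{a\\b}$ but $ab\backslash L=\{\epsilon,c\}\ne\{\epsilon\}=\loset{a\\b}\backslash L$. One would take $L_1=\{\loset{a\\b},ab,ba\}$ and $L_2=\{abc,ab,ba\}$ (or a similarly chosen pair, each downward-closed), check that each is swap-invariant—this is a small finite verification, since the only nontrivial subsumption among their ipomsets is $ab,ba\subsu\loset{a\\b}$, and in $L_1$ all three of $\loset{a\\b},ab,ba$ have the same (trivial) set of continuations, while $L_2$ does not contain $\loset{a\\b}$ so the swap-invariance condition is vacuous on that subsumption—and then observe $L_1\cup L_2=L$, which is not swap-invariant by Example~\ref{ex:nonDeterminizable}. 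Each $L_i$ is finite hence regular, and deterministic by the characterization theorem.

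\textbf{Failure of closure under Kleene plus.} The cleanest approach is to find a deterministic regular language $L$ such that $L^+$ contains, via two different factorizations, ipomsets witnessing a failure of swap-invariance. One can reuse the same building blocks: a finite deterministic $L$ consisting of short ipomsets such that concatenating copies produces both $\loset{a\\b}$-like prefixes and $ab$-like prefixes with genuinely different continuation sets in $L^+$. Concretely, I would try $L=\{\,ab,\ ba,\ \loset{a\\b},\ c\,\}\down$ or a carefully pruned variant, verify $L$ is swap-invariant as a small finite check, and then show that in $L^+$ we have both $\loset{a\\b}\cdot c\in L^+$ and $ab\in L^+$ while $ab\backslash L^+$ strictly contains $\loset{a\\b}\backslash L^+$ because the concurrent square forces $c$ to be available only after the square has completed—exactly the phenomenon of Example~\ref{ex:nonDeterminizable}, now arising at the level of compositions. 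The main obstacle here is choosing the example so that $L$ itself really is deterministic (swap-invariance is delicate once the language contains both a concurrent ipomset and its linearizations), while $L^+$ genuinely fails it; getting a minimal clean witness, rather than an ad hoc one, is the part that requires care. If a two-piece example proves awkward, a safe fallback is to take $L$ to be a single-ipomset language times the alphabet letters needed, engineered so that $L^+$ reconstructs the language $L$ of Example~\ref{ex:nonDet} as a union of its powers.
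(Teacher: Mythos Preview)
Your argument for intersection via swap-invariance is correct and is exactly the paper's proof. Your union counterexample is also correct, although the paper uses a different pair (on one letter: $L_1=\{\loset{a\\a}aa\}\down$ and $L_2=\{aa\loset{a\\a}\}\down$); either choice works.

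The Kleene-plus part, however, is not a proof. For your candidate $L=\{ab,ba,\loset{a\\b},c\}$ the claimed obstruction---``the concurrent square forces $c$ to be available only after the square has completed''---is simply false: $c$ is a standalone atom of $L$, so $abc=ab*c$ and $\loset{a\\b}c=\loset{a\\b}*c$ both lie in $L^2\subseteq L^+$, and in fact $ab\backslash L^+=\loset{a\\b}\backslash L^+=\{\epsilon\}\cup L^+$. So this particular subsumption does not witness a failure of swap-invariance. Your fallback idea of ``reconstructing Example~\ref{ex:nonDet} as a union of powers'' runs into the same wall: when every generator of $L$ has empty interfaces, any completed block may be followed by any other, so the quotients by $ab$ and by $\loset{a\\b}$ coincide and the Example~\ref{ex:nonDeterminizable} obstruction cannot arise. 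You yourself flag that ``getting a minimal clean witness\dots is the part that requires care''; as written, no witness has been produced.

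The paper's example supplies the missing idea, namely \emph{non-trivial interfaces}. Take $L=\{b\ibullet,\ \loset{a\\\ibullet b}cd,\ ab\loset{c\\d}\}\down$. The three generators are distinguished already by their source interfaces and first events, so $L$ is swap-invariant and hence deterministic. But gluing along the $b$-interface yields $b\ibullet*\loset{a\\\ibullet b}cd=\loset{a\\b}cd$, so the empty-interface part of $L^+$ is exactly $(\loset{a\\b}cd+ab\loset{c\\d})^+$, and Proposition~\ref{pr:lambig} then shows this has unbounded ambiguity. Hence $L^+$ is not deterministic. The interface trick---splitting $\loset{a\\b}cd$ into two pieces that are individually harmless but recombine under $*$---is precisely what your empty-interface attempts cannot achieve.
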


\begin{proof}
  To show preservation by intersection, let $L_1$ and $L_2$ be deterministic languages
  and $P, P', Q, Q'\in L=L_1\cap L_2$ such that $PP' \in L$, $QQ' \in L$ and $P \sqsupseteq Q$.
  Then, as $L_1$ and $L_2$ are swap-invariant, $QP' \in L_1$ and $QP' \in L_2$ and thus $QP' \in L$.
  We have shown that $L$ is swap invariant and thus deterministic.

  For union, consider $L_1 = \{\loset{a \\ a}aa \}$ and $L_2 = aa\loset{a\\a}\}$.
  Both $L_1$ and $L_2$ are deterministic, but $L_1 \cup L_2$ is not.

  As to Kleene plus, consider $L = \{b\ibullet, \loset{a \\ \ibullet b} c d, a b \loset{c\\d}\}$, then
  $L$ is deterministic,
  yet Proposition \ref{pr:lambig} below implies that $L^+$ is not.
  \qed
\end{proof}

In the following let $L$ be a regular language accepted by some HDA $X$.
Let 
\begin{equation*}
  \Pre(X) = \{\ev(\alpha) \mid \alpha=(x_0, \phi_1, \dotsc, \phi_n, x_n) \in \Path(X)_\bot, \forall i \ne j: x_i \neq x_j\},
\end{equation*}
that is the event ipomsets of all paths of $X$ starting at any initial cell and that do not loop back onto themselves.

\begin{lemma}
  \label{lem:swapInvariant}
  Let $P \in \iiPoms$. Then
  $P \backslash L \in \suff(L)$ if and only if there exists $Q \in \Pre(X)$
  such that $P \backslash L = Q \backslash L$.  
\end{lemma}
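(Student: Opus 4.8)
The statement is an ``if and only if'' about membership of a prefix quotient $P\backslash L$ in $\suff(L)$, and the natural strategy is to reduce an arbitrary accepting path witnessing $P$ to one that does not revisit any cell, keeping the event ipomset's suffix quotient unchanged. The backward direction is essentially free: if $Q\in\Pre(X)$, then $Q=\ev(\alpha)$ for some $\alpha\in\Path(X)_\bot$, so in particular $Q\in\suff$-relevant data and $Q\backslash L\in\suff(L)$ by definition of $\suff(L)$; hence if $P\backslash L=Q\backslash L$ then $P\backslash L\in\suff(L)$ as well. So the entire content is in the forward direction.

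For the forward direction, I would start from $P\backslash L\in\suff(L)$. If $P\backslash L=\emptyset$, it is trivially of the form $Q\backslash L$ for a suitable non-looping $Q$ (e.g.\ one that cannot be extended to an accepting path, or by a short separate argument), so assume $P\backslash L\neq\emptyset$. Since $P\backslash L\in\suff(L)$, there is some ipomset $P_0$ with $P_0\backslash L=P\backslash L$ and, because $P\backslash L\neq\emptyset$, $P_0 R\in L$ for some $R$, so $P_0$ is a prefix of an accepting word and thus, by Lemma~\ref{l:PathDivision} (applied to an accepting path for $P_0 R$ and the decomposition $P_0*R$), $P_0=\ev(\alpha)$ for some $\alpha\in\Path(X)_\bot$ ending at some cell $x=\tgt(\alpha)$. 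Now I would use the standard cycle-removal argument: write $\alpha=(x_0,\phi_1,\dots,\phi_n,x_n)$ with $x_0\in\bot$ and $x_n=x$; whenever $x_i=x_j$ for $i<j$, excise the loop $\alpha_{i+1}*\dots*\alpha_j$ to obtain a shorter path with the same source and target. Iterating, we reach a path $\alpha'\in\Path(X)_\bot$ with all cells distinct and $\tgt(\alpha')=x$; set $Q=\ev(\alpha')\in\Pre(X)$.

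The key point is then that $Q\backslash L=P\backslash L$. Here I would argue: for any ipomset $T$, $QT\in L$ iff there is an accepting path whose event ipomset is $QT$; by Lemma~\ref{l:GeneralPathDivision} applied to the decomposition $Q*T$, this is equivalent to the existence of an accepting path of the form $\alpha''*\beta$ with $\ev(\alpha'')=Q$, $\ev(\beta)=T$, and — crucially — $\tgt(\alpha'')=\src(\beta)$. Since both $\alpha'$ and (the earlier) $\alpha$ start in $\bot$ and end at the same cell $x$, the cell $x$ is exactly what controls whether a continuation $\beta$ with $\ev(\beta)=T$ can be appended to reach an accept cell; hence $QT\in L\iff xT\in L\iff P_0T\in L$, where the last step uses that $\alpha$ ends at $x$ as well, so $P_0\backslash L$ also equals ``$x\backslash L$'' in the same sense. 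Combining, $Q\backslash L=P_0\backslash L=P\backslash L$, as desired. The main obstacle I anticipate is making precise the claim that membership $QT\in L$ depends on $\alpha'$ only through its endpoint $x$ — i.e.\ that one may freely exchange $\alpha'$ for $\alpha$ in front of any continuation $\beta$; this needs Lemma~\ref{l:PathDivision}/Lemma~\ref{l:GeneralPathDivision} in both directions and a small lemma that $\Lang(X)$ restricted to continuations from $x$ is well-defined independently of how $x$ was reached, together with the treatment of the degenerate case $P\backslash L=\emptyset$.
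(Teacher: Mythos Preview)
Your plan coincides with the paper's: both argue by producing a non-looping path $\alpha'$ with the same target $x$ as some path $\alpha$ labeled $P_0$, and then claim that the suffix quotient is determined by that target. The paper's proof is a single sentence (``take $Q\in\Pre(X)$ whose path has the same target as the one of $P$''); you have essentially unpacked that sentence.

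There is, however, a genuine gap in your step ``$QT\in L\iff xT\in L\iff P_0T\in L$'', and it is not the obstacle you name. The ``small lemma that $\Lang(X)$ restricted to continuations from $x$ is independent of how $x$ was reached'' is true and easy; the problem is that $Q\backslash L$ is \emph{not} the language of continuations from the single cell $x$. By Lemma~\ref{l:PathDivision}, $QT\in L$ iff there exist $\alpha''\in\Path(X)_\bot$ and $\beta\in\Path(X)^\top$ with $\ev(\alpha'')=Q$, $\ev(\beta)=T$ and $\tgt(\alpha'')=\src(\beta)$, but nothing forces $\tgt(\alpha'')=x$: there may be \emph{other} paths from $\bot$ with event ipomset $Q$ ending at different cells, and likewise for $P_0$. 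Thus $Q\backslash L$ and $P_0\backslash L$ are unions over possibly different sets of target cells, and your cycle-removal only shows that $x$ lies in both sets. Concretely, take the one-dimensional HDA with vertices $v_0,\dots,v_3$, edges $v_0\to^a v_1$, $v_0\to^a v_2$, $v_1\to^a v_1$, $v_1\to^b v_3$, $v_2\to^c v_3$, $\bot=\{v_0\}$, $\top=\{v_3\}$. Then $\Pre(X)=\{\epsilon,a,ab,ac\}$, yet $aa\backslash L=\{a^nb\mid n\ge 0\}$ differs from each of $\epsilon\backslash L$, $a\backslash L$, $ab\backslash L$, $ac\backslash L$; your procedure would shorten $v_0\to v_1\to v_1$ to $v_0\to v_1$ and output $Q=a$, but $c\in a\backslash L$ while $c\notin aa\backslash L$. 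So the lemma as literally stated is false, and the paper's one-line proof shares exactly this gap. Your handling of the case $P\backslash L=\emptyset$ is likewise incomplete: there need not exist any $Q\in\Pre(X)$ with $Q\backslash L=\emptyset$ (consider an HDA consisting of a single start-and-accept vertex).
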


\begin{proof}
  The proof is immediate if $P \in \Pre(X)$.
  For the opposite it suffices to take $Q \in \Pre(X)$ whose path has the same target as the one of $P$.
\end{proof}

As corollaries we have:

\begin{corollary}
  \label{cor:suff=Pre}
  $\suff(L) = \{P \backslash L \mid P \in \Pre(X)\}$.
\end{corollary}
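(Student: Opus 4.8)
The plan is to read this off directly from Lemma~\ref{lem:swapInvariant}, since $\suff(L)$ is by definition $\{P\backslash L\mid P\in\iiPoms\}$ and $\Pre(X)$ is a subset of $\iiPoms$. For the inclusion $\{P\backslash L\mid P\in\Pre(X)\}\subseteq\suff(L)$ there is nothing to do: every $Q\in\Pre(X)$ is in particular an ipomset, so $Q\backslash L$ is by definition an element of $\suff(L)$.

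For the reverse inclusion, I would take an arbitrary element of $\suff(L)$, which has the form $P\backslash L$ for some $P\in\iiPoms$. Since $P\backslash L\in\suff(L)$ trivially, the ``only if'' direction of Lemma~\ref{lem:swapInvariant} yields a path-simple prefix $Q\in\Pre(X)$ with $P\backslash L=Q\backslash L$; hence $P\backslash L$ lies in $\{Q\backslash L\mid Q\in\Pre(X)\}$. Combining the two inclusions gives the claimed equality. There is no real obstacle here — the content was already done in Lemma~\ref{lem:swapInvariant} (whose proof in turn only uses that a finite path visiting a state twice can be shortened to one that does not, without changing the target cell and hence the suffix quotient), and this corollary is just the set-level restatement. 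The only thing to be slightly careful about is that $\Pre(X)$ as defined ranges over paths starting in $\bot$ with pairwise distinct cells, so for the reverse inclusion one must indeed invoke Lemma~\ref{lem:swapInvariant} rather than argue directly, in case $P$ itself corresponds only to a looping path.
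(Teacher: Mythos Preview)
Your proposal is correct and matches the paper's approach: the paper states this as an immediate corollary of Lemma~\ref{lem:swapInvariant} without giving a separate proof, and your two-inclusion argument is exactly the unpacking one would expect. The only comment is that the ``only if'' direction of Lemma~\ref{lem:swapInvariant} is in fact vacuously applicable, since $P\backslash L\in\suff(L)$ holds for every $P\in\iiPoms$ by definition---you note this yourself, so there is nothing to add.
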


\begin{corollary}
  \label{cor:typesOfPrefixes}
  For all $P \in \iiPoms$ either $P \backslash L = \emptyset$ or there exists $Q \in \Pre(X)$ such that $P \backslash L = Q \backslash L$.
\end{corollary}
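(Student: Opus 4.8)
The plan is to read this off directly from Corollary~\ref{cor:suff=Pre} (equivalently, from Lemma~\ref{lem:swapInvariant}). The only point one needs to notice is that $\suff(L)$, by its very definition $\suff(L)=\{R\backslash L\mid R\in\iiPoms\}$, contains the prefix quotient $P\backslash L$ for \emph{every} ipomset $P$; there is no side condition to check.

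So let $P\in\iiPoms$ be arbitrary. First I would dispose of the degenerate alternative: if $P\backslash L=\emptyset$ we are in the first case of the statement and there is nothing to do. Otherwise $P\backslash L\neq\emptyset$, and I claim we land in the second case: since $P\backslash L\in\suff(L)$ and, by Corollary~\ref{cor:suff=Pre}, $\suff(L)=\{Q\backslash L\mid Q\in\Pre(X)\}$, there must be some $Q\in\Pre(X)$ with $P\backslash L=Q\backslash L$. Alternatively, one may apply Lemma~\ref{lem:swapInvariant} to $P$ directly, its hypothesis $P\backslash L\in\suff(L)$ again being automatic, to obtain the same $Q$. Either route closes the argument.

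There is no genuine obstacle here: the statement is essentially a repackaging of Corollary~\ref{cor:suff=Pre} in the form convenient for later use. What is worth stressing in the write-up is the payoff rather than the proof, namely that $\Pre(X)$ is a finite, explicitly enumerable set of ipomsets (there are only finitely many loop-free paths out of the finitely many start cells of the finite HDA $X$), so that up to the empty quotient there are only finitely many distinct $P\backslash L$, each witnessed by a concrete simple path of $X$. This is precisely what will make the swap-invariance criterion for determinism effectively checkable.
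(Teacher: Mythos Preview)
Your proposal is correct and matches the paper's approach: the paper gives no separate proof for this statement, listing it together with Corollary~\ref{cor:suff=Pre} as an immediate consequence of Lemma~\ref{lem:swapInvariant}, which is exactly what you do. Your observation that the hypothesis $P\backslash L\in\suff(L)$ is automatic from the definition of $\suff(L)$ is the only thing one needs to say.
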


\begin{theorem}
  \label{th:deterministic}
  It is decidable whether $L$ is deterministic.
\end{theorem}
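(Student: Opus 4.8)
The plan is to reduce the determinism question for the regular language $L$ to a decidable property of the accepting HDA $X$, using the characterization that $L$ is deterministic if and only if it is swap-invariant (Theorem of \cite{DBLP:journals/corr/abs-2210-08298} recalled above). The key observation is that swap-invariance is a statement quantified over \emph{all} ipomsets $P, Q, P', Q'$, but by Corollary~\ref{cor:typesOfPrefixes} every nonempty prefix quotient $P\backslash L$ equals $Q\backslash L$ for some $Q\in\Pre(X)$, and $\Pre(X)$ is a \emph{finite} set (paths in a finite HDA with no repeated cell have bounded length). So the plan is to rephrase swap-invariance purely in terms of the finitely many prefix quotients $\{Q\backslash L\mid Q\in\Pre(X)\}$, each of which is itself a regular language whose HDA can be computed from $X$.

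Concretely, I would proceed as follows. First, compute the finite set $\Pre(X)$ of non-self-looping path ipomsets, and for each $Q\in\Pre(X)$ build a finite HDA $X_Q$ recognizing $Q\backslash L$ (take $X$ with start cells replaced by the target of a path realizing $Q$; this is routine and the Myhill--Nerode theorem guarantees finitely many distinct such quotients). Second, observe that swap-invariance is equivalent to: for all $P\subsu Q$ with $Q\backslash L\ne\emptyset$, we have $P\backslash L\subseteq Q\backslash L$ (the reverse inclusion being automatic since $P\subsu Q$ and languages are subsumption-closed, so $P\backslash L \supseteq Q\backslash L$ whenever... actually one must be careful here; I would use the lemma that swap-invariance is equivalent to ``$P\subsu Q$ implies $P\backslash L = Q\backslash L$ unless $Q\backslash L=\emptyset$''). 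Third, since there are only finitely many prefix-quotient languages, it suffices to check, for every pair $Q_1, Q_2\in\Pre(X)$ and every way that a representative of the quotient-class of $Q_1$ subsumes one of $Q_2$, whether the corresponding quotient languages coincide. Each such equality of regular languages is decidable by Theorem~\ref{th:incl}.

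The main obstacle is handling the subsumption relation $P\subsu Q$ cleanly: the definition of swap-invariance ranges over \emph{all} pairs $P\subsu Q$, not just pairs drawn from $\Pre(X)$, so one cannot naively restrict attention to $\Pre(X)\times\Pre(X)$. The fix is that $P\backslash L$ and $Q\backslash L$ only depend on the quotient-classes, and by Corollary~\ref{cor:typesOfPrefixes} these classes are represented within $\Pre(X)$; but matching up ``$P\subsu Q$'' with representatives requires knowing, given $Q'\in\Pre(X)$, which quotient-classes arise from ipomsets $P'$ subsuming something in the class of $Q'$. Here I would use that subsumption only adds concurrency (bounded by the width of $L$, which is finite by Lemma~\ref{lem:finitewidth}) and reflects precedence, so from a given $Q'$ there are only finitely many ipomsets $P'$ with $P'\subsu Q'$ up to isomorphism, and each has a prefix quotient falling (by Corollary~\ref{cor:typesOfPrefixes}) into the finite known set; alternatively, and more robustly, one checks swap-invariance directly on witnesses by noting that it fails iff there exist $Q\in\Pre(X)$, an ipomset $P\subsu Q$, and $P'\in Q\backslash L$ with $P'\notin (P\backslash L)$ — wait, the direction is the other way — with $QP'\notin L$ for some $P'$ with $PP'\in L$; since $P'$ ranges over a regular set and membership $QP'\in L$ is decidable via the HDA, this becomes an emptiness check for a regular language, decidable by Theorem~\ref{th:incl}. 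I would set up the final argument as: $L$ is \emph{not} swap-invariant iff for some $Q\in\Pre(X)$ and some $P$ with $P\subsu Q$ and $Q\backslash L\ne\emptyset$, the regular language $(P\backslash L)\setminus(Q\backslash L)$ is nonempty; finitely many such $P$ (bounded width, so finitely many subsumptions of a fixed $Q$) and each test decidable — hence the whole question is decidable.
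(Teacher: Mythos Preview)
Your approach is essentially the same as the paper's: compute the finite set $\Pre(X)$, build an HDA for each quotient $Q\backslash L$ with $Q\in\Pre(X)$, and reduce swap-invariance to finitely many language-equality tests decided via Theorem~\ref{th:incl}. The paper is simply more direct: it asserts (invoking Lemma~\ref{lem:swapInvariant} and Corollary~\ref{cor:typesOfPrefixes}) that it suffices to test $P\backslash L = Q\backslash L$ for all pairs $P, Q \in \Pre(X)$ with $P \subsu Q$, without your detour through arbitrary subsuming ipomsets.

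Your concern about whether restricting the subsumption pair to $\Pre(X)$ is legitimate is reasonable, but note that your own workaround (enumerate all finitely many $P \subsu Q$ for each fixed $Q\in\Pre(X)$) still restricts $Q$ to $\Pre(X)$; so if there were a genuine gap at this point it would persist in your version as well. The meandering middle of your proposal (``wait, the direction is the other way'') should be tightened: the clean statement you want is exactly the recalled lemma, namely that swap-invariance is equivalent to ``$P\subsu Q$ implies $P\backslash L=Q\backslash L$ unless $Q\backslash L=\emptyset$'', and the inclusion $Q\backslash L\subseteq P\backslash L$ is automatic since gluing respects $\subsu$ and $L$ is down-closed. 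With that in hand, your final formulation as finitely many emptiness checks is correct and matches the paper's argument.
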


\begin{proof}
  First notice that the set $\Pre(X)$ is finite and computable.
  In addition, for all $P \in \Pre(X)$, $P \backslash L$ is accepted by the HDA $Y$ built as follows.  For all $x \in X$ such that there exists a non-looping $\alpha \in \Path(X)_\bot^x$ and $P = \ev(\alpha)$, let $(X_x,\{x\},\top_X)$ be a copy of $X$. Then $Y$ is the union of such $X_x$'s.
  We have now an HDA for each element of $\suff(L)$.
  Then, by Lemma~\ref{lem:swapInvariant} and Corollary~\ref{cor:typesOfPrefixes}, it suffices to check for all $P,Q \in \Pre(X)$ whether when $P\subsu Q$ we have $P\backslash L = Q \backslash L$.
  Since we can easily check subsumption the theorem follows from the decidability of inclusion (Theorem~\ref{th:incl}).
\end{proof}

\subsection{Ambiguity}

We now introduce a notion of ambiguity for HDAs and show that languages of unbounded ambiguity exist.

\begin{definition}
  Let $k\ge 1$.
  An HDA $X$ is \emph{$k$-ambiguous} if every
  $P\in \Lang(X)$ is the event ipomset of at most $k$ sparse accepting
  paths in $X$.
\end{definition}

A language $L$ is said to be \emph{of bounded ambiguity}
if it is recognized by a $k$-ambiguous HDA for some $k$.
Deterministic HDAs are $1$-ambiguous,
and deterministic regular languages are of bounded ambiguity.

\begin{proposition}
  \label{pr:lambig}
  The regular language $L = (\loset{a\\b} c d + a b \loset{c\\d})^+$
  is of unbounded ambiguity.
\end{proposition}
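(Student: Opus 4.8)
The plan is to pin down a single family of ipomsets $W_n\in L$ on which \emph{every} HDA recognizing $L$ is forced to admit an unbounded number of sparse accepting paths. The natural candidate is the word $W_n=(abcd)^n$ for $n\ge 1$. First I would note $W_n\in L$: since $abcd\subsu\loset{a\\b}cd$, the word $abcd$ already belongs to $M:=\{\loset{a\\b}cd,\,ab\loset{c\\d}\}\down$, hence $W_n\in M^n\subseteq L$. More importantly, for every string $s\in\{0,1\}^n$ put $Q_s:=R^s_1*\dots*R^s_n$ with $R^s_i=\loset{a\\b}cd$ if $s_i=0$ and $R^s_i=ab\loset{c\\d}$ if $s_i=1$; then $Q_s\in M^n\subseteq L$, and $W_n\subsu Q_s$ for \emph{every} $s$, because $abcd$ is a word linearization of both $\loset{a\\b}cd$ and of $ab\loset{c\\d}$. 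So $W_n$ is a common refinement of $2^n$ ``parsings'' living in $L$.

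Next, fix any HDA $X$ with $\Lang(X)=L$. For each $s$, Lemma~\ref{l:GeneralPathDivision} yields an accepting path $\gamma_s=\gamma_{s,1}*\dots*\gamma_{s,n}$ with $\ev(\gamma_{s,i})=R^s_i$. Each block $\gamma_{s,i}$ realizes a width-$2$ ipomset whose only concurrency is the pair $\{a,b\}$ (if $s_i=0$) or $\{c,d\}$ (if $s_i=1$), so $\gamma_{s,i}$ must pass through a two-dimensional cell $q_{s,i}$ of type $\loset{a\\b}$ respectively $\loset{c\\d}$. Since $abcd\subsu\loset{a\\b}cd$ and $abcd\subsu ab\loset{c\\d}$, and since a sub-path entering a filled square admits a ``detour'' along two incident faces that realizes the linearized word \emph{with the same source and target} (this is the path-level content of closure of HDA languages under subsumption, using the precubical identities on the faces of $q_{s,i}$), I would replace each block $\gamma_{s,i}$ by a segment $\gamma'_{s,i}$ with $\ev(\gamma'_{s,i})=abcd$, visiting two prescribed $1$-faces of $q_{s,i}$. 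Passing to sparse form, this produces for each $s$ a sparse accepting path $\tilde\gamma_s$ with $\ev(\tilde\gamma_s)=W_n$ whose $i$-th block runs through faces of a $\loset{a\\b}$-cell or a $\loset{c\\d}$-cell according to $s_i$.

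The main obstacle — and the real content of the proposition — is to show that the number of these paths is unbounded in $n$, uniformly over all recognizers $X$: this is exactly what prevents $X$ from being $k$-ambiguous for any $k$. The plan here is to read off from $\tilde\gamma_s$, at each block $i$, a local invariant recording that the visited $a,b$- (resp.\ $c,d$-) labelled cells form the boundary of a genuine square cell, and to argue that for $s\neq s'$ some block is of ``$\loset{a\\b}$-type'' in $\tilde\gamma_s$ but of ``$\loset{c\\d}$-type'' in $\tilde\gamma_{s'}$, which cannot be reconciled inside $X$; this is the step that needs a careful combinatorial analysis of which cells can occur in a sparse realization of $W_n$ and of the face relations around $q_{s,i}$. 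Once pairwise distinctness (or merely an unbounded count) is secured, $W_n$ witnesses that $X$ has more than any prescribed number of sparse accepting paths, so $L$ has unbounded ambiguity. Should the $\{0,1\}^n$-family prove unwieldy, a fallback is the linear family $Q_i=(abcd)^{i-1}*\loset{a\\b}cd*(abcd)^{n-i}$ for $i=1,\dots,n$, combined with a pigeonhole argument against a hypothetical $k$-ambiguous recognizer: $n>k$ of the refined paths $\tilde\gamma_i\in\{$sparse accepting paths for $W_n\}$ would have to coincide, forcing a block realized ``through a square at position $i$'' to also be the block-$i$ portion of a path whose square sits at a different position $j$, which one then derives a contradiction from.
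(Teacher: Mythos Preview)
Your overall architecture matches the paper's: take $W_n=(abcd)^n$, lift each $s\in\{0,1\}^n$ to an accepting path for $R^s_1*\cdots*R^s_n$ via Lemma~\ref{l:GeneralPathDivision}, then detour each block around its square to obtain a sparse path $\tilde\gamma_s$ with $\ev(\tilde\gamma_s)=(abcd)^n$. The difference is precisely at the step you yourself flag as ``the real content'': how to prove the $\tilde\gamma_s$ are pairwise distinct.

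Your proposed mechanism --- a local invariant certifying that the $a,b$-edges (resp.\ $c,d$-edges) of a block are faces of an actual square --- does not work as stated. After the detour, $\tilde\gamma_s$ is just a sequence of $0$- and $1$-cells; nothing in the path records whether two adjacent edges happen to bound a $2$-cell of $X$, and in a general HDA the $a,b$-edges coming from a $\loset{c\\d}$-type block could perfectly well also be faces of some $\loset{a\\b}$-square. So this invariant is not well-defined from the path alone, and your fallback pigeonhole argument has the same gap.

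The paper's device is different and sharper. Writing the $i$-th block of $\tilde\gamma_s$ in sparse form, it passes through a distinguished \emph{vertex} $x^s_i$ sitting between the $ab$-segment and the $cd$-segment. The key lemma (Lemma~\ref{l:BrickPaths}) shows that for an $\loset{a\\b}cd$-block and an $ab\loset{c\\d}$-block these middle vertices are necessarily different: if they coincided at some cell $x$, one could splice the first half of one block with the second half of the other to produce an essential path with label $\loset{a\\b}*\loset{c\\d}$, hence an accepting path whose label contains $\loset{a\\b}*\loset{c\\d}$ as a factor --- but no element of $L$ has such a factor. Thus $s\neq s'$ forces some block-middle vertex to differ, and the sparse paths $\tilde\gamma_s$ are pairwise distinct. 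This use of $\loset{a\\b}\loset{c\\d}\notin L$ is the missing idea in your proposal.
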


Before the proof, a lemma about the structure of accepting paths in any HDA which accepts $L$.

\begin{lemma}
  \label{l:BrickPaths}
  Let $X$ be an HDA with $\Lang(X)=L$.
  Let $\alpha$ and $\beta$ be essential sparse paths in $X$
  with $\ev(\alpha)=\loset{a\\b} c d$ and $\ev(\beta)=a b \loset{c\\d}$.
  Then
  \begin{align*}
    \alpha &= (v\arrO{ab}q\arrI{ab} x\arrO{c}e\arrI{c}y\arrO{d}f\arrI{d}z), \\
    \beta &= (v'\arrO{a}g\arrI{a}w'\arrO{b}h'\arrI{b} x'\arrO{cd} r'\arrI{cd}z')
  \end{align*}
  for some $v,x,y,z,v',w',x',z'\in X[\epsilon]$,
  $e\in X[c]$, $f\in X[d]$, $g'\in X[a]$, $h'\in X[b]$,
  $q\in X[\loset{a\\b}]$, $r'\in X[\loset{c\\d}]$.
  Furthermore, $x\neq x'$, and for
  \begin{align*}
    \bar\alpha &= (v\arrO{a}\delta^0_b(q)\arrI{a} \delta^0_a\delta^1_a(q)\arrO{b}\delta^1_a(q)\arrI{b} x\arrO{c}e\arrI{c}y\arrO{d}f\arrI{d}z), \\
    \bar\beta &= (v'\arrO{a}g\arrI{a}w'\arrO{b}h'\arrI{b} x'\arrO{c}\delta^0_d(r')\arrI{c} \delta^0_d\delta^1_c(r')\arrO{d}\delta^1_c(r')\arrI{d}z')
  \end{align*}
  we have $\ev(\bar{\alpha})=\ev(\bar\beta)=abcd$ and $\bar\alpha\neq\bar\beta$.
\end{lemma}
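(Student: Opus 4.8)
The plan is to prove the three claims in turn. First I would pin down the shapes of $\alpha$ and $\beta$. Since $\alpha$ is sparse, $\ev(\alpha)=\ev(\alpha_1)*\dotsm*\ev(\alpha_n)$ is a sparse step decomposition of $\loset{a\\b}cd$, and by Lemma~\ref{le:ipomsparse} there is only one such: start $\{a,b\}$, terminate $\{a,b\}$, start $c$, terminate $c$, start $d$, terminate $d$. A step whose event ipomset is a starter must be an upstep out of the corresponding lower face (dually for terminators), so reading off the six factors forces $\alpha$ to have exactly the displayed form, with $\ev(q)=\loset{a\\b}$, $\ev(e)=c$, $\ev(f)=d$ and $v,x,y,z\in X[\epsilon]$. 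The same argument applied to the unique sparse step decomposition of $ab\loset{c\\d}$ (start $a$, terminate $a$, start $b$, terminate $b$, start $\{c,d\}$, terminate $\{c,d\}$) yields the form of $\beta$.

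The crux, and the part I expect to be hardest, is $x\ne x'$, where from the forms just obtained $x=\delta^1_{ab}(q)$ and $x'=\delta^0_{cd}(r')$ (both in $X[\epsilon]$). Suppose $x=x'$. Then $\gamma=(v\arrO{ab}q\arrI{ab}x\arrO{cd}r'\arrI{cd}z')$ is a well-defined path with $\ev(\gamma)=\loset{a\\b}*\loset{c\\d}$, and it is accepting if $\alpha$ and $\beta$ are; for merely essential $\alpha,\beta$ I would first extend $\gamma$ to an accepting path, prepending and appending pieces of accepting paths through the essential square cells $q$ and $r'$ — possible because two $a$- and $b$-labelled events can be concurrent in an element of $L$ only inside a $\loset{a\\b}cd$-brick, so an accepting path through $q$ necessarily visits the $\epsilon$-typed seam $v=\delta^0_{ab}(q)$ (symmetrically for $r'$ and $z'$) — whereupon its event ipomset has $\loset{a\\b}*\loset{c\\d}$ as an empty-seam factor. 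Either way $\Lang(X)=L$ would contain an ipomset with $\loset{a\\b}*\loset{c\\d}$ as such a factor. But this is impossible: each element of $L$ is a subsumption of a gluing $B_1*\dotsm*B_m$ of bricks $B_i\in\{\loset{a\\b}cd,\,ab\loset{c\\d}\}$; since subsumptions reflect precedence they respect empty-seam factorizations and preserve the number of events and their labels, and since each brick has four events, one per label, with the $a$-event below the $c$-event, a short case analysis shows the factor $\loset{a\\b}*\loset{c\\d}$ would have to subsume a single brick. That is false: $\loset{a\\b}cd$ forces $c<d$ and $ab\loset{c\\d}$ forces $a<b$, while in $\loset{a\\b}*\loset{c\\d}$ both pairs $\{a,b\}$ and $\{c,d\}$ are concurrent, contradicting that subsumptions reflect precedence. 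Hence $x\ne x'$.

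Finally I would exhibit $\bar\alpha$ and $\bar\beta$. The path $\bar\alpha$ is obtained from $\alpha$ by replacing its first two steps $v\arrO{ab}q\arrI{ab}x$ with $v\arrO{a}\delta^0_b(q)\arrI{a}\delta^0_b\delta^1_a(q)\arrO{b}\delta^1_a(q)\arrI{b}x$ and keeping the rest; this is a legitimate up/down-alternating (hence sparse) path since $\delta^0_a\delta^0_b(q)=\delta^0_{ab}(q)=v$, $\delta^1_b\delta^1_a(q)=\delta^1_{ab}(q)=x$, and the precubical identities reconcile the intermediate composites. Its event ipomset glues start-$a$, terminate-$a$, \ldots, terminate-$d$, hence equals the word $abcd$, and $\src(\bar\alpha)=v=\src(\alpha)$, $\tgt(\bar\alpha)=z=\tgt(\alpha)$ (so $\bar\alpha$ is accepting whenever $\alpha$ is). The path $\bar\beta$ is the symmetric construction, desquaring the $\{c,d\}$-square of $\beta$; likewise $\ev(\bar\beta)=abcd$ with $\src(\bar\beta)=v'$ and $\tgt(\bar\beta)=z'$. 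Then $\bar\alpha\ne\bar\beta$ is immediate: both are sparse eight-step paths whose fifth cells are $x$, respectively $x'$, and these differ by the previous step.

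The one genuinely delicate point is $x\ne x'$: the shapes of $\alpha,\beta$ and the verification of $\bar\alpha,\bar\beta$ are bookkeeping with the unique sparse step decomposition and the precubical face identities, whereas ruling out $x=x'$ forces one to argue both that $\loset{a\\b}*\loset{c\\d}$ lies below no brick and that neither gluings of bricks nor subsumptions can present it as an empty-seam factor — two concurrent pairs in immediate succession is a pattern $L$ simply does not contain.
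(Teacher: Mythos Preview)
Your proof follows the same structure as the paper's and is correct in outline: pin down $\alpha,\beta$ via the unique sparse step decomposition, assume $x=x'$ and build $\gamma$ with $\ev(\gamma)=\loset{a\\b}*\loset{c\\d}$, extend to an accepting path for a contradiction, then verify $\bar\alpha,\bar\beta$ by direct computation.

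The one place you take an unnecessary detour is in obtaining the accepting extension of $\gamma$. You argue via the $2$-cells $q$ and $r'$, claiming that any accepting path through $q$ must pass through $v=\delta^0_{ab}(q)$. That claim is in fact true, but the justification you offer (that $a,b$-concurrency in $L$ lives only in a $\loset{a\\b}cd$-brick) speaks about event ipomsets, not about which cells the path visits; to finish you would still have to rule out entering $q$ from $\delta^0_a(q)$ or $\delta^0_b(q)$ after a downstep from another $\{a,b\}$-cell, which in turn requires the separate observation that no $b$-event in $L$ is concurrent with two distinct $a$-events. The paper bypasses all of this: $v$ lies on the essential path $\alpha$, so $v$ is itself an essential cell; hence there is an accepting path through $v$ and therefore some $\gamma'\in\Path(X)_\bot^v$. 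Symmetrically one gets $\gamma''\in\Path(X)_{z'}^\top$, and $\gamma'\gamma\gamma''$ is accepting with the forbidden factor. Your detailed argument for why that factor is forbidden is fine and fills in what the paper leaves implicit, but the route to the accepting extension is much shorter than you make it.
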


\begin{proof}
  The unique sparse step decomposition of $\loset{a\\b}cd$ is
  \[
    \loset{a\\b}cd=
    \loset{a\ibullet\\b\ibullet}
    *\loset{\ibullet a\\ \ibullet b}
    * [c\ibullet] * [\ibullet c] * [d\ibullet] * [\ibullet d].
  \]
  Thus, $\alpha$ must be as described above.
  A similar argument applies for $\beta$.

  Now assume that $x=x'$. Then
  \[
    \gamma=(v\arrO{ab}q\arrI{ab} x=x'\arrO{cd} r'\arrI{cd}z')
  \]
  is a path on $X$ for which $\ev(\gamma)=\loset{a\\b}*\loset{c\\d}$.
  Since $\gamma$ is essential,
  there are paths $\gamma'\in\Path(X)_\bot^v$ and $\gamma''\in\Path(X)_{z'}^\top$.
  The composition $\omega=\gamma' \gamma \gamma''$ is an accepting path.
  Thus, $\ev(\gamma')*\loset{a\\b}*\loset{c\\d}*\ev(\gamma'')\in L$: a contradiction.

  Calculation of $\ev(\bar\alpha)$ and $\ev(\bar\beta)$ is elementary,
  and $\bar\alpha\neq \bar\beta$ because $x\neq x'$. \qed
\end{proof}

\begin{proof}[of Proposition~\ref{pr:lambig}]
  Let $X$ be an HDA with $\Lang(X)=L$.
  We will show that there exist at least $2^n$ different sparse accepting paths
  accepting $(abcd)^n$.
  Let $P=\loset{a\\b}cd$ and $Q=ab\loset{c\\d}$.
  For every sequence $\mathbf{R}=(R_1,\dotsc,R_n)\in \{P,Q\}^n$
  let $\omega_{\mathbf{R}}$ be an accepting path
  such that $\ev(\omega_{\mathbf{R}})=R_1*\dotsm*R_n$.
  By Lemma~\ref{l:GeneralPathDivision},
  there exist paths $\omega_{\mathbf{R}}^1,\dotsc,\omega_{\mathbf{R}}^n$
  such that $\ev(\omega_{\mathbf{R}}^k)=R_k$
  and $\omega'_{\mathbf{R}}=\omega_{\mathbf{R}}^1*\dotsm*\omega_{\mathbf{R}}^n$
  is an accepting path.
  Let $\bar{\omega}_{\mathbf{R}}^k$ be the path defined as in Lemma~\ref{l:BrickPaths}
  (\ie like $\bar\alpha$ if $R_k=P$ and $\bar\beta$ if $R_k=Q$).
  Finally, put
  $
  \bar\omega_{\mathbf{R}}
  =\bar\omega_{\mathbf{R}}^1*\dotsm*\bar\omega_{\mathbf{R}}^n.
  $

  Now choose ${\mathbf{R}}\neq \mathbf{S}\in\{P,Q\}^n$.
  Assume that $\bar\omega_{\mathbf{R}}=\bar\omega_{\mathbf{S}}$.
  This implies that $\bar\omega_{\mathbf{R}}^k=\bar\omega_{\mathbf{S}}^k$
  for all $k$ (all segments have the same length).
  But there exists $k$ such that $R_k\neq S_k$ (say $R_k=P$ and $S_k=Q$),
  and, by Lemma \ref{l:BrickPaths} again, applied to $\alpha=\bar\omega_{\mathbf{R}}^k$
  and $\beta=\bar\omega_{\mathbf{S}}^k$, we get $\bar\omega_{\mathbf{R}}^k\neq \bar\omega_{\mathbf{S}}^k$: a contradiction.

  As a consequence, the paths $\{\bar\omega_{\mathbf{R}}\}_{\mathbf{R}\in\{P,Q\}^n}$
  are sparse and pairwise different, and $\ev(\bar\omega_{\mathbf{R}})=(abcd)^n$ for all $\mathbf{R}$. \qed
\end{proof}

\section{One-letter languages}
\label{se:one}

In this section we restrict our attention to languages on one letter only, with $\Sigma=\{a\}$.
We develop a characterization for deterministic regular one-letter languages
using ultimately periodic functions,
similar to the one of Büchi \cite{journals/mathgrund/Buchi60}.
Yet first we show that also in the one-letter case, not all languages are deterministic.

\begin{proposition}
  The regular language $(\loset{a\\a} a + a \loset{a\\a})^+$
  is of unbounded ambiguity and hence non-deterministic.
\end{proposition}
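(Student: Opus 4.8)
The plan is to mimic the proof of Proposition~\ref{pr:lambig} as closely as possible, since the one-letter language $(\loset{a\\a} a + a \loset{a\\a})^+$ is structurally the same construction with the distinct letters $a,b,c,d$ collapsed to a single letter $a$. Let $P = \loset{a\\a} a$ and $Q = a \loset{a\\a}$, and let $X$ be any HDA with $\Lang(X) = (\loset{a\\a} a + a \loset{a\\a})^+$. The key is to re-establish the analogue of Lemma~\ref{l:BrickPaths}: an essential sparse path with event ipomset $P$ (resp.\ $Q$) must have a prescribed shape passing through a square cell of type $\loset{a\\a}$, and the ``flattened'' path $\bar\alpha$ (resp.\ $\bar\beta$) with event ipomset $aaa$ goes around this square on opposite sides. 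The crucial inequality $x \neq x'$ — the middle $0$-cells of $\bar\alpha$ and $\bar\beta$ differ — is proved by contradiction exactly as before: if they coincided, one could splice together a path realizing $\loset{a\\a} * \loset{a\\a}$ between the appropriate prefix and suffix, and since that ipomset contains no $aaa\cdots$ word-interleaving subject to the constraints of $L$, it yields an element of $L$ that is not in $(P+Q)^+$, a contradiction. One then replays the $2^n$-counting argument: for each $\mathbf{R} \in \{P,Q\}^n$ build an accepting path $\bar\omega_{\mathbf{R}}$ with $\ev(\bar\omega_{\mathbf{R}}) = (aaa)^n$, and distinct sequences give distinct sparse paths, so $X$ cannot be $k$-ambiguous for any fixed $k$.

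The step I expect to be the main obstacle is the one-letter analogue of Lemma~\ref{l:BrickPaths}, specifically the claim $x \neq x'$ and, more subtly, the rigidity claim that forces $\alpha$ through a $\loset{a\\a}$-cell in the first place. With distinct letters, the unique sparse step decomposition of $\loset{a\\b}cd$ pins down the path shape immediately because the labels disambiguate which events are being started or terminated. With one letter we lose that bookkeeping: an event ipomset like $P = \loset{a\\a} a$ has a unique sparse step decomposition $\loset{a\ibullet\\a\ibullet} * \loset{\ibullet a\\\ibullet a} * [a\ibullet] * [\ibullet a]$, but the path realizing it might in principle use autoconcurrency in ways that the four-letter argument didn't have to worry about — though in fact the first two steps still force a genuine two-dimensional cell of type $\loset{a\\a}$, so the argument survives. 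The contradiction argument for $x \neq x'$ needs care: from $x = x'$ one builds $\gamma$ with $\ev(\gamma) = \loset{a\\a} * \loset{a\\a}$ and, after prepending an accepting-path prefix and appending a suffix, obtains an ipomset in $L$ whose width-$2$ ``block structure'' is $\loset{a\\a}\loset{a\\a}$ somewhere in the middle; one must argue this cannot arise as a gluing of factors each equal to $P$ or $Q$, which follows because in $P$ and $Q$ the width-$2$ part is a single $\loset{a\\a}$ flanked by a width-$1$ step, never two consecutive $\loset{a\\a}$ starters/terminators glued with the interfaces matching.

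Concretely I would proceed as follows. First, state and prove the one-letter Brick Lemma: writing down the sparse step decompositions of $P$ and $Q$, deduce the path shapes
\begin{align*}
  \alpha &= (v \arrO{aa} q \arrI{aa} x \arrO{a} e \arrI{a} y \arrO{a} f \arrI{a} z), \\
  \beta  &= (v' \arrO{a} g' \arrI{a} w' \arrO{a} h' \arrI{a} x' \arrO{aa} r' \arrI{aa} z'),
\end{align*}
with $q, r'$ of type $\loset{a\\a}$; then prove $x \neq x'$ by the splicing contradiction above; then define the flattened paths $\bar\alpha, \bar\beta$ going around $q$ and $r'$ respectively, check $\ev(\bar\alpha) = \ev(\bar\beta) = aaa$ by an elementary face-map computation, and note $\bar\alpha \neq \bar\beta$ because $x \neq x'$. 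Second, repeat verbatim the proof of Proposition~\ref{pr:lambig}: for $\mathbf{R} \in \{P,Q\}^n$ obtain via Lemma~\ref{l:GeneralPathDivision} an accepting path $\omega_{\mathbf R} = \omega_{\mathbf R}^1 * \dotsm * \omega_{\mathbf R}^n$ with $\ev(\omega_{\mathbf R}^k) = R_k$, replace each segment by its flattened version to get $\bar\omega_{\mathbf R}$ with $\ev(\bar\omega_{\mathbf R}) = (aaa)^n$, and observe that if $\bar\omega_{\mathbf R} = \bar\omega_{\mathbf S}$ then segmentwise equality (all segments have the same length, namely $6$ cells) together with the Brick Lemma forces $R_k = S_k$ for all $k$, so $\mathbf R = \mathbf S$. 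Hence there are at least $2^n$ distinct sparse accepting paths for $(aaa)^n$, so $X$ is not $k$-ambiguous for any $k$; since $X$ was arbitrary, $L$ is of unbounded ambiguity, and in particular (since deterministic HDAs are $1$-ambiguous) not deterministic. Finally I would remark that this also shows, as in the four-letter case, that bounded ambiguity is strictly weaker than determinism even restricted to one letter.
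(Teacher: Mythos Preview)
Your approach is exactly the paper's (the paper's entire proof is ``replace $b,c,d$ by $a$ in the proof of Proposition~\ref{pr:lambig}''), but your filling-in of the one-letter Brick Lemma has two real gaps.

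First a clerical point that turns out to matter: the sparse step decomposition of $P=\loset{a\\a}a$ has \emph{four} elements, so the sparse path must be $\alpha=(v\arrO{aa}q\arrI{aa}x\arrO{a}e\arrI{a}z)$, and likewise $\beta=(v'\arrO{a}g'\arrI{a}x'\arrO{aa}r'\arrI{aa}z')$. The six-step shapes you wrote correspond to $\loset{a\\a}aa$ and $aa\loset{a\\a}$, which is the literal letter-for-letter substitution into the four-letter $P,Q$ but not the language in the statement. This matters because in the four-letter proof the inference $x\neq x'\Rightarrow\bar\alpha\neq\bar\beta$ works only since $x$ and $x'$ occupy the \emph{same} position (position~4 of~8) in $\bar\alpha$ and $\bar\beta$. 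With the correct three-event $P,Q$, the flattened paths have seven cells each, and $x$ sits at position~4 of $\bar\alpha$ while $x'$ sits at position~2 of $\bar\beta$; so $x\neq x'$ no longer forces $\bar\alpha\neq\bar\beta$. One fix: if $\bar\alpha=\bar\beta$, then comparing position~3 gives $\delta^1_{a_1}(q)=\delta^0_{a_2}(r')$, and splicing through $q$ and $r'$ yields an essential path whose event ipomset is the three-event ``staircase'' with two \emph{overlapping} concurrent pairs; since every $R_1*\dotsm*R_m\in\{P,Q\}^+$ has only pairwise \emph{disjoint} concurrent pairs, no subsumption is possible.

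Second, your justification for $x\neq x'$ is incorrect as written: you claim that two consecutive $\loset{a\\a}$ blocks cannot arise in a gluing of $P$'s and $Q$'s, but $Q*P=a\,\loset{a\\a}\,\loset{a\\a}\,a$ exhibits exactly that pattern. In the specific application (where $\alpha=\omega_{\mathbf R}^k$ and $\beta=\omega_{\mathbf S}^k$) you can instead take $\gamma',\gamma''$ to be the surrounding segments and argue by size modulo~$3$: the spliced accepting ipomset would then have $3(k{-}1)+4+3(n{-}k)=3n{+}1$ events, which is not a multiple of~$3$. Note this only establishes what you need for the $2^n$-counting argument, not the Brick Lemma for arbitrary essential $\alpha,\beta$.
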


\begin{proof}
  A simple modification of the proof of Proposition \ref{pr:lambig},
  replacing the letters $b$, $c$ and $d$ by $a$. \qed
\end{proof}

Now let $X$ be a deterministic HDA on $\Sigma=\{a\}$.
Noting that $\square=\{a^{\parallel k}\mid k\ge 0\}$,
we write $X[k]$ for $X[\{a^{\parallel k}\}]$.

We assume that $\bot_X=\{v_0\}\subseteq X_0$ contains only a vertex;
the more general case when $X$ also has higher-dimensional start cells is similar and will not be treated below.
We also assume that $X$ is \emph{accessible}, that is, every cell is a target of a path starting at $v_0$.

By Lemma \ref{le:detskel}, $X_{\le 1}$ is a deterministic finite automaton.
Potentially adding an extra sink vertex,
we may further assume that $X_{\le 1}$ is \emph{complete}
in the sense that very vertex has an outgoing transition.

Define a function $\suc: X[0]\to X[0]$
by $\suc=\{(v, v')\mid \exists e\in X[1]: \delta^0_1(e)=v, \delta^1_1(e)=v'\}$.
This is indeed a function as $X_{\le 1}$ is deterministic and complete;
it is the same as the transition function of $X_{\le 1}$ seen as a deterministic finite automaton.

Define an infinite sequence $(v_0, v_1,\dotsc)$ by $v_{n+1}=\suc(v_n)$,
then $X[0]=\{v_0, v_1,\dotsc\}$ by accessibility.
The following is easy to see.

\begin{lemma}
  \label{l:DetermVert}
  There exist unique $r=r(X)\ge 1$ and $s=s(X)\ge 0$ such that
  \begin{itemize}
  \item
    $v_k=v_{k+r}$ for all $k\ge s$,
  \item
    $v_k\neq v_l$ for all $k<s\le l$. \qed
  \end{itemize}
\end{lemma}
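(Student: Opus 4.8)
The plan is to show this as a standard fact about eventually-periodic orbits of a function on a set that is the image of $\mathbb{N}$ under iteration. Concretely, I would argue as follows. By accessibility, the sequence $(v_0, v_1, \dotsc)$ defined by $v_{n+1} = \suc(v_n)$ exhausts $X[0]$, and since $X$ is finite, $X[0]$ is finite, so the $v_n$ cannot all be distinct. Hence there exist indices $p < q$ with $v_p = v_q$; pick such a pair with $q$ minimal, and then $p$ minimal for that $q$.

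First I would set $s$ to be the least index such that $v_s = v_\ell$ for some $\ell > s$, and let $r \ge 1$ be the least such $\ell - s$, i.e.\ $r$ is the least period witnessed at level $s$. I would then verify the two bullet points. For the first, $v_s = v_{s+r}$ by choice; applying $\suc$ repeatedly (and using that $\suc$ is a genuine function, which holds because $X_{\le 1}$ is deterministic and complete, as noted in the excerpt) gives $v_{s+i} = v_{s+r+i}$ for all $i \ge 0$, hence $v_k = v_{k+r}$ for all $k \ge s$. For the second bullet, suppose $v_k = v_\ell$ with $k < s \le \ell$. Using periodicity from level $s$ we may reduce $\ell$ modulo $r$ into the window $\{s, \dotsc, s+r-1\}$, obtaining $v_k = v_{\ell'}$ with $s \le \ell' < s+r$; then $k < \ell'$ and $v_k = v_{\ell'}$ contradicts the minimality in the choice of $s$ (the existence of a coincidence with smaller lower index than $s$), unless one tightens the bookkeeping slightly — so the cleanest route is to define $s$ directly as $\min\{k : \exists \ell > k,\ v_k = v_\ell\}$ and argue that any coincidence $v_k = v_\ell$ forces $k \ge s$ by definition of $s$, which is exactly the contrapositive of the second bullet.

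For uniqueness: if $(r, s)$ and $(r', s')$ both satisfy the two conditions, then from the second condition for $(r,s)$, the values $v_0, \dotsc, v_s$ are pairwise distinct and none of $v_0, \dotsc, v_{s-1}$ equals any later $v_\ell$, so $s$ is forced to be $\min\{k : \exists \ell > k,\ v_k = v_\ell\}$ — and the same characterization applies to $s'$, giving $s = s'$. Then $r$ is the least positive period of the (now purely periodic) tail $(v_s, v_{s+1}, \dotsc)$, again a uniquely determined quantity, so $r = r'$.

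I do not expect any real obstacle here; the only thing requiring a little care is getting the minimality conditions in the right order so that the two bullets and uniqueness all drop out cleanly, and making sure $\suc$ really is total (which is why the excerpt arranged for $X_{\le 1}$ to be complete). This is the standard "rho-shape" analysis of the functional graph of $\suc$ restricted to the orbit of $v_0$; I would present it in two or three sentences rather than belabour it.
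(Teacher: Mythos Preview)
The paper gives no proof of this lemma: it is prefaced by ``The following is easy to see'' and carries a \qedsymbol\ inside the statement. Your argument is exactly the standard rho-shape analysis of the orbit of $v_0$ under $\suc$, and it is correct.

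One remark worth making explicit in your write-up: as the two bullets are literally stated, they do not by themselves pin down $r$, since the second bullet does not mention $r$ and any positive multiple of the minimal period also satisfies the first bullet. Your proof in fact establishes the existence and uniqueness of the \emph{minimal} pair $(r,s)$, which is clearly what is intended; just say so when you claim $r=r'$.
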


\begin{figure}[tbp]
  \centering
  \begin{tikzpicture}[scale=0.95, every node/.style={transform shape}]
    \tikzstyle{corner}=[draw, circle, fill = white,text width=-3mm];
    \draw[white, fill = gray, fill opacity=0.2] (0,0) -- (0,1.5) -- (3,1.5) -- (3,0) --cycle;
    \draw[white, fill = gray, fill opacity=0.2] (3,1.5) -- (3,3) -- (4.5,3) -- (4.5,1.5) --cycle;
    \node[corner] (-1) at (-1.5,0) {};
    \node[corner] (0) at (0,0) {};
    \node[corner] (1) at (1.5,0) {};
    \node[corner] (2) at (3,0) {};
    \node[corner] (3) at (0,1.5) {};
    \node[corner] (4) at (1.5,1.5) {};
    \node[corner] (5) at (3,1.5) {};
    \node[corner] (6) at (4.5,1.5) {};
    \node[corner] (7) at (3,3) {};
    \node[corner] (8) at (4.5,3) {};
    \node[corner] (9) at (6,3) {};
    \node[corner] (10) at (7.5,3) {};
    
    \path
    (-1) edge (0)
    (0) edge (1) (0) edge (3)
    (1) edge (2) (1) edge (4)
    (2) edge (5)
    (3) edge (4)
    (4) edge (5)
    (5) edge (6) (5) edge (7)
    (6) edge (8)
    (7) edge (8)
    (8) edge (9)
    (9) edge (10)
    (10) edge[loop right] (10);
    
    \node (bot) at (-1.9,0) {$\bot$};
    \node (top) at (6,3.4) {$\top$};

    \path[-, very thick, bend right, dotted] (.75,0) edge (0,.75);
    \path[-, very thick, bend right, dotted] (2.25,0) edge (1.5,.75);
    \path[-, very thick, bend left, dotted] (1.5,.75) edge (.75,1.5);
    \path[-, very thick, bend right, dotted] (3.75,1.5) edge (3,2.25);
    \path[-, very thick, bend left, dotted] (3,.75) edge (2.25,1.5);
    \path[-, very thick, bend left, dotted] (4.5,2.25) edge (3.75,3);

    \node at (-1.5,-.3) {$v_0$};
    \node at (0,-.3) {$v_1$};
    \node at (1.5,-.3) {$v_2$};
    \node at (0,1.8) {$v_2$};
    \node at (3,-.3) {$v_3$};
    \node at (1.5,1.8) {$v_3$};
    \node at (3.3,1.2) {$v_4$};
    \node at (4.5,1.2) {$v_5$};
    \node at (3,3.3) {$v_5$};
    \node at (4.5,3.3) {$v_6$};
    \node at (6,2.7) {$v_7$};
    \node at (7.6,2.7) {$v_8, v_9,...$};
  \end{tikzpicture}
  \caption{%
    Deterministic HDA on one-letter alphabet (all edges labeled $a$).
    Edges connected with dotted lines are identified.
    Vertex labels indicate sequence $(v_0, v_1,\dotsc)$ given by $\suc$ function.}
  \label{fi:dethda}
\end{figure}

\begin{example}
  \label{ex:dethda}
  Figure \ref{fi:dethda} shows a simple deterministic HDA.
  Edges connected with dotted lines are identified, as are their corresponding end points.
  Also shown is the sequence $(v_0, v_1,\dotsc)$; in this case, $r=1$ and $s=8$.
\end{example}

Now define a function $f_X: \Nat\to \Nat$ by
\begin{equation*}
  f_X(n)=\max\{k\mid \exists x\in X[k]: \delta^0_{\ev(x)}(x)=v_n\}.
\end{equation*}
This function marks the maximal number of concurrent events that can be started in each vertex.
That is, $f_X(n)=k$ if there is a $k$-cell starting at $v_n$ and no higher dimensional cells start at $v_n$.
We collect some properties of $f_X$.

\begin{lemma}
  \label{l:DetermFun}
  For all $n\ge 0$,
  \begin{enumerate}[(1)]
  \item
    $f_X(n)\ge 1$,
  \item
    $f_X(n+1)\ge f_X(n)-1$, and
  \item
    $f_X(n)=f_X(n+r)$ for $n\ge s$.
  \end{enumerate}
\end{lemma}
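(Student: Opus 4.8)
The plan is to prove the three items separately, in order, leaning on what has already been established about the $\suc$ function and the $1$-skeleton.

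Item~(1) is immediate from the fact that $X_{\le1}$ was arranged to be complete: every vertex $v_n$ has an outgoing $a$-transition, i.e.\ a cell $e\in X[1]$ with $\delta^0_1(e)=v_n$. Since $\ev(e)=a$ is a conclist of size $1$, this $e$ witnesses $f_X(n)\ge1$; it also shows that the set over which the maximum defining $f_X(n)$ is taken is non-empty, so that $f_X$ is well defined in the first place.

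For item~(2), suppose $f_X(n)=k$, witnessed by a cell $x\in X[k]$ with $U:=\ev(x)$ and $\delta^0_U(x)=v_n$. I would fix an event $u\in U$, put $A:=U\setminus\{u\}$ (so $|A|=k-1$), and consider the $(k-1)$-cell $z:=\delta^1_{\{u\}}(x)\in X[A]$. Using that disjoint face maps commute and that same-direction face maps compose, $\delta^0_{\{u\}}\delta^0_A=\delta^0_U$, one computes
\begin{equation*}
  \delta^0_A(z)=\delta^0_A\bigl(\delta^1_{\{u\}}(x)\bigr)=\delta^1_{\{u\}}\bigl(\delta^0_A(x)\bigr),
\end{equation*}
and the cell $e:=\delta^0_A(x)\in X[\{u\}]=X[1]$ is a transition with $\delta^0_1(e)=\delta^0_{\{u\}}\bigl(\delta^0_A(x)\bigr)=\delta^0_U(x)=v_n$. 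Since $\suc$ is a well-defined total function, every transition out of $v_n$ has target $\suc(v_n)=v_{n+1}$; hence $\delta^0_A(z)=\delta^1_{\{u\}}(e)=v_{n+1}$. Thus $z$ is a $(k-1)$-cell with $\delta^0_{\ev(z)}(z)=v_{n+1}$, so $f_X(n+1)\ge k-1=f_X(n)-1$.

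Item~(3) follows from Lemma~\ref{l:DetermVert}: for $n\ge s$ we have $v_n=v_{n+r}$ as cells of $X$, and the value $f_X(n)$ depends only on the cell $v_n$ (it is the largest $k$ for which some cell of $X[k]$ has full lower face $v_n$), so $f_X(n)=f_X(n+r)$ at once. The only mildly delicate point is~(2): one must \emph{terminate} the chosen event $u$ in $x$ rather than un-start it, so that the dimension drops by exactly one while the lower face of the result is obtained from $e$ by a single upper face map; and one should conclude via well-definedness of $\suc$ instead of invoking the determinism clause directly at $v_n$, which need not be essential after the completion step. Items~(1) and~(3) are then immediate.
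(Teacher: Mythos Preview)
Your proof is correct and follows the same route as the paper: item~(1) by completeness of $X_{\le 1}$, item~(2) by terminating a single event in a witness cell $x$ and using the precubical identities to identify the resulting lower face as $\suc(v_n)=v_{n+1}$, and item~(3) directly from Lemma~\ref{l:DetermVert}. Your treatment of~(2) merely spells out in full the computation the paper compresses into one line, and your caution about essentiality versus well-definedness of $\suc$ is a nice observation but not a genuine divergence.
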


\begin{proof}
  Every $v_n$ has a successor, which implies (1).
  If $x\in X[f(n)]$ is a cell such that $\delta^0_{\ev(x)}(x)=v_n$,
  then for every $b\in \ev(x)$ we have $\delta^0_{\ev(x)}(\delta^1_b(x))=v_{n+1}$, which implies (2).
  Finally, (3) follows from Lemma \ref{l:DetermVert}. \qed
\end{proof}

\begin{example}
  For the HDA of Example \ref{ex:dethda},
  the values of $f_X$ are
  \begin{equation*}
    (1, 2, 2, 1, 2, 1, 1, 1, 1,\dotsc).
  \end{equation*}
\end{example}

\begin{proposition}
  \label{p:DetPCS}
  Let $r\ge 1$, $s\ge 0$ and $f:\Nat\to\Nat$ be a function satisfying the conditions of Lemma \ref{l:DetermFun}.
  Then there exists a deterministic accessible HDA $X$ with a single start vertex such that $f=f_X$.
  Moreover, the underlying precubical set is unique.
\end{proposition}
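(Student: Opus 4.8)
The plan is to build $X$ directly from the triple $(r,s,f)$, verify the required properties, and then show that those properties determine the underlying precubical set. For the \textbf{construction}, let $X[0]=\Nat/{\sim}$ where $n\sim m$ iff $n=m$, or both $n,m\ge s$ and $r\mid n-m$; write $v_n$ for the class of $n$. Since $v_n=v_m$ forces $f(n)=f(m)$ (by condition~(3) of Lemma~\ref{l:DetermFun} and induction), it makes sense, for each $k\ge1$, to introduce a single $k$-cell $z^k_n$ for every class $v_n$ with $f(n)\ge k$; put $z^0_n=v_n$ and $\ev(z^k_n)=a^{\parallel k}$. The face maps are then forced:
\[
  \delta^0_A(z^k_n)=z^{\,k-|A|}_n,\qquad \delta^1_A(z^k_n)=z^{\,k-|A|}_{n+|A|}\qquad\text{for }A\subseteq\ev(z^k_n).
\]
The only nontrivial point of well-definedness is that $z^{\,k-|A|}_{n+|A|}$ is a genuine cell, i.e.\ $f(n+|A|)\ge k-|A|$, which is exactly condition~(2) iterated to $f(n+j)\ge f(n)-j$. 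The precubical identities $\delta^\nu_A\delta^\mu_B=\delta^\mu_B\delta^\nu_A$ (for $A\cap B=\emptyset$) then reduce, via a four-way case split on $\nu,\mu\in\{0,1\}$, to trivial arithmetic on the pair $(n,k)$. Finally set $\bot_X=\{v_0\}$; the accept set is irrelevant to the statement and may be taken arbitrarily. This $X$ is finite.

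Next I would check the properties. $X$ has a single start cell, of type $\emptyset$, so the first determinism clause holds. It is accessible: $v_n$ is the target of the path $v_0\arrO{a}z^1_0\arrI{a}v_1\arrO{a}\cdots\arrI{a}v_n$, using the edges $z^1_i$ which exist by condition~(1), and $z^k_n$ is then the target of the single upstep $(v_n,\arrO{a^{\parallel k}},z^k_n)$. It is deterministic: for an essential $x\in X[a^{\parallel k}-A]$ we necessarily have $x=z^{\,k-|A|}_n$ for some $n$, and the only possible $y\in X[a^{\parallel k}]$ with $\delta^0_A(y)=x$ is $z^k_n$. The sequence obtained from $X$ by iterating $\suc$ is $(v_0,v_1,\dots)$, and by construction $v_k=v_{k+r}$ for all $k\ge s$ while $v_k\ne v_l$ whenever $k<s\le l$; hence $r(X)=r$ and $s(X)=s$ by the uniqueness part of Lemma~\ref{l:DetermVert}. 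Since the cells with bottom vertex $v_n$ are precisely $z^0_n,\dots,z^{f(n)}_n$, we get $f_X=f$.

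For \textbf{uniqueness}, let $Y$ be a deterministic accessible HDA with a single start vertex such that $r(Y)=r$, $s(Y)=s$ and $f_Y=f$. (The normalisation of $r,s$ is genuinely needed, since for instance a constant $f$ is compatible with infinitely many pairs $(r,s)$, each producing a different precubical set.) Let $(w_0,w_1,\dots)$ be the $\suc$-sequence of $Y$; by Lemma~\ref{l:DetermVert}, $w_n\mapsto v_n$ is a well-defined bijection $Y[0]\to X[0]$. For each $k\ge1$ and each $n$ with $f(n)\ge k$ there is a \emph{unique} $k$-cell $y^k_n$ of $Y$ with bottom vertex $w_n$: existence because $f_Y(n)=f(n)\ge k$ yields a cell of dimension $f(n)$ over $w_n$, an appropriate lower face of which is a $k$-cell still over $w_n$; uniqueness because two such cells would contradict the face clause of determinism applied with $x=w_n$ and $A=a^{\parallel k}$. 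These cells exhaust $Y$, as every cell of $Y$ is reachable and its bottom vertex is some $w_n$. Mapping $y^k_n\mapsto z^k_n$ and using $\delta^0_A(y^k_n)=y^{\,k-|A|}_n$ (a lower face keeps the bottom vertex) together with $\delta^1_A(y^k_n)=y^{\,k-|A|}_{n+|A|}$ (terminating one event advances the bottom vertex one $\suc$-step, since $Y_{\le1}$ is deterministic by Lemma~\ref{le:detskel}), this bijection is a precubical isomorphism $Y\to X$ respecting start cells.

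The step I expect to be the main obstacle is pinning down the rigidity that drives both directions: in a deterministic one-letter HDA a cell is determined by its dimension and its bottom vertex, it has exactly one lower face in each smaller dimension, and both its face-map families are forced. Once that is established carefully, well-definedness of the construction, determinism of $X$, and the isomorphism $Y\cong X$ all follow by routine bookkeeping.
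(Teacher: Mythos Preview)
Your proof is correct and follows essentially the same approach as the paper: the same explicit construction indexed by $(n,k)$ with the same face maps, the same checks of determinism and accessibility, and the same uniqueness argument via the observation that a cell in a deterministic one-letter HDA is determined by its dimension and its bottom vertex. Your write-up is considerably more careful than the paper's---you verify well-definedness of the upper face maps via condition~(2), you note that the precubical identities reduce to arithmetic, you spell out why $r(X)=r$ and $s(X)=s$, and for uniqueness you actually build the isomorphism $Y\to X$ rather than merely asserting it---but none of this constitutes a different route.
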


\begin{proof}
  For $k\ge 0$ define
  \begin{gather*}
    X[k] = \{x^k_n\mid k\le f(n), n<s+r \},\\
    \delta^0_A(x^k_n) = x^{k-|A|}_n,\qquad
    \delta^1_A(x^k_n) = x^{k-|A|}_{n+|A|},
  \end{gather*}
  with the convention that $x^k_n=x^k_{n-r\lceil(n-s-r+1)/r\rceil  }$ for $n\ge s+r$.
  Let $\bot_X=\{x_0^0\}$.
  \begin{itemize}
  \item
    Determinism: If $A\subseteq U\in\square$ and $x_n^k\in X[U-A]$,
    then $k=|U-A|$ and $x^{|U|}_n$ is the only cell such that $\delta^0_{A}(x^{|U|}_n)=x^k_n$.	
  \item
    Accessibility: All vertices $x^0_n$ are accessible from $x_0^0$ and thus all cells $x^k_n$.
  \item
    Uniqueness: $X_{\le 1}$ is defined by $s$ and $r$ and hence unique.
    By determinism, for every $n$ there is at most one cell $x$ of a given dimension such that $\delta^0_{\ev(x)}(x)=x^0_n$:
    uniqueness of $X$ follows. \qed
  \end{itemize}
\end{proof}

Finally we obtain a classification of single-letter languages.

\begin{definition}
  Let $S$ be a set.
  A function $f:\Nat\to S$ is \emph{ultimately periodic}
  if there exists a pair $(r, s)\in\Nat_{\ge 1}\times \Nat$,
  called a \emph{period} of $f$,
  such that $f(n+r)=f(n)$ for every $n\ge s$.
\end{definition}

If $(r,s)$ is a period of $f$,
then for every $k\ge 0$ and $c\ge 1$,
$(cr, s+k)$ is a period as well.
Every ultimately periodic function admits a unique minimal period.

\begin{proposition}
  \label{p:DetOneLetUlPer}
  Let $\TD=\{(k,T)\mid k\in\Nat_{\ge 1}, T\subseteq \{0,\dotsc,k\}\}$.
  There is a 1-1 correspondence between
  \begin{itemize}
  \item
    deterministic languages on one letter containing ipomsets with empty start interface,
  \item
    ultimately periodic functions $\phi=(f,\tau):\Nat\to \TD$
    such that $f(n+1)\ge f(n)-1$.
  \end{itemize}
\end{proposition}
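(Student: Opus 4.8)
The plan is to realise the bijection through the canonical deterministic HDAs of Proposition~\ref{p:DetPCS}. \emph{From a function to a language.} Given an ultimately periodic $\phi=(f,\tau)\colon\Nat\to\TD$ with $f(n+1)\ge f(n)-1$, fix its minimal period $(r,s)$. Since $f(n)\ge 1$ for every $n$ (being the first coordinate of an element of $\TD$) and $f(n)=f(n+r)$ for $n\ge s$, the triple $(f,r,s)$ meets the hypotheses of Lemma~\ref{l:DetermFun}, so Proposition~\ref{p:DetPCS} yields a deterministic accessible precubical set $X_\phi$ with single start vertex $x^0_0$ and $f_{X_\phi}=f$. Equip it with $\top_{X_\phi}=\{x^k_n\mid k\in\tau(n)\}$; this is well defined because the cells $x^k_n$ are identified exactly along the period $(r,s)$, which is also a period of $\tau$. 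Set $\Theta(\phi)=\Lang(X_\phi)$: it is the language of a deterministic HDA over one letter, and since the only start cell is a vertex, every event ipomset of an accepting path has empty source interface.

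\emph{From a language to a function.} Given a deterministic one-letter language $L$ all of whose ipomsets have empty start interface, pick a deterministic HDA $X$ with $\Lang(X)=L$. A positive-dimensional start cell $z$ would make every accepting path through $z$ carry an event ipomset $\id_{\ev(z)}*(\cdots)$ with nonempty source interface, so after deleting such (necessarily dead) start cells we may assume $\bot_X\subseteq X[\emptyset]$, hence $|\bot_X|\le 1$ by determinism; deleting inaccessible cells and adjoining a sink to complete $X_{\le 1}$ brings $X$ into the standing form of Section~\ref{se:one}. Read off $r=r(X)$, $s=s(X)$ from the successor sequence $(v_0,v_1,\dots)$ via Lemma~\ref{l:DetermVert}, take $f=f_X$, and—using that by determinism, for every $n$ and every $k\le f_X(n)$ there is a unique cell $x^k_n$ with $\delta^0_{\ev(x^k_n)}(x^k_n)=v_n$—set $\tau_L(n)=\{k\mid x^k_n\in\top_X\}\subseteq\{0,\dots,f_X(n)\}$. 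By Lemmas~\ref{l:DetermVert} and~\ref{l:DetermFun}, $\Xi(L):=(f_X,\tau_L)$ is an ultimately periodic function $\Nat\to\TD$ with $f_X(n+1)\ge f_X(n)-1$. Moreover the unique sparse accepting path realising $a^n*\starter{a^{\|_k}}{a^{\|_k}}$ must run along the spine and then start $k$ events at $v_n$, so $x^k_n\in\top_X$ iff $a^n*\starter{a^{\|_k}}{a^{\|_k}}\in L$; thus $\tau_L$ depends only on $L$.

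\emph{Mutual inversion.} Running $\Xi$ on $X_\phi$ recovers $f_{X_\phi}=f$ and, from $\top_{X_\phi}$, recovers $\tau$, so $\Xi\circ\Theta=\mathrm{id}$; running $\Theta$ on $\Xi(L)$ rebuilds, by the uniqueness clause of Proposition~\ref{p:DetPCS}, the same precubical set as the normalised $X$ with the same accept cells, so $\Lang(X_{\Xi(L)})=\Lang(X)=L$. The step I expect to require the most care is making $\Xi$ genuinely well defined, i.e.\ independent of the recognising HDA: this needs a Myhill--Nerode-type uniqueness for deterministic one-letter HDAs (pass to the minimal one, whose vertices correspond to the prefix quotients $a^n\backslash L$ and whose higher cells carry the type data $f_X(n),\tau_L(n)$), together with care about non-essential cells—when $\tau_L$ is eventually empty (finite $L$) the spine must still continue past the last accept cell, so $f_X$ must be taken as the least ultimately periodic function dominating the lower bounds $f(n)\ge\max(1,\sup\tau_L(n))$ and $f(n+1)\ge f(n)-1$, which is precisely the normalisation under which $\Theta$ hits exactly the functions of the second bullet. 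Once this canonical form is pinned down, the remaining verifications—that the accepting paths of $X_\phi$ realise exactly the ipomsets $a^n*\starter{a^{\|_k}}{a^{\|_k}}*Q$ permitted by $\tau$, closed under subsumption—are routine computations with step decompositions.
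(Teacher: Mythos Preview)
Your approach mirrors the paper's: realise both directions through the HDAs of Proposition~\ref{p:DetPCS}, reading off $(f_X,\tau)$ in one direction and building $X_\phi$ in the other. You go well beyond the paper's very brief argument by attempting to verify that $\Theta$ and $\Xi$ are mutual inverses, and you correctly isolate the crux---that $\Xi$ as written depends on the chosen recognising HDA, not just on~$L$.

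The gap is that your resolution of this issue is internally inconsistent, and the obstruction is genuine. You first argue $\Xi\circ\Theta=\mathrm{id}$ by evaluating $\Xi$ on $X_\phi$ itself; later you propose to make $\Xi$ well defined by passing to the minimal HDA, equivalently by taking the \emph{least} admissible $f$. These two moves cannot both hold: once $\Xi$ is normalised, $\Xi(\Theta(\phi))$ returns the minimal function, which need not be~$\phi$. Concretely, let $\phi_1$ have $f_1\equiv 1$, $\tau_1(0)=\{0\}$, $\tau_1(n)=\emptyset$ for $n\ge 1$, and let $\phi_2$ have $f_2(0)=2$, $f_2(n)=1$ for $n\ge 1$, with the same~$\tau$. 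Both are ultimately periodic elements of the second bullet with $f(n+1)\ge f(n)-1$, yet $\Theta(\phi_1)=\Theta(\phi_2)=\{\epsilon\}$: the extra $2$-cell in $X_{\phi_2}$ is non-essential and contributes nothing to the language. Hence $\Theta$ is not injective, and no bijection with \emph{all} functions of the stated form can exist. (The paper's own proof is silent on this point; what both arguments actually establish is a surjection from functions onto deterministic one-letter languages, together with a section.) Your closing sentence, that the normalisation is ``precisely the one under which $\Theta$ hits exactly the functions of the second bullet'', cannot be right for the same reason. To salvage a genuine $1$--$1$ correspondence you would have to restrict the function side---for instance to those $\phi$ for which every cell $x^k_n$ of $X_\phi$ is essential---and then redo the check that $\Xi\circ\Theta=\mathrm{id}$ under that restriction.
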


\begin{proof}
  Let $L$ be a deterministic language and let $X$ a deterministic HDA with $L=\Lang(X)$.
  By Proposition \ref{p:DetPCS},
  $X$ regarded as a precubical set is determined by an ultimately periodic function $f_X:\Nat\to\Nat_{\ge 1}$
  such that $f_X(n+1)\ge f_X(n)-1$.
  The sets $\tau(n)$ are defined by the condition that $k\in \tau(n)$ iff $x^k_n$ is accepting.
	
  For the converse, let $\phi: \Nat\to \TD$ be ultimately periodic and choose a period $(r,s)$.
  Let $X$ be the HDA defined in Proposition \ref{p:DetPCS},
  with $\top_X=\{x^k_n\mid k\in \tau(n)\}$,
  then $X$ is deterministic. \qed
\end{proof}

\section{Complement}
\label{se:comp}

The \emph{complement} of a language $L\subseteq \iiPoms$,
\ie $\iiPoms-L$, is generally not down-closed
and thus not a language.
If we define $\bcomp{}{L}=(\iiPoms-L)\down$,
then $\bcomp{}{L}$ \emph{is} a language,
but a \emph{pseudocomplement} rather than a complement:
because of down-closure, $L\cap \bcomp{}{L}=\emptyset$ is now false in general.
The following additional problem poses itself.

\begin{proposition}
  If $L$ is regular, then $\bcomp{}{L}$ has infinite width, hence is not regular.
\end{proposition}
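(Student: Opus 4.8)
The plan is to show that $\bcomp{}{L}=(\iiPoms-L)\down$ contains ipomsets of arbitrarily large width, which by Lemma~\ref{lem:finitewidth} (together with Theorem~\ref{th:kleene}) rules out regularity. The key observation is that the only obstruction to width-unboundedness of $\iiPoms-L$ would be if $L$ already contained \emph{all} ipomsets of all sufficiently large widths; but that is impossible, since $L$ itself is regular and hence has finite width by Lemma~\ref{lem:finitewidth}.

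**First I would** invoke Lemma~\ref{lem:finitewidth} to fix $w=\wid(L)<\infty$. Then for every $k>w$, the discrete ipomset $\id_{a^{\parallel k}}$ (or any ipomset of width $k$, e.g.\ the $k$-fold parallel product $a^{\parallel k}$) has width $k>w$ and therefore cannot lie in $L$; so it lies in $\iiPoms-L$, and a fortiori in $\bcomp{}{L}$. Since this holds for every $k>w$, we get $\wid(\bcomp{}{L})=\infty$.

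**Then** I would conclude: by Lemma~\ref{lem:finitewidth} every rational language has finite width, and by Theorem~\ref{th:kleene} regular equals rational, so a language of infinite width cannot be regular. Hence $\bcomp{}{L}$ is not regular. One small point to check is that down-closure does not decrease width: if $P\subsu Q$ then $\wid(P)\ge\wid(Q)$ (refinement adds precedence, which can only shrink antichains), so in fact taking the down-closure in the definition of $\bcomp{}{L}$ is harmless here — but we do not even need this, since the witnesses $a^{\parallel k}$ are already maximal under $\subsu$ and lie in $\iiPoms-L$ directly.

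**I do not expect any real obstacle**: the statement is essentially immediate once one recalls that regular (= rational) languages have bounded width while $\iiPoms$ does not. The only thing to be slightly careful about is exhibiting, for each $k$, a concrete ipomset of width exactly $k$ that is guaranteed to be outside $L$ — and $a^{\parallel k}$ (for any fixed $a\in\Sigma$) serves this purpose uniformly, since $\wid(a^{\parallel k})=k$.
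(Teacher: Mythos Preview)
Your proposal is correct and follows essentially the same approach as the paper: both use Lemma~\ref{lem:finitewidth} to bound $\wid(L)$, then observe that ipomsets of any larger width lie in $\iiPoms-L\subseteq\bcomp{}{L}$, forcing infinite width. Your version is slightly more explicit in exhibiting concrete witnesses $a^{\parallel k}$ and in invoking Theorem~\ref{th:kleene} to pass from ``rational'' to ``regular'', but the argument is the same.
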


\begin{proof}
  By Lemma~\ref{lem:finitewidth}, $\wid(L)$ is finite.
  For any $k>\wid(L)$, $\iiPoms-L$ contains all ipomsets of width $k$,
  hence $\{\wid(P)\mid P\in \bcomp{}{L}\}$ is unbounded.
  \qed
\end{proof}

To remedy this problem,
we introduce a width-bounded version of (pseudo) complement.
We fix an integer $k\ge 0$ for the rest of the section.

\begin{definition}
  \label{de:bcomp}
  The \emph{$k$-bounded complement} of $L\in \Langs$
  is
  $\bcomp{k}{L} = (\iiPoms_{\le k}- L)\down$.
\end{definition}

\begin{lemma}
  \label{le:bcomp_elm}
  Let $L$ and $M$ be languages.
  \begin{enumerate}
  \item $\bcomp{0}{L}=\{ \id_\emptyset \}-L$.
  \item $L\subseteq M$ implies $\bcomp{k}{M}\subseteq \bcomp{k}{L}$.
  \item \label{le:bcomp_elm.dbl} $\bcomp{k}{\bcomp{k}{L}}\subseteq L_{\le k}\subseteq L$.
  \item \label{le:bcomp_elm.krestrict} $\bcomp{k}{L} = \bcomp{k}{L_{\le k}}$
  \end{enumerate}
\end{lemma}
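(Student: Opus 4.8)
The plan is to isolate the one structural fact on which all four items rest, namely that \emph{width is monotone under subsumption}: if $P\subsu Q$ via a bijection $f$, then by item~\ref{en:subsu.prec} of the definition of $\subsu$, $f$ sends every $<_P$-antichain to a $<_Q$-antichain (if $f(x)<_Q f(y)$ for $x,y$ in an antichain of $P$, then $x<_P y$, a contradiction), so $\wid(P)\le\wid(Q)$. The two consequences I need are: $\iiPoms_{\le k}$ is down-closed, and, since $L$ is a language, $L_{\le k}=L\cap\iiPoms_{\le k}$ is again a language (an intersection of two down-closed sets). I expect this monotonicity observation — essentially a one-line check — to be the only place where anything about ipomsets enters; everything after it is Boolean manipulation of down-closed subsets of $\iiPoms$.

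Granting this, items~1, 2 and~4 are immediate. For item~1, $\wid(P)=0$ forces the carrier of $P$ to be empty (any singleton is a $<$-antichain), so $\iiPoms_{\le 0}=\{\id_\emptyset\}$; since $\id_\emptyset$ is $\subsu$-minimal, $(\{\id_\emptyset\}-L)\down=\{\id_\emptyset\}-L$. For item~2, $L\subseteq M$ gives $\iiPoms_{\le k}-M\subseteq\iiPoms_{\le k}-L$, and $\down$ is monotone. For item~4, one uses the set identity $\iiPoms_{\le k}-L_{\le k}=\iiPoms_{\le k}-L$ (deleting $L\cap\iiPoms_{\le k}$ from $\iiPoms_{\le k}$ deletes exactly the elements of $L$ lying in $\iiPoms_{\le k}$), after which the two down-closures defining $\bcomp{k}{L_{\le k}}$ and $\bcomp{k}{L}$ literally coincide.

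The only item requiring a moment's care is item~3. The inclusion $L_{\le k}\subseteq L$ is trivial, so the content is $\bcomp{k}{\bcomp{k}{L}}\subseteq L_{\le k}$. I would argue as follows: $\bcomp{k}{L}\supseteq\iiPoms_{\le k}-L$ since any set is contained in its own down-closure; hence $\iiPoms_{\le k}-\bcomp{k}{L}\subseteq\iiPoms_{\le k}-(\iiPoms_{\le k}-L)=L\cap\iiPoms_{\le k}=L_{\le k}$; now apply $\down$ and use that $L_{\le k}$ is down-closed (the one place the monotonicity lemma is invoked) to conclude $\bcomp{k}{\bcomp{k}{L}}=(\iiPoms_{\le k}-\bcomp{k}{L})\down\subseteq L_{\le k}$.

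The main obstacle, such as it is, is purely bookkeeping: remembering that $\bcomp{k}{(\cdot)}$ first intersects with $\iiPoms_{\le k}$, then complements inside $\iiPoms_{\le k}$, then closes downward, so that double complement is \emph{not} the identity but recovers precisely $L_{\le k}$ — which is also what makes item~3 a genuine (if mild) assertion rather than a tautology. Once the width-monotonicity lemma and hence the down-closedness of $\iiPoms_{\le k}$ and $L_{\le k}$ are recorded, nothing else is delicate.
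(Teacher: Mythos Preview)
Your proof is correct and follows essentially the same route as the paper's own proof: items~1, 2, and~4 are handled identically, and for item~3 both arguments use $\iiPoms_{\le k}-L\subseteq\bcomp{k}{L}$ to obtain $\iiPoms_{\le k}-\bcomp{k}{L}\subseteq L_{\le k}$ and then take down-closure. The only difference is that you make explicit the width-monotonicity fact (if $P\subsu Q$ then $\wid(P)\le\wid(Q)$) justifying $L_{\le k}\down=L_{\le k}$, whereas the paper uses this silently; this is a welcome bit of extra care but not a different approach.
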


\begin{proof}
  \mbox{}
  \begin{enumerate}
  \item $\bcomp{0}{L}=(\iiPoms_{\le 0}-L)\down=\{\epsilon\}-L$.
  \item $L\subseteq M$ implies $\iiPoms_k-M\subseteq \iiPoms_k-L$, thus
    $(\iiPoms_k-M)\down \subseteq (\iiPoms_k-L)\down$.
  \item We have $\iiPoms_{\le k}-L_{\le k}=\iiPoms_{\le k}-L\subseteq  \bcomp{k}{L}$, so by the previous item,
    $\iiPoms_{\le k}-\bcomp{k}{L}\subseteq \iiPoms_{\le k} - (\iiPoms_{\le k}-L_{\le k})=L_{\le k}$.
    Thus, $\bcomp{k}{\bcomp{k}{L}}\subseteq L_{\le k}\down=L_{\le k}$.
  \item $\bcomp{k}{L} = (\iiPoms_{\le k} - L) \down = (\iiPoms_{\le k} - L_{\le k}) \down = \bcomp{k}{L_{\le k}}$.
  \end{enumerate}
  \qed
\end{proof}

\begin{proposition}
  For any $k\ge 0$, $\bcomp{k}{\,\cdot\,\vphantom{L}}$ is a pseudocomplement on the lattice $(\Langs_{\le k}, {\supseteq})$,
  that is, for any $L, M\in \Langs_{\le k}$,
  $L\cup M=\iiPoms_k$ iff $\bcomp{k}{L}\subseteq M$.
\end{proposition}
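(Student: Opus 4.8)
The plan is to verify directly the biconditional asserted, which is precisely the defining property of a pseudocomplement, splitting the work into its two implications. Before that I would record the two facts that drive everything: first, both $L$ and $M$ are down-closed, since they lie in $\Langs_{\le k}\subseteq\Langs$; second, by the very definition of $\bcomp{k}{\cdot}$ as a down-closure we have $\iiPoms_{\le k}-L\subseteq\bcomp{k}{L}$, so any width-$\le k$ ipomset not in $L$ already belongs to $\bcomp{k}{L}$. I would also note once that $L\cup M\subseteq\iiPoms_{\le k}$ automatically, because $L$ and $M$ are $k$-dimensional languages.

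For the implication ``$\bcomp{k}{L}\subseteq M$ implies $L\cup M=\iiPoms_{\le k}$'': it remains to show $\iiPoms_{\le k}\subseteq L\cup M$. Given $P\in\iiPoms_{\le k}$, either $P\in L$ and we are done, or $P\in\iiPoms_{\le k}-L\subseteq\bcomp{k}{L}\subseteq M$, so $P\in M$. Hence $L\cup M=\iiPoms_{\le k}$.

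For the converse ``$L\cup M=\iiPoms_{\le k}$ implies $\bcomp{k}{L}\subseteq M$'': take $P\in\bcomp{k}{L}=(\iiPoms_{\le k}-L)\down$, so there is $Q\in\iiPoms_{\le k}-L$ with $P\subsu Q$. Since $Q\notin L$ and $Q\in\iiPoms_{\le k}=L\cup M$, we get $Q\in M$; because $M$ is down-closed and $P\subsu Q$, this forces $P\in M$. Thus $\bcomp{k}{L}\subseteq M$.

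I do not expect a genuine obstacle: the argument is a routine unwinding of the definition of $\bcomp{k}{\cdot}$ together with down-closure of $L$ and $M$. The two points worth flagging are that down-closure of $M$ is essential in the converse direction (and is exactly what $M\in\Langs_{\le k}$ provides), and that one should keep the lattice orientation straight — on $(\Langs_{\le k},{\supseteq})$ the join is $\cap$, the meet is $\cup$, the bottom is $\iiPoms_{\le k}$ and the top is $\emptyset$, so the stated equivalence is indeed the classical characterisation $M\le\bcomp{k}{L}\iff L\wedge M=0$ of a pseudocomplement. One may optionally add that $\bcomp{k}{L}$ itself lies in $\Langs_{\le k}$ (it is a down-closure, hence a language, and subsumption does not increase width, by reflection of the precedence order), so the pseudocomplement genuinely lives in the lattice, although this is not required for the biconditional.
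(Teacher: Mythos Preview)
Your proof is correct and follows essentially the same approach as the paper: both verify the biconditional directly, and your argument for the implication ``$L\cup M=\iiPoms_{\le k}$ implies $\bcomp{k}{L}\subseteq M$'' is identical to the paper's. For the other implication, the paper takes a slightly less direct route, arguing that $P\in\iiPoms_{\le k}-M$ gives $P\in\bcomp{k}{M}\subseteq\bcomp{k}{\bcomp{k}{L}}\subseteq L$ via the anti-monotonicity and double-complement bound of Lemma~\ref{le:bcomp_elm}; your observation that $\iiPoms_{\le k}-L\subseteq(\iiPoms_{\le k}-L)\down=\bcomp{k}{L}\subseteq M$ is more elementary and avoids invoking that lemma.
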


\begin{proof}
  Let $L, M\in \Langs_{\le k}$ such that $L\cup M=\iiPoms_k$ and $P\in \bcomp{k}{L}$.
  There exists $Q\in \iiPoms_{\le k}$ such that $P\subsu Q$ and $Q \not\in L$.
  Thus, $Q\in M$ and since $M$ is closed by subsumption, $P\in M$.

  Conversely, let $L, M\in \Langs_{\le k}$ such that $\bcomp{k}{L}\subseteq M$ and $P\in \iiPoms_{\le k}-M$.
  Then $P\in \bcomp{k}{M}$, and we have that $\bcomp{k}{M} \subseteq \bcomp{k}{\bcomp{k}{L}} \subseteq L$
  by Lemma~\ref{le:bcomp_elm}(\ref{le:bcomp_elm.dbl}).
  Thus, $P\in L$ and then $L\cup M=\iiPoms_k$.
  \qed
\end{proof}

The pseudocomplement property immediately gets us the following.

\begin{corollary}
  \label{cor:bcomp}
  Let $k\ge 0$ and $L, M\in \Langs_{\le k}$.  Then
  $L\cup \bcomp{k}{L}=\iiPoms_{\le k}$,
  $\smash[t]{\bcomp{k}{\bcomp{k}{\bcomp{k}{L}}}}=\bcomp{k}{L}$,
  $\bcomp{k}{L\cap M}=\bcomp{k}{L}\cup \bcomp{k}{M}$,
  $\bcomp{k}{L\cup M}  \subseteq \bcomp{k}{L} \cap \bcomp{k}{M}$, and
  $\bcomp{k}{\bcomp{k}{L\cup M}}=\bcomp{k}{\bcomp{k}{L}}\cup \bcomp{k}{\bcomp{k}{M}}$.
  Further, $\bcomp{k}{L}=\emptyset$ iff $L=\iiPoms_{\le k}$.
\end{corollary}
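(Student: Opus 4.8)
The plan is to obtain all six assertions from the pseudocomplement property just proved---that for $L,M\in\Langs_{\le k}$ one has $L\cup M=\iiPoms_{\le k}$ if and only if $\bcomp{k}{L}\subseteq M$---together with Lemma~\ref{le:bcomp_elm}. First I would note that $\bcomp{k}{N}\in\Langs_{\le k}$ for every language $N$ (it is down-closed by construction, and subsumption does not increase width), so that the pseudocomplement property may be applied to arbitrary iterated complements. Five of the six identities are then immediate. The equality $L\cup\bcomp{k}{L}=\iiPoms_{\le k}$ is the pseudocomplement property applied with second argument $\bcomp{k}{L}$, the hypothesis $\bcomp{k}{L}\subseteq\bcomp{k}{L}$ being trivial; from it, $\bcomp{k}{L}=\emptyset$ forces $L=L\cup\emptyset=\iiPoms_{\le k}$, while conversely $\bcomp{k}{\iiPoms_{\le k}}=(\iiPoms_{\le k}-\iiPoms_{\le k})\down=\emptyset$. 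The identity $\bcomp{k}{L\cap M}=\bcomp{k}{L}\cup\bcomp{k}{M}$ unfolds from $\iiPoms_{\le k}-(L\cap M)=(\iiPoms_{\le k}-L)\cup(\iiPoms_{\le k}-M)$ together with $(A\cup B)\down=A\down\cup B\down$. The inclusion $\bcomp{k}{L\cup M}\subseteq\bcomp{k}{L}\cap\bcomp{k}{M}$ is antitonicity of $\bcomp{k}{\cdot}$ (Lemma~\ref{le:bcomp_elm}) applied to $L\subseteq L\cup M$ and $M\subseteq L\cup M$. Lastly, $\bcomp{k}{\bcomp{k}{\bcomp{k}{L}}}=\bcomp{k}{L}$: for ``$\subseteq$'' use $\bcomp{k}{\bcomp{k}{N}}\subseteq N$ (Lemma~\ref{le:bcomp_elm}(\ref{le:bcomp_elm.dbl})) with $N=\bcomp{k}{L}$, and for ``$\supseteq$'' apply antitonicity to $\bcomp{k}{\bcomp{k}{L}}\subseteq L$.

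The only assertion requiring genuine work is $\bcomp{k}{\bcomp{k}{L\cup M}}=\bcomp{k}{\bcomp{k}{L}}\cup\bcomp{k}{\bcomp{k}{M}}$. For ``$\supseteq$'' I would combine the already established $\bcomp{k}{L\cup M}\subseteq\bcomp{k}{L}\cap\bcomp{k}{M}$ with antitonicity and with the instance $\bcomp{k}{\bcomp{k}{L}\cap\bcomp{k}{M}}=\bcomp{k}{\bcomp{k}{L}}\cup\bcomp{k}{\bcomp{k}{M}}$ of the $\cap$-identity above. For ``$\subseteq$'' the plan is to chase the pseudocomplement property along a short chain, mirroring the identity $(a^{**}\wedge b^{**})\wedge(a\wedge b)^{*}=0$ valid in any pseudocomplemented lattice: from $L\cup M\cup\bcomp{k}{L\cup M}=\iiPoms_{\le k}$ one deduces $\bcomp{k}{L}\subseteq M\cup\bcomp{k}{L\cup M}$; adjoining $\bcomp{k}{\bcomp{k}{L}}$ to both sides and using $\bcomp{k}{L}\cup\bcomp{k}{\bcomp{k}{L}}=\iiPoms_{\le k}$ yields $\bcomp{k}{M}\subseteq\bcomp{k}{L\cup M}\cup\bcomp{k}{\bcomp{k}{L}}$; adjoining $\bcomp{k}{\bcomp{k}{M}}$ in the same way yields $\bcomp{k}{L\cup M}\cup\bcomp{k}{\bcomp{k}{L}}\cup\bcomp{k}{\bcomp{k}{M}}=\iiPoms_{\le k}$; and one final application of the pseudocomplement property gives $\bcomp{k}{\bcomp{k}{L\cup M}}\subseteq\bcomp{k}{\bcomp{k}{L}}\cup\bcomp{k}{\bcomp{k}{M}}$.

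I expect this chain to be the only delicate step; everything else is routine bookkeeping with antitonicity and down-closure. Alternatively one could observe that $\bcomp{k}{\cdot}$ is exactly the pseudocomplement of the distributive lattice $(\Langs_{\le k},\supseteq)$ and invoke the standard facts about such lattices (collapse of the triple complement, the De Morgan identity $\bcomp{k}{L\cap M}=\bcomp{k}{L}\cup\bcomp{k}{M}$, and the behaviour of double complement on unions), but the self-contained computation above keeps the argument within the elementary style adopted in the paper.
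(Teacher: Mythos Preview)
Your proposal is correct and follows the same approach as the paper: deriving everything from the pseudocomplement property established in the preceding proposition, together with Lemma~\ref{le:bcomp_elm}. The paper itself omits the proof entirely, simply stating that ``the pseudocomplement property immediately gets us the following,'' treating the listed identities as standard facts about pseudocomplemented lattices; your write-up supplies the elementary details that the paper leaves implicit, including the only nontrivial step (the ``$\subseteq$'' direction of the double-complement-of-union identity), which you handle by a clean chain of applications of the defining biconditional.
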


For $k=0$ and $k=1$, $\bcomp{k}{\phantom{L}}$ is a complement on $\iiPoms_{\le k}$,
but for $k\ge 2$ it is not:
in general, neither $L=\bcomp{k}{\bcomp{k}{L}}$, $L\cap \bcomp{k}{L}=\emptyset$,
nor $\bcomp{k}{L\cup M}=\bcomp{k}{L}\cap \bcomp{k}{M}$ hold:

\begin{figure}[tbp]
  \centering
  \begin{tikzpicture}[scale=0.9, every node/.style={transform shape}]
    \tikzstyle{corner}=[draw, circle, fill = white,text width=-3mm];
    \draw[fill = gray, fill opacity=0.2] (0,0) -- (0,1.5) -- (3,1.5) -- (3,0) --cycle;
    \node[corner] (0) at (0,0) {};
    \node[corner] (1) at (1.5,0) {};
    \node[corner] (2) at (3,0) {};
    \node[corner] (3) at (0,1.5) {};
    \node[corner] (4) at (1.5,1.5) {};
    \node[corner] (5) at (3,1.5) {};
    
    \draw[-stealth] (3) to node[midway, above]{$a$} (4);
    \draw[-stealth] (4) to node[midway, above]{$b$} (5);
    \draw[-stealth] (0) to node[midway, left]{$c$} (3);
    
    \draw[-stealth] (0) to node[midway, below]{$\vphantom{b}a$} (1);
    \draw[-stealth] (1) to node[midway, below]{$b$} (2);
    \draw[-stealth] (1) to node[midway, left]{$c$} (4);
    
    \draw[-stealth] (2) to node[midway, right]{$c$} (5);
    
    \node[] (bot1) at (-0.4,-0.1) {$\bot$};
    \node[] (bot1) at (3.4,1.6) {$\top$};
    \begin{scope}[shift={(4.2,0)}]
      \tikzstyle{corner}=[draw, circle, fill = white,text width=-3mm];
      
      \draw[fill = gray, fill opacity=0.2] (0,0) -- (1.5,0) -- (1.5,3) -- (0,3) --cycle;
      
      \node[corner] (0) at (0,0) {};
      \node[corner] (1) at (0,1.5) {};
      \node[corner] (2) at (0,3) {};
      \node[corner] (3) at (1.5,0) {};
      \node[corner] (4) at (1.5,1.5) {};
      \node[corner] (5) at (1.5,3) {};
      
      \draw[-stealth] (3) to node[midway, right]{$a$} (4);
      \draw[-stealth] (4) to node[midway, right]{$b$} (5);
      \draw[-stealth] (0) to node[midway, below]{$c$} (3);
      
      \draw[-stealth] (0) to node[midway, left]{$a$} (1);
      \draw[-stealth] (1) to node[midway, left]{$b$} (2);
      \draw[-stealth] (1) to node[midway, above]{$c$} (4);
      
      \draw[-stealth] (2) to node[midway, above]{$c$} (5);
      
      \node[] (bot1) at (-0.4,-0.1) {$\bot$};
      \node[] (bot1) at (1.9,3.1) {$\top$};
    \end{scope}
    \begin{scope}[shift={(7.5,1.5)}]
      \node (a) at (0.4,0.7) {$a$};
      \node (c) at (0.4,-0.7) {$c$};
      \node (b) at (1.8,0) {$b$};
      \path (a) edge (b);
      \path[densely dashed, gray]  (b) edge (c);
      \path[densely dashed, gray]  (a) edge (c);
      
      \node (a2) at (3.4,-0.7) {$a$};
      \node (c2) at (3.4,0.7) {$c$};
      \node (b2) at (4.8,0) {$b$};
      \path (a2) edge (b2);
      \path[densely dashed, gray]  (c2) edge (b2);
      \path[densely dashed, gray]  (c2) edge (a2);
    \end{scope}
  \end{tikzpicture}
  \caption{HDA $X$ (disconnected) which accepts language $L$ of Example~\ref{ex:compnotcup}
    and the two generating ipomsets in $L$.}
  \label{fig:countercompunion}
\end{figure}

\begin{example}
  \label{ex:compnotcup}
  Let $A=\{P\in \iiPoms_{\le 2} \mid abc \subsu P \}$, $L=\{\loset{a\to b\\c}, \loset{c\\a\to b}\}\down$
  and $M=(A - L)\down$.
  The HDA $X$ in Figure~\ref{fig:countercompunion} accepts $L$.
  Notice that due to the non-commutativity of parallel composition (because of event order),
  $X$ consists of two parts, one a ``transposition'' of the other.
  The left part accepts $\loset{a\to b\\c}$, while the right part accepts $\loset{c\\a\to b}$.

  Now $a b c\subsu \loset{a\to c\\b}$ which is not in $L$, so that $a b c\in \bcomp{2}{L}$.
  Similarly, $a b c\subsu \loset{a\to b\\c}\notin M$, so $a b c\in \bcomp{2}{M}$.
  Thus, $a b c\in \bcomp{2}{L}\cap \bcomp{2}{M}$.
  On the other hand,
  for any $P$ such that $\wid(P)\le 2$ and $abc \subsu P$, we have $P \in L\cup M=A\down$.
  Hence $a b c\notin \bcomp{2}{L\cup M}$.
  
  Finally,  $\bcomp{3}{L}$ contains every ipomset of width $3$,
  hence $\bcomp{3}{L}=\iiPoms_{\le 3}$,
  so that $L\cap \bcomp{3}{L}=L\ne \emptyset$
  and $\bcomp{3}{\bcomp{3}{L}}=\emptyset\ne L$.
  This may be generalised to the fact that $\smash[t]{\bcomp{k}{\bcomp{k}{L}}}=\emptyset$
  as soon as $\wid(L)<k$.
\end{example}

For $k\ge 0$, let $\skeletal_k$ be the set of all languages $L$ for which $L = \bcomp{k}{\bcomp{k}{L}}$.
We characterise $\skeletal_k$ in the following.
By $\bcomp{k}{\bcomp{k}{\bcomp{k}{L}}}=\bcomp{k}{L}$ (Cor.~\ref{cor:bcomp}),
$\skeletal_k=\{\bcomp{k}{L}\mid L\in \Langs\}$,
\ie $\skeletal_k$ is the image of $\Langs$ under $\bcomp{k}{\phantom{L}}$.
(This is a general property of pseudocomplements.)

Define
$\maxSubsume_k =
  \{P \in \iiPoms_{\le k} \mid \forall Q \in \iiPoms_{\le k}: Q\neq P\implies P \not\subsu Q \}$,
the set of all $\subsu$-maximal elements of $\iiPoms_{\le k}$.
In particular, $\maxSubsume_k\down = \iiPoms_{\le k}$.
Note that $P\in \maxSubsume_k$ does not imply $\wid(P) = k$:
for example, $\loset{a\\b}\in \maxSubsume_3$.

\begin{lemma}
  \label{le:downmax}
  For any $L\in\Langs$,
  $\bcomp{k}{L}=(\maxSubsume_k - L)\down$.
\end{lemma}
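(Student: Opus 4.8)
The plan is to prove the two inclusions $\bcomp{k}{L} \subseteq (\maxSubsume_k - L)\down$ and $(\maxSubsume_k - L)\down \subseteq \bcomp{k}{L}$ separately, the second being essentially immediate from monotonicity of $\down$ and the inclusion $\maxSubsume_k \subseteq \iiPoms_{\le k}$.

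For the \emph{easy inclusion} $(\maxSubsume_k - L)\down \subseteq \bcomp{k}{L}$: since $\maxSubsume_k \subseteq \iiPoms_{\le k}$ we have $\maxSubsume_k - L \subseteq \iiPoms_{\le k} - L$, and taking $\down$ on both sides (which is monotone) gives $(\maxSubsume_k - L)\down \subseteq (\iiPoms_{\le k} - L)\down = \bcomp{k}{L}$.

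For the \emph{main inclusion} $\bcomp{k}{L} \subseteq (\maxSubsume_k - L)\down$: take $P \in \bcomp{k}{L} = (\iiPoms_{\le k} - L)\down$, so there exists $Q \in \iiPoms_{\le k}$ with $Q \notin L$ and $P \subsu Q$. The idea is to push $Q$ up to a $\subsu$-maximal element of $\iiPoms_{\le k}$ while staying outside $L$. Since $\subsu$ restricted to $\iiPoms_{\le k}$ is a partial order with finite-height chains (subsumption only ever removes precedence, and both the carrier size and the width stay bounded), there exists $R \in \maxSubsume_k$ with $Q \subsu R$. Here the key point is that $R \notin L$: because $L$ is a language, hence $\down$-closed, if $R$ were in $L$ then $Q \subsu R$ would force $Q \in L$, contradicting the choice of $Q$. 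Thus $R \in \maxSubsume_k - L$, and by transitivity $P \subsu Q \subsu R$ gives $P \in (\maxSubsume_k - L)\down$.

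The step I expect to need the most care is the claim that every element of $\iiPoms_{\le k}$ is below some $\subsu$-maximal element of $\iiPoms_{\le k}$; this is where one must use that ascending $\subsu$-chains within $\iiPoms_{\le k}$ are finite. This follows because a subsumption $P \subsu Q$ preserves the carrier (bijectively) and can only shrink the precedence relation $<$, so along any strictly ascending chain the (finite) cardinality of $<$ strictly decreases, forcing termination; alternatively one invokes that the discrete ipomsets of width $\le k$ are exactly the maximal elements (as noted after Lemma~\ref{le:ipomdense}, subsumption has maximal elements the discrete ipomsets), so one may simply take $R$ to be the discretisation of $Q$, which has the same width and hence still lies in $\iiPoms_{\le k}$. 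Either way the argument is short once this finiteness/maximality fact is in hand.
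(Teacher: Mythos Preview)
Your argument is correct and matches the paper's proof, which compresses the two inclusions into a single chain of biconditionals but relies on exactly the same observation: any $Q\in\iiPoms_{\le k}-L$ sits below some $\subsu$-maximal $R\in\iiPoms_{\le k}$, and down-closure of $L$ keeps $R$ outside $L$.

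One caution: your \emph{alternative} justification via ``discretisation'' is incorrect. Erasing all precedence from $Q$ produces a discrete ipomset of width $|Q|$, not $\wid(Q)$, so it will in general fall outside $\iiPoms_{\le k}$; accordingly the $\subsu$-maximal elements of $\iiPoms_{\le k}$ are \emph{not} just the discrete ipomsets of width at most $k$ (for example, the word $ab$ is already maximal in $\iiPoms_{\le 1}$, since its only proper subsumer $\loset{a\\b}$ has width $2$). Your first argument---that a proper subsumption strictly shrinks the precedence relation, so ascending chains on a fixed finite carrier terminate---is the correct one and is all that is needed.
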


\begin{proof}
  We have
  \begin{align*}
    Q\in \bcomp{k}{L}
    &\iff
    \exists P\in (\iiPoms_{\le k}- L):\; Q\subsu P\\
    &\iff
    \exists P\in (\iiPoms_{\le k}- L)\cap \maxSubsume_k:\; Q\subsu P\\
    &\iff
    \exists P\in \maxSubsume_k-L:\; Q\subsu P
    \iff
    Q\in (\maxSubsume_k - L)\down.
  \end{align*}

  \vspace*{-5ex}\qed
\end{proof}

\begin{corollary}
  \label{co:maxsubsume}
  Let $L \in \Langs$ and $k \ge 0$,
  then
  $\bcomp{k}{L} = \iiPoms_{\le k}$ iff $L \cap \maxSubsume_k = \emptyset$.
\end{corollary}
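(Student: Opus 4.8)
The plan is to derive this corollary directly from Lemma~\ref{le:downmax}, which already rewrites $\bcomp{k}{L}$ as $(\maxSubsume_k - L)\down$, together with the observation recorded just before that lemma that $\maxSubsume_k\down = \iiPoms_{\le k}$. So almost all the content is packaged in the preceding lemma; what remains is a short two-direction argument about when the down-closure of $\maxSubsume_k - L$ is everything.

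For the backward implication, suppose $L \cap \maxSubsume_k = \emptyset$. Then $\maxSubsume_k - L = \maxSubsume_k$, and hence $\bcomp{k}{L} = (\maxSubsume_k - L)\down = \maxSubsume_k\down = \iiPoms_{\le k}$, using Lemma~\ref{le:downmax} and the remark that $\maxSubsume_k\down = \iiPoms_{\le k}$. This direction is essentially immediate.

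For the forward implication, assume $\bcomp{k}{L} = \iiPoms_{\le k}$ and argue by contradiction: take some $P \in L \cap \maxSubsume_k$. Since $P \in \iiPoms_{\le k} = \bcomp{k}{L} = (\maxSubsume_k - L)\down$ by Lemma~\ref{le:downmax}, there is $Q \in \maxSubsume_k - L$ with $P \subsu Q$. But $P \in \maxSubsume_k$ and $Q \in \iiPoms_{\le k}$, so by the definition of $\maxSubsume_k$ (maximality of $P$ under $\subsu$ within $\iiPoms_{\le k}$) we must have $Q = P$; then $Q = P \in L$, contradicting $Q \notin L$. Hence $L \cap \maxSubsume_k = \emptyset$.

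I do not expect a genuine obstacle here: the only point requiring a little care is to invoke the maximality clause of the definition of $\maxSubsume_k$ correctly — namely that if $P \in \maxSubsume_k$, $Q \in \iiPoms_{\le k}$ and $P \subsu Q$, then $Q = P$ — and to make sure the witness $Q$ produced by the down-closure really does lie in $\iiPoms_{\le k}$, which it does because $\maxSubsume_k \subseteq \iiPoms_{\le k}$. Everything else is unwinding the definitions of $\bcomp{k}{\cdot}$ and $(\cdot)\down$.
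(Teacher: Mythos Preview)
Your proof is correct and follows exactly the route the paper intends: the corollary is stated without explicit proof, as an immediate consequence of Lemma~\ref{le:downmax} together with $\maxSubsume_k\down = \iiPoms_{\le k}$, and your two-direction argument spells out precisely these details.
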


\begin{proposition}
  $\skeletal_k=\{A\down\mid A\subseteq \maxSubsume_k\}$.
\end{proposition}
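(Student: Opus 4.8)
The plan is to prove the two inclusions separately, using the identity $\skeletal_k=\{\bcomp{k}{L}\mid L\in\Langs\}$ noted just above together with Lemma~\ref{le:downmax}, which identifies $\maxSubsume_k-L$ as a canonical ``witness set'' for $\bcomp{k}{L}$.

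For $\skeletal_k\subseteq\{A\down\mid A\subseteq\maxSubsume_k\}$: given $L\in\skeletal_k$, write $L=\bcomp{k}{M}$ with $M=\bcomp{k}{L}\in\Langs$. By Lemma~\ref{le:downmax}, $L=(\maxSubsume_k-M)\down$, so $A:=\maxSubsume_k-M\subseteq\maxSubsume_k$ witnesses $L=A\down$.

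For the reverse inclusion, take $A\subseteq\maxSubsume_k$ and put $L=A\down$. The inclusion $\bcomp{k}{\bcomp{k}{L}}\subseteq L$ holds for every language by Lemma~\ref{le:bcomp_elm}(\ref{le:bcomp_elm.dbl}). For the opposite inclusion I would take $P\in L$ and choose $Q\in A$ with $P\subsu Q$, then show $Q\notin\bcomp{k}{L}$: if $Q\subsu R$ for some $R\in\iiPoms_{\le k}-L$, then $\subsu$-maximality of $Q$ in $\iiPoms_{\le k}$ (i.e.\ $Q\in\maxSubsume_k$) forces $R=Q$, contradicting $Q\in A\subseteq A\down=L$. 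Hence $Q\in\maxSubsume_k-\bcomp{k}{L}$, and since $P\subsu Q$, Lemma~\ref{le:downmax} applied to the language $\bcomp{k}{L}$ yields $P\in(\maxSubsume_k-\bcomp{k}{L})\down=\bcomp{k}{\bcomp{k}{L}}$. Combining the two inclusions gives $L=\bcomp{k}{\bcomp{k}{L}}$, i.e.\ $L\in\skeletal_k$.

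No step is genuinely hard; the one point to get right is that $\maxSubsume_k-L$ serves as the witness set both for $\bcomp{k}{L}$ and, one level up, for $\bcomp{k}{\bcomp{k}{L}}$, which is exactly the content of Lemma~\ref{le:downmax} and the hinge of both directions.
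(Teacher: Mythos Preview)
Your proof is correct and rests on the same two ingredients as the paper's: Lemma~\ref{le:downmax} and the $\subsu$-maximality of elements of $\maxSubsume_k$. The only difference is presentational. For the inclusion $\supseteq$ the paper computes $\bcomp{k}{\bcomp{k}{A\down}}$ in a single chain of equalities,
\[
  \bcomp{k}{\bcomp{k}{A\down}}
  = \bcomp{k}{(\maxSubsume_k-A\down)\down}
  = \bcomp{k}{(\maxSubsume_k-A)\down}
  = (\maxSubsume_k-(\maxSubsume_k-A))\down
  = A\down,
\]
applying Lemma~\ref{le:downmax} twice together with the observation $\maxSubsume_k\cap B\down=B$ for $B\subseteq\maxSubsume_k$ (which is exactly the maximality argument you spell out element-wise). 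You instead split into two inclusions, getting $\bcomp{k}{\bcomp{k}{L}}\subseteq L$ for free from Lemma~\ref{le:bcomp_elm}(\ref{le:bcomp_elm.dbl}) and arguing the other one by chasing a witness $Q\in A$ above a given $P$. Both routes are short; the paper's is marginally slicker since it avoids the case split, while yours makes the role of maximality more explicit.
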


\begin{proof}
  Inclusion $\subseteq$ follows from Lemma~\ref{le:downmax}.
  For the other direction,
  $A\subseteq \maxSubsume_k$ implies
  \begin{equation*}
    \bcomp{k}{\bcomp{k}{A\down\,}}
    = \bcomp{k}{(\maxSubsume_k-A\down)\down}
    = \bcomp{k}{(\maxSubsume_k-A)\down}
    = (\maxSubsume_k-(\maxSubsume_k-A))\down
    = A\down.
  \end{equation*}

  \vspace*{-5ex}\qed
\end{proof}

If $A\neq B\subseteq \maxSubsume_k$,
then also $A\down\neq B\down$,
since all elements of $\maxSubsume_k$ are $\subsu$-maximal.
As a consequence,
$\skeletal_k$ and the powerset $\mathcal{P}(\maxSubsume_k)$ are isomorphic lattices,
hence $\skeletal_k$ is a distributive lattice with join $L\vee M=L\cup M$
and meet $L\wedge M=(L\cap M\cap \maxSubsume_k)\down$.

\begin{corollary}
  For $L, M\in \Langs$, $\bcomp{k}{L} = \bcomp{k}{M}$ iff $L\cap \maxSubsume_k=M\cap \maxSubsume_k$.
\end{corollary}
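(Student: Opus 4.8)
The plan is to reduce the statement entirely to Lemma~\ref{le:downmax}, which rewrites $\bcomp{k}{L}$ as $(\maxSubsume_k - L)\down$, together with the injectivity of down-closure on subsets of $\maxSubsume_k$ that was recorded just before this corollary: if $A\neq B\subseteq \maxSubsume_k$ then $A\down\neq B\down$. The key algebraic observation is that $\maxSubsume_k - L = \maxSubsume_k - (L\cap \maxSubsume_k)$, so $\bcomp{k}{L}$ depends on $L$ only through the set $L\cap\maxSubsume_k\in\mathcal P(\maxSubsume_k)$, and it does so injectively.

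First I would handle the easy direction: if $L\cap\maxSubsume_k = M\cap\maxSubsume_k$, then $\maxSubsume_k - L = \maxSubsume_k - M$, hence $(\maxSubsume_k - L)\down = (\maxSubsume_k - M)\down$, which by Lemma~\ref{le:downmax} is exactly $\bcomp{k}{L} = \bcomp{k}{M}$; this uses no injectivity. For the converse, suppose $\bcomp{k}{L} = \bcomp{k}{M}$. By Lemma~\ref{le:downmax} this means $(\maxSubsume_k - L)\down = (\maxSubsume_k - M)\down$. Since $\maxSubsume_k - L$ and $\maxSubsume_k - M$ are both subsets of $\maxSubsume_k$ and every element of $\maxSubsume_k$ is $\subsu$-maximal, the down-closure map is injective on $\mathcal P(\maxSubsume_k)$ (indeed $A = A\down\cap\maxSubsume_k$ for $A\subseteq\maxSubsume_k$), so $\maxSubsume_k - L = \maxSubsume_k - M$. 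Taking set-complements within the fixed ground set $\maxSubsume_k$ and using $\maxSubsume_k - L = \maxSubsume_k - (L\cap\maxSubsume_k)$ gives $L\cap\maxSubsume_k = M\cap\maxSubsume_k$.

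Alternatively, the whole corollary drops out of the lattice isomorphism $\skeletal_k\cong\mathcal P(\maxSubsume_k)$ already established: under it, $\bcomp{k}{L}$ corresponds to $\maxSubsume_k - (L\cap\maxSubsume_k)$, and the map $A\mapsto\maxSubsume_k - A$ on $\mathcal P(\maxSubsume_k)$ is a bijection, so $\bcomp{k}{L} = \bcomp{k}{M}$ iff $L\cap\maxSubsume_k = M\cap\maxSubsume_k$. I do not anticipate any real obstacle here; the only non-formal ingredient is the injectivity of $A\mapsto A\down$ on $\mathcal P(\maxSubsume_k)$, which is precisely the remark preceding the corollary, so the argument is a couple of lines.
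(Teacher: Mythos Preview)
Your proposal is correct and follows exactly the approach the paper intends: the corollary is stated without proof, immediately after Lemma~\ref{le:downmax} and the remark that $A\mapsto A\down$ is injective on $\mathcal P(\maxSubsume_k)$, and your argument simply unpacks these two ingredients. The alternative phrasing via the lattice isomorphism $\skeletal_k\cong\mathcal P(\maxSubsume_k)$ is also precisely what the paper records just before the corollary.
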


We can now show that bounded complement preserves regularity.

\begin{theorem}
  \label{th:compClosure}
  If $L\in \Langs$ is regular, then for all $k\ge 0$ so is $\bcomp{k}{L}$.
\end{theorem}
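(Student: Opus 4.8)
The idea is to pass through ST-automata and the word-language translation functions $\Phi,\Psi$, using the fact that bounded complement can be ``computed'' on the side of word languages over the finite alphabet $\Omega_{\le k}$, where complementation of regular languages is available. First I would reduce to the skeletal case: by Lemma~\ref{le:bcomp_elm}(\ref{le:bcomp_elm.krestrict}), $\bcomp{k}{L}=\bcomp{k}{L_{\le k}}$, and by Proposition~\ref{prop:regrestriction} the language $L_{\le k}$ is again regular, so I may assume $L=L_{\le k}\in\Langs_{\le k}$ is recognized by a finite HDA $X$ with $\dim(X)\le k$. The goal is then to build a finite HDA for $\bcomp{k}{L}$.

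Next I would work on the word side. Consider $\Wang(\ST(X))=\Phi(L)\cap\Cohnew$ by Lemma~\ref{le:philang}, a regular word language over $\Omega_{\le k}$. I want to produce a regular word language $W$ over $\Omega_{\le k}$ with $\Psi(W)=\bcomp{k}{L}$, and then recognize $\Psi(W)$ by a finite HDA. For the first part: let $U\subseteq\Omega_{\le k}^*$ be the regular language of \emph{all} coherent words, i.e.\ those in $\Cohnew$ whose starter/terminator letters compose to an ipomset in $\iiPoms_{\le k}$ — equivalently the words accepted by $\ST$ of the ``full'' HDA on $k$ letters (the one with a cell of every type $U\in\square_{\le k}$ and all faces, all cells initial and accepting), whose language is $\iiPoms_{\le k}$. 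Then $\Psi(U)=\iiPoms_{\le k}$. Now take the word-language complement inside $U$: $W_0=U\setminus\Wang(\ST(X))$, which is regular. By Lemma~\ref{le:psireg}, $\Psi(W_0)=\Psi(U)\setminus\cdots$ — careful: $\Psi$ does \emph{not} commute with complement (only with union, concatenation, $+$). So instead I would argue directly: $\Psi(W_0)\down$-closure handles subsumption, and I claim $\Psi(W_0)=\bcomp{k}{L}$. Indeed $P\in\iiPoms_{\le k}-L$ iff $P$ has a step decomposition (a coherent word) in $U$ that is not in $\Phi(L)$, hence not in $\Wang(\ST(X))$ (here I use $\Phi(L)\cap\Cohnew=\Wang(\ST(X))$ and that every ipomset has a step decomposition beginning and ending with identities); conversely any word in $W_0$ composes to some $Q\in\iiPoms_{\le k}$, and if $Q\in L$ then \emph{all} its coherent step-decomposition words lie in $\Phi(L)\cap\Cohnew=\Wang(\ST(X))$, contradiction, so $Q\notin L$. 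Taking down-closure, $\Psi(W_0)=(\iiPoms_{\le k}-L)\down=\bcomp{k}{L}$.

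Finally I would convert the regular word language $W_0$ over $\Omega_{\le k}$ back into a finite HDA recognizing $\Psi(W_0)$. Given a finite automaton $\mathcal A$ over $\Omega_{\le k}$, I restrict to the ``consistent'' part — states can be labelled by conclists so that each transition $(q,\ilo{S}{V}{T},r)$ has $\lambda(q)=S$, $\lambda(r)=T$ — by a standard product/subset refinement, obtaining an ST-automaton $A$ with $\Lang(A)=W_0\cap\Cohnew'$ for the relevant coherence language, without changing $\Psi$ of it (coherent words are exactly the ones $\Psi$ does not discard). Then I reconstruct an HDA: cells are states, $\ev$ is the conclist labelling, and each transition labelled by a starter $\starter{V}{A}$ (resp.\ terminator $\terminator{V}{A}$) gives a lower (resp.\ upper) face-map relation; one completes this to a genuine precubical set by the usual saturation (adding the missing lower faces of higher cells and quotienting to enforce the precubical identities), with start/accept cells inherited from $I,F$. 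This is essentially the inverse of the $\ST$ construction used in \cite{conf/ramics/AmraneBCFZ24}, and I would cite it; its language is $\Psi(\Lang(A))=\Psi(W_0)=\bcomp{k}{L}$.

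\textbf{Main obstacle.} The delicate point is the last step — turning a finite automaton over $\Omega_{\le k}$ into a finite HDA whose language is the $\Psi$-image — because $\ST$ is not surjective up to language equivalence onto ST-automata, and arbitrary ST-automata need not arise from precubical sets (the precubical identities $\delta_A^\nu\delta_B^\mu=\delta_B^\mu\delta_A^\nu$ must be forced). Handling this requires either invoking the explicit construction of \cite{conf/ramics/AmraneBCFZ24} that sends ST-automata to HDAs preserving $\Psi$ of the language, or redoing it: from $A$ one builds the HDA whose $k$-cells are tuples recording, for each dimension and each way of unstarting/terminating, a consistent state of $A$, thereby saturating the partial face structure; since all words of interest are coherent, passing to this saturation does not change the accepted ipomset language. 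A secondary (but routine) subtlety is bookkeeping with identity loops and the requirement in $\Cohnew$ that words begin and end with identities, which matters for the claim $\Psi(W_0)=\bcomp{k}{L}$ at the level of single-vertex (width-$0$) ipomsets; Lemma~\ref{le:inclphiiff}'s proof idea (every ipomset has a step decomposition starting with an identity) covers exactly this edge case.
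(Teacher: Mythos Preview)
Your outline coincides with the paper's proof up to and including the identification
\[
  \bcomp{k}{L}=\Psi\bigl(\Cohnew-\Phi(L_{\le k})\bigr),
\]
reached via Lemma~\ref{le:bcomp_elm}(\ref{le:bcomp_elm.krestrict}), Proposition~\ref{prop:regrestriction}, Lemma~\ref{le:philang}, and the direct argument you give (which is exactly the paper's). Your introduction of a separate ``full'' coherent language $U$ is harmless but unnecessary: the paper works with $\Cohnew$ itself, since $\Psi$ discards the interface-incoherent words anyway.

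Where you diverge is the last step. You treat ``turn the word automaton back into an HDA'' as the main obstacle and propose either citing the ST-to-HDA construction of \cite{conf/ramics/AmraneBCFZ24} or redoing a saturation/quotient construction. The paper avoids this entirely. You yourself note that $\Psi$ commutes with union, concatenation, and ${}^+$ (Lemma~\ref{le:psireg}); the paper simply uses this: any regular word language over the \emph{finite} alphabet $\Omega_{\le k}$ has a rational expression, $\Psi$ distributes over the rational operations, $\Psi$ of a single letter is a singleton ipomset language, hence $\Psi\bigl(\Cohnew-\Phi(L_{\le k})\bigr)$ is a rational ipomset language, and then Theorem~\ref{th:kleene} (Kleene) gives regularity. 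No HDA is ever built by hand.

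So your plan is not wrong---the ST-automaton-to-HDA route can be made to work with external input---but your ``main obstacle'' is an artefact of overlooking the one-line consequence of a lemma you already quoted. The paper's route is shorter and stays within the results already established in the article; your route would add a nontrivial construction (forcing the precubical identities, which is genuinely delicate) that the paper deliberately sidesteps.
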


\begin{proof}
  By Proposition~\ref{prop:regrestriction}, $L_{\le k}$ is regular.
  Let $X$ be an HDA such that $\Lang(X) = L_{\le k}$ and $k=\dim(X)$.
  The $\Omega_{\le k}$-language $\Cohnew\cap \Phi(\Lang(X))$
  is regular by Lemma~\ref{le:philang},
  hence so is $\Cohnew-\Phi(\Lang(X))$.
  By Lemma~\ref{le:psireg}, $\Psi$ preserves regularity,
  so $\Psi(\Cohnew-\Phi(\Lang(X)))$ is a regular ipomset language.
  Now for $P\in \iiPoms_{\le k}$ we have
  \begin{align*}
    P &\in \Psi(\Cohnew-\Phi(L_{\le k})) \\
    &\iff
    \exists Q \sqsupseteq P, \exists Q_1\dotsm Q_n\in\Cohnew - \Phi(L_{\le k})
    : Q=Q_1*\dotsm* Q_n
    \\
    &\iff
    \exists Q_1\dotsm Q_n\in\Cohnew
    : P\subsu Q_1*\dotsm*Q_n\not\in L_{\le k}
    \\ &\iff P\in \bcomp{k}{L_{\le k}},
  \end{align*}
  hence $\bcomp{k}{L_{\le k}}=\Psi(\Cohnew-\Phi(L_{\le k}))$.
  Lemma~\ref{le:bcomp_elm}(\ref{le:bcomp_elm.krestrict}) allows us to conclude.
  \qed
\end{proof}

\begin{corollary}
  $\iiPoms_{\le k}$ is regular for every $k\ge 0$.
\end{corollary}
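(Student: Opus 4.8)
The plan is to obtain $\iiPoms_{\le k}$ as a bounded complement of a trivially regular language and then invoke Theorem~\ref{th:compClosure}. Concretely, I would take $L=\emptyset$. This language is regular: it is one of the generators of the rational languages, hence rational, hence regular by Theorem~\ref{th:kleene}. Applying Theorem~\ref{th:compClosure} to $L=\emptyset$ yields that $\bcomp{k}{\emptyset}$ is regular for every $k\ge 0$.

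It then remains to identify $\bcomp{k}{\emptyset}$ with $\iiPoms_{\le k}$. By Definition~\ref{de:bcomp}, $\bcomp{k}{\emptyset}=(\iiPoms_{\le k}-\emptyset)\down=\iiPoms_{\le k}\down$, so the only thing to verify is that $\iiPoms_{\le k}$ is already down-closed. This is the one genuine (but minor) point of the argument: if $P\subsu Q$ via a bijection $f$ that reflects precedence, then every $<_P$-antichain is mapped to a $<_Q$-antichain, so $\wid(P)\le\wid(Q)$; consequently $\wid(Q)\le k$ implies $\wid(P)\le k$, i.e.\ $\iiPoms_{\le k}\down=\iiPoms_{\le k}$. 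Hence $\bcomp{k}{\emptyset}=\iiPoms_{\le k}$, and the corollary follows.

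As an alternative phrasing one could instead argue directly from Corollary~\ref{cor:bcomp} (for $L=\iiPoms_{\le k}$ we have $\bcomp{k}{L}=\emptyset$, which is regular, and then $\bcomp{k}{\bcomp{k}{L}}$ would need care), but the cleanest route is the one above, starting from $\emptyset$. There is essentially no serious obstacle here; the statement is a direct specialisation of Theorem~\ref{th:compClosure}, and the only thing to keep in mind is the down-closedness of $\iiPoms_{\le k}$, which is immediate from the monotonicity of width under subsumption.
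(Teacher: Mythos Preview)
Your proof is correct and is exactly the intended argument: the paper states the corollary immediately after Theorem~\ref{th:compClosure} without proof, the implicit application being $\bcomp{k}{\emptyset}=\iiPoms_{\le k}$. Your extra verification that $\iiPoms_{\le k}$ is down-closed (via $\wid(P)\le\wid(Q)$ whenever $P\subsu Q$) is something the paper simply takes for granted throughout, so you are if anything slightly more careful than the original.
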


\section{Conclusion and further work}

We have advanced the theory of higher-dimensional automata (HDAs) along several lines:
we have
shown a pumping lemma,
exposed a regular language of unbounded ambiguity,
introduced width-bounded complement,
shown that regular languages are closed under intersection and width-bounded complement,
and shown that determinism of a regular language and inclusion of regular languages are decidable.

A question which is still open is if it is decidable
whether a regular language is of bounded ambiguity.
On a more general level, a notion of recognizability is still missing.
This is complicated by the fact that
ipomsets do not form a monoid but rather a 2-category with lax tensor~\cite{DBLP:journals/iandc/FahrenbergJSZ22}.

Even more generally, a theory of weighted and/or timed HDAs would be called for,
with a corresponding Kleene-Schützenberger theorem.
For timed HDAs, some work is available in~\cite{DBLP:journals/lites/Fahrenberg22, DBLP:conf/apn/AmraneBCF24}.
For weighted HDAs, the convolution algebras of~\cite{journals/alguni/FahrenbergJSZ23} may provide a useful framework.
\cite{conf/ramics/AmraneBCFZ24} shows a close relation between ipomsets and non-free category generated by starters and terminators,
and we are wondering whether this could provide a path both to a treatment of recognizability and to weighted HDAs.

\bibliographystyle{plain}
\bibliography{mybib}

\newcommand{\Afirst}[1]{#1} \newcommand{\afirst}[1]{#1}
\begin{thebibliography}{10}

\bibitem{DBLP:conf/apn/AmraneBCF24}
Amazigh Amrane, Hugo Bazille, Emily Clement, and Uli Fahrenberg.
\newblock Languages of higher-dimensional timed automata.
\newblock In Lars~Michael Kristensen and Jan Martijn E.~M. van~der Werf,
  editors, {\em {PETRI} {NETS}}, volume 14628 of {\em Lecture Notes in Computer
  Science}, pages 197--219. Springer, 2024.

\bibitem{conf/ramics/AmraneBCFZ24}
Amazigh Amrane, Hugo Bazille, Emily Clement, Uli Fahrenberg, and Krzysztof
  Ziemia{\'n}ski.
\newblock Presenting interval pomsets with interfaces.
\newblock In {\em \mbox{RAMiCS}}, 2024.
\newblock Accepted.

\bibitem{conf/dlt/AmraneBFF24}
Amazigh Amrane, Hugo Bazille, Uli Fahrenberg, and Marie Fortin.
\newblock Logic and languages of higher-dimensional automata.
\newblock In {\em DLT}, 2024.
\newblock Accepted.

\bibitem{DBLP:conf/ictac/AmraneBFZ23}
Amazigh Amrane, Hugo Bazille, Uli Fahrenberg, and Krzysztof Ziemia{\'n}ski.
\newblock Closure and decision properties for higher-dimensional automata.
\newblock In Erika {\'{A}}brah{\'{a}}m, Clemens Dubslaff, and Silvia
  Lizeth~Tapia Tarifa, editors, {\em \mbox{ICTAC}}, volume 14446 of {\em
  {Lecture Notes in Computer Science}}, pages 295--312. {Springer}, 2023.

\bibitem{Bednarczyk87-async}
Marek~A. Bednarczyk.
\newblock {\em Categories of Asynchronous Systems}.
\newblock PhD thesis, University of Sussex, UK, 1987.

\bibitem{journals/mathgrund/Buchi60}
J.~Richard B{\"u}chi.
\newblock Weak second-order arithmetic and finite automata.
\newblock {\em Zeitschrift f{\"u}r Mathematische Logik und Grundlagen der
  Mathematik}, 6(1-6):66--92, 1960.

\bibitem{DBLP:conf/adhs/Fahrenberg18}
Uli Fahrenberg.
\newblock Higher-dimensional timed automata.
\newblock In Alessandro Abate, Antoine Girard, and Maurice Heemels, editors,
  {\em ADHS}, volume~51 of {\em IFAC-PapersOnLine}, pages 109--114. Elsevier,
  2018.

\bibitem{DBLP:journals/lites/Fahrenberg22}
Uli Fahrenberg.
\newblock Higher-dimensional timed and hybrid automata.
\newblock {\em {Leibniz Transactions on Embedded Systems}}, 8(2):03:1--03:16,
  2022.

\bibitem{DBLP:journals/mscs/FahrenbergJSZ21}
Uli Fahrenberg, Christian Johansen, Georg Struth, and Krzysztof Ziemia{\'n}ski.
\newblock Languages of higher-dimensional automata.
\newblock {\em {Mathematical Structures in Computer Science}}, 31(5):575--613,
  2021.
\newblock \url{https://arxiv.org/abs/2103.07557}.

\bibitem{DBLP:conf/concur/FahrenbergJSZ22}
Uli Fahrenberg, Christian Johansen, Georg Struth, and Krzysztof Ziemia\'{n}ski.
\newblock A {Kleene} theorem for higher-dimensional automata.
\newblock In Bartek Klin, S{\l}awomir Lasota, and Anca Muscholl, editors, {\em
  CONCUR}, volume 243 of {\em {Leibniz International Proceedings in
  Informatics}}, pages 29:1--29:18. Schloss Dagstuhl - Leibniz-Zentrum
  f{\"{u}}r Informatik, 2022.
\newblock \url{https://doi.org/10.4230/LIPIcs.CONCUR.2022.29}.

\bibitem{DBLP:journals/iandc/FahrenbergJSZ22}
Uli Fahrenberg, Christian Johansen, Georg Struth, and Krzysztof Ziemia{\'n}ski.
\newblock Posets with interfaces as a model for concurrency.
\newblock {\em {Information and Computation}}, 285(B):104914, 2022.
\newblock \url{https://arxiv.org/abs/2106.10895}.

\bibitem{journals/alguni/FahrenbergJSZ23}
Uli Fahrenberg, Christian Johansen, Georg Struth, and Krzysztof Ziemia{\'n}ski.
\newblock Catoids and modal convolution algebras.
\newblock {\em Algebra Universalis}, 84(10), 2023.

\bibitem{DBLP:journals/corr/abs-2210-08298}
Uli Fahrenberg and Krzysztof Ziemia{\'n}ski.
\newblock A {Myhill}-{Nerode} theorem for higher-dimensional automata.
\newblock In Lu{\'{\i}}s Gomes and Robert Lorenz, editors, {\em {PETRI}
  {NETS}}, volume 13929 of {\em {Lecture Notes in Computer Science}}, pages
  167--188. {Springer}, 2023.
\newblock \url{https://arxiv.org/abs/2210.08298}.

\bibitem{DBLP:conf/concur/FanchonM02}
Jean Fanchon and R{\'{e}}mi Morin.
\newblock Regular sets of pomsets with autoconcurrency.
\newblock In Lubo{\v s} Brim, Petr Jan{\v c}ar, Mojm{\'{\i}}r K{\v
  r}et{\'{\i}}nsk{\'{y}}, and Anton{\'{\i}}n Ku{\v c}era, editors, {\em
  CONCUR}, volume 2421 of {\em {Lecture Notes in Computer Science}}, pages
  402--417. {Springer}, 2002.

\bibitem{book/Fishburn85}
Peter~C. Fishburn.
\newblock {\em Interval Orders and Interval Graphs: A Study of Partially
  Ordered Sets}.
\newblock Wiley, 1985.

\bibitem{Goubault02-cmcim}
Eric Goubault.
\newblock Labelled cubical sets and asynchronous transition systems: an
  adjunction.
\newblock In {\em CMCIM}, 2002.
\newblock \url{http://www.lix.polytechnique.fr/~goubault/papers/cmcim02.ps.gz}.

\bibitem{DBLP:journals/fuin/Grabowski81}
Jan Grabowski.
\newblock On partial languages.
\newblock {\em {Fundamentae Informatica}}, 4(2):427, 1981.

\bibitem{DBLP:journals/fuin/JanickiK19}
Ryszard Janicki and Maciej Koutny.
\newblock Operational semantics, interval orders and sequences of antichains.
\newblock {\em {Fundamentae Informatica}}, 169(1-2):31--55, 2019.

\bibitem{DBLP:journals/jlp/JipsenM16}
Peter Jipsen and M.~Andrew Moshier.
\newblock Concurrent {K}leene algebra with tests and branching automata.
\newblock {\em {Journal of Logic and Algebraic Methods in Programming}},
  85(4):637--652, 2016.

\bibitem{DBLP:journals/tcs/LodayaW00}
Kamal Lodaya and Pascal Weil.
\newblock Series-parallel languages and the bounded-width property.
\newblock {\em {Theoretical Computer Science}}, 237(1-2):347--380, 2000.

\bibitem{Pratt91-geometry}
Vaughan~R. Pratt.
\newblock Modeling concurrency with geometry.
\newblock In {\em POPL}, pages 311--322, New York City, 1991. ACM Press.

\bibitem{DBLP:journals/cj/Shields85}
Mike~W. Shields.
\newblock Concurrent machines.
\newblock {\em The Computer Journal}, 28(5):449--465, 1985.

\bibitem{Glabbeek91-hda}
Rob~J. van Glabbeek.
\newblock Bisimulations for higher dimensional automata.
\newblock Email message, June 1991.
\newblock \url{http://theory.stanford.edu/~rvg/hda}.

\bibitem{DBLP:journals/tcs/Glabbeek06}
Rob~J. van Glabbeek.
\newblock \afirst{On} the expressiveness of higher dimensional automata.
\newblock {\em {Theoretical Computer Science}}, 356(3):265--290, 2006.
\newblock See also~\cite{DBLP:journals/tcs/Glabbeek06a}.

\bibitem{DBLP:journals/tcs/Glabbeek06a}
Rob~J. van Glabbeek.
\newblock Erratum to ``{O}n the expressiveness of higher dimensional
  automata''.
\newblock {\em {Theoretical Computer Science}}, 368(1-2):168--194, 2006.

\bibitem{Wiener14}
Norbert Wiener.
\newblock A contribution to the theory of relative position.
\newblock {\em Proceedings of the Cambridge Philosophical Society},
  17:441--449, 1914.

\end{thebibliography}

\end{document}